\documentclass[reprint,showpacs,twocolumn,superscriptaddress,aps]{revtex4-2}
\usepackage{graphicx}
\usepackage{amsthm,amsfonts,amstext,amssymb,amsmath}
\usepackage{latexsym}
\usepackage{braket}
\usepackage{bbold}
\usepackage{bm}
\usepackage{placeins}
\usepackage{enumerate}
\usepackage{standalone}
\usepackage{float}
\usepackage{mathtools}
\usepackage{xcolor}
\usepackage[colorlinks=true,
  linkcolor=black,
  citecolor=blue,
  urlcolor =blue]{hyperref}

\newtheorem{Observation}{Observation}
\newcommand{\tr}{\mathrm{tr}}

\begin{document}
\title{Entanglement phase diagrams from partial transpose moments}
\author{Jose Carrasco}
\thanks{These authors contributed equally.}
\affiliation{Institute for Theoretical Physics, University of Innsbruck, Innsbruck A-6020, Austria}
\affiliation{Dahlem Center for Complex Quantum Systems, Freie Universität Berlin, 14195 Berlin, Germany}

\author{Matteo Votto}
\thanks{These authors contributed equally.}
\affiliation{Univ. Grenoble Alpes, CNRS, LPMMC, 38000 Grenoble, France}

\author{Vittorio Vitale}
\affiliation{SISSA, via Bonomea 265, 34136 Trieste, Italy}
\affiliation{International Centre for Theoretical Physics (ICTP), Strada Costiera 11, 34151 Trieste, Italy}

\author{Christian Kokail}
\affiliation{Institute for Theoretical Physics, University of Innsbruck, Innsbruck A-6020, Austria}	
\affiliation{Institute for Quantum Optics and Quantum Information of the Austrian Academy of Sciences,  Innsbruck A-6020, Austria}

\author{Antoine Neven}
\affiliation{Institute for Theoretical Physics, University of Innsbruck, Innsbruck A-6020, Austria}	

\author{Peter Zoller}
\affiliation{Institute for Theoretical Physics, University of Innsbruck, Innsbruck A-6020, Austria}	
\affiliation{Institute for Quantum Optics and Quantum Information of the Austrian Academy of Sciences,  Innsbruck A-6020, Austria}

\author{Beno\^it Vermersch}
\affiliation{Institute for Theoretical Physics, University of Innsbruck, Innsbruck A-6020, Austria}	
\affiliation{Institute for Quantum Optics and Quantum Information of the Austrian Academy of Sciences,  Innsbruck A-6020, Austria}
\affiliation{Univ.  Grenoble Alpes, CNRS, LPMMC, 38000 Grenoble, France}

\author{Barbara Kraus}
\affiliation{Institute for Theoretical Physics, University of Innsbruck, Innsbruck A-6020, Austria}

\begin{abstract}
     We present experimentally and numerically accessible quantities that can be used to differentiate among various families of random entangled states. To this end, we analyze the entanglement properties of bipartite reduced states of a tripartite pure state. 
     We introduce a ratio of  simple polynomials of low-order moments of the partially transposed reduced density matrix and show that this ratio takes well-defined values in the thermodynamic limit for various families of entangled states. This allows to sharply distinguish entanglement  phases, in a way that can be understood from a quantum information perspective based on the spectrum of the partially transposed density matrix.
     We analyze in particular the entanglement phase diagram of Haar random states, states resulting form the evolution of chaotic Hamiltonians, stabilizer states, which are outputs of Clifford circuits,  Matrix Product States, and fermionic Gaussian states. We show that for Haar random states the resulting phase diagram resembles the one obtained via the negativity and that for all the cases mentioned above a very distinctive behaviour is observed. 
      Our results can be used to experimentally test necessary conditions for different types of mixed-state randomness, in quantum states formed in quantum computers and programmable quantum simulators.
\end{abstract}

\maketitle

\section{Introduction} 

Many-body quantum states and quantum phases, as prepared today in equilibrium or non-equilibrium dynamics on experimental quantum devices \cite{Altman2021qsim}, can be characterized and classified according to their entanglement properties. Recent examples of interest include the study of `entanglement phases' appearing in ensembles of Haar-random induced mixed states \cite{shapourian2021entanglement, Au11, Au12, Au13}, and the measurement-driven `entanglement transition' in hybrid quantum circuits \cite{fisher2022random, Potter_2022}, where a volume to area-law `entanglement transition' is observed as a function of the measurement rate. In a broader context, this leads to the challenge of identifying observables allowing to distinguish entanglement phases, playing essentially the role of `entanglement order parameters' in entanglement phase diagrams, and the development of experimentally accessible protocols to measure these quantities on present quantum devices. 

In the present work we address this problem in context of random many-body quantum states, with focus on entanglement properties of bipartite reduced states of a tripartite pure state. These quantum states include Haar-random states resulting from evolution of chaotic Hamiltonians, stabilizer states as outputs of Clifford circuits, Matrix Product States, and fermionic Gaussian states. As observables, which distinguish sharply between different entanglement phases, we introduce the ratio of  simple polynomials of low-order moments of the partially transposed reduced density matrix \cite{Pe96, Horodecki2003, calabresenegativity2012, elben2020mixed, neven2021symmetry}, and we show that this ratio takes on well-defined values in the thermodynamic limit for various families of entangled states. Besides providing a convenient tool in numerical studies \cite{wybo2020mbl, wu2020montecarlofiniteT, feldman2022entanglement}, such observables are experimentally accessible, in particular within the randomized measurement toolbox \cite{elben2020mixed,zhou2020single,elben2022therandomized}, paving the way to an experimental exploration of entanglement phases and phase diagrams. 

The outline of the remainder of the paper is the following. In Sec.~\ref{sec:summary} we present a summary of our findings. In particular, we introduce the quantity $r_2$ and its generalizations which are central in our studies of entanglement phases. Despite the simplicity of these quantities, which are ratios of polynomials of moments of the partially transposed bipartite state, we show that they capture the entanglement structure of Haar-random induced mixed states in Sec.~\ref{sec:haar} and extend the analysis to  pseudo Haar-random induced mixed states in Sec.~\ref{sec:noise}. In Sec.~\ref{sec:stabilizers}, Sec.~\ref{sec:rydberg} and  Sec.~\ref{sec:MPS-FF} we show that $r_2$ displays a very different behaviour for random, but not Haar-random states, that it allows to observe the transition to Haar random states, and that it is capable of identifying other quantum resources. We comment on experimental possibilities to access $r_2$ in  Sec.~\ref{sec:measurements}. Finally, in Sec.~\ref{sec:detection}, we analyze the power of $r_2$ for detecting entanglement. The manuscripts ends with a short summary in Sec.~\ref{sec:conclusion}.

\begin{figure*}[t]
    \includegraphics[width=0.99\textwidth]{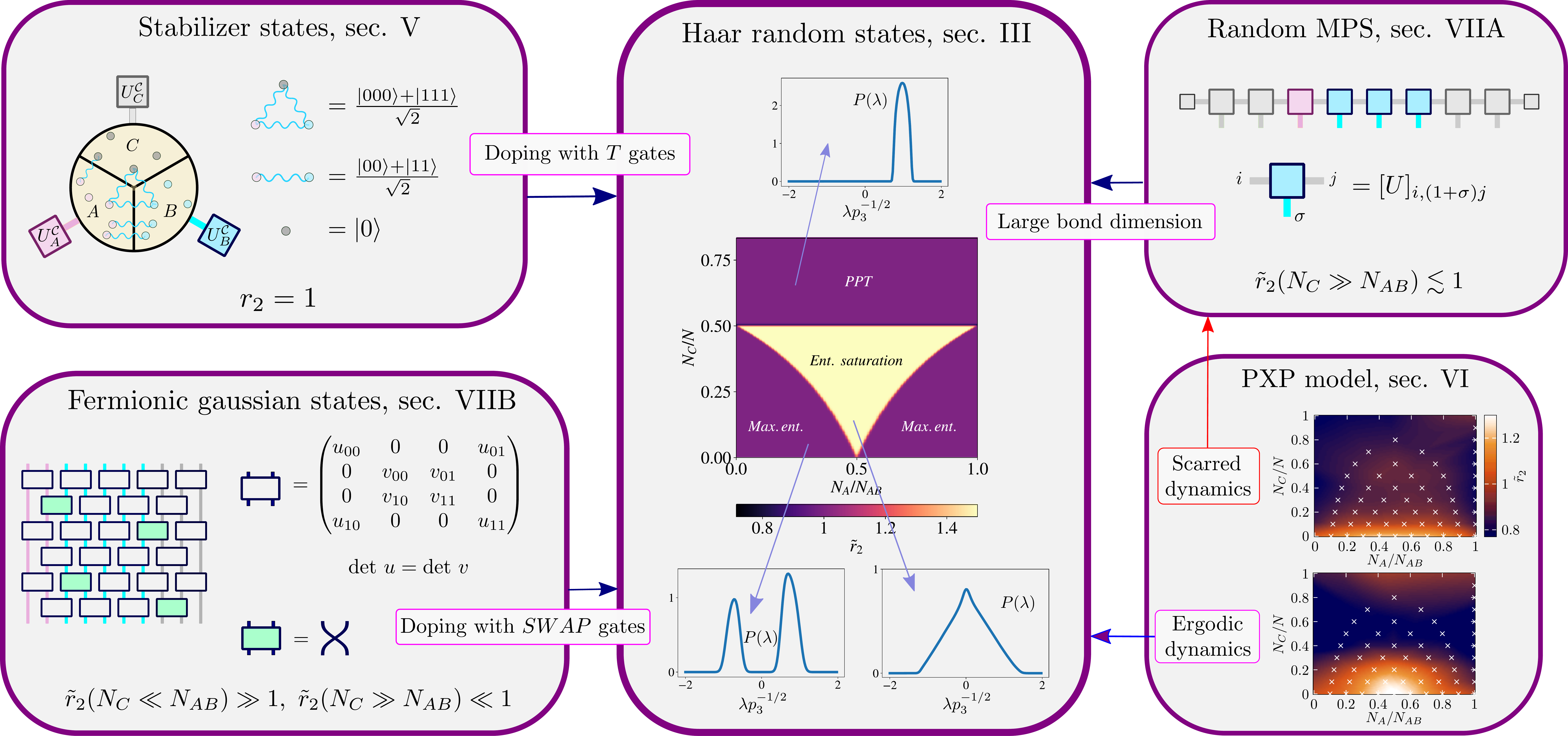}
    \caption{
    Summary of our results and structure of our paper.
    Middle Panel: In Sec.~\ref{sec:haar}, we show that $\tilde r_2$ takes well-defined values in the three different entangled phases of Haar random states ~\cite{shapourian2021entanglement}. 
   As shown in Ref.~\cite{shapourian2021entanglement}, in each of the entanglement phases, the probability distribution of the spectrum $P(\lambda)$ of the partial transpose operator $\rho^{\Gamma}$ has a characteristic shape, as depicted in the figure. 
   As shown in Sec.~\ref{sec:spectrum_subsec}, 
   the behavior of $r_2$ can be related to the shape of  $P(\lambda)$. In particular, when $r_2=1$, we can understand the positions of the peaks.
    Side Panels (other classes of states): For random, but not Haar random states a very different behaviour of $r_2$ is observed. For stabilizers states [Sec.~\ref{sec:stabilizers}], using the decomposition from Eq.~\eqref{eq:GHZ} (shown here as cartoon, see also Ref.~\cite{bravyi2006ghz}), we obtain $r_2 =1$. For fermionic Gaussian states [Sec.~\ref{sec:MPS-FF}], we obtain exponentially large (or small) values of $\tilde{r}_2$.
    Whereas the evolution under Clifford gates (Matchgates) can be simulated classically efficiently, the addition
of resourceful $T$-gates (SWAPs) makes the computation universal, respectively.
We show in Sec.~\ref{sec:MPS-FF} how $\tilde {r}_2$ changes from the typical values for fermionic Gaussian states towards the value obtained for  Haar random states, as the number of SWAP gates is increased.
    For states generated via dynamics from the Rydberg PXP model [Sec.~\ref{sec:rydberg}], the values of $\tilde{r}_2$ resemble the ones obtained for random Matrix-Product-States (MPS) if the dynamics is `scarred', and the ones of Haar random states if the dynamics is ergodic.
    For random MPS, we observe a region with $\tilde{r}_2 <1$.}
    \label{fig:summaryB}
\end{figure*}

\section{Preliminaries and summary of results}
 \label{sec:summary}

 We first introduce our notation and recall previous results regarding the bipartite entanglement content of random states. Then, we summarize our main findings, which are also illustrated in Figure~\ref{fig:summaryB}.

\subsection{Notation}
Throughout this work we consider a tripartite system in a pure state
$\ket\psi\in\mathcal H_{A}\otimes\mathcal H_{B} \otimes\mathcal H_C$ .
One can think of $A$, $B$, $C$ consisting of $N_A$, $N_B$, $N_C$ qubits, respectively. The associated Hilbert spaces are of dimension $L_X=2^{N_X}$, with $X=A,B,C$, and the dimension of the total Hilbert space is $L=L_{A}L_{B}L_C=2^N$. We will also use the notation  $N_{AB}=N_{A}+N_{B}$ and  $L_{AB}=L_{A}L_{B}=2^{N_{AB}}$. We analyze the bipartite mixed state entanglement properties of reduced states $\rho=\tr_C\ket\psi\bra\psi$. Haar-random {\em induced} mixed states are states where the tripartite pure state is Haar-random. Their entanglement properties have been studied in~\cite{Au11,Au12,Au13}.
The partial transpose (PT) of a density operator $\rho$, \begin{equation}
  \rho^\Gamma=(1_A\otimes T_{B})\rho,
\end{equation}
where $T$ denotes the transposition, plays a central role in  bipartite entanglement theory. Separable (non--entangled) state can be written as a convex combination of local density operators, which implies that their PT is always positive semidefinite~\cite{Pe96,HHH96}. That is a state with a non--positive PT (NPT state) is entangled. The converse is not true as there exists PPT entangled states, i.e. entangled states which have a positive semidefinite PT.
The entanglement monotone related to the PPT-condition is the logarithmic negativity~\cite{VW02,Pl05} ${\mathcal E}(\rho)$, given by
\begin{equation}\label{eq:LN}
{\mathcal E}(\rho)=\log\|\rho^\Gamma\|_1=\log\sum_i|\lambda_i|\,,
\end{equation}
where the sum is over eigenvalues $\{\lambda_i\}$ of the partial transpose (PT) operator. Our study is based on partial transpose moments

\begin{equation}
  p_n=\tr(\rho^\Gamma)^n
\end{equation}
for $n=1,2,\ldots,\dim({\mathcal H}_{AB})$. As shown in  Refs.~\cite{elben2020mixed,neven2021symmetry,yu2021optimal}, these moments can be utilized to derive necessary and sufficient conditions (in terms of inequalities) of NPT-entanglement.

\subsection{Preliminaries}

Before summarizing our results we review here some previous findings which are relevant for our work. As mentioned before, the entanglement content of random quantum states is central in a variety of scenarios including the  study of quantum many-body chaotic systems~\cite{dalessio2016ETH,nahum2017quantum}, the certification of quantum
computers~\cite{arute2019quantum}, properties of random quantum circuits~\cite{lashkari2013scrambling, hosur2016chaoschannels, nahum2018operator}, and the description of black-holes~\cite{hayden2007BH, penington2020replica, piroli2020BH}.
Generic properties of quantum chaotic systems can be derived using random matrix theory~\cite{dalessio2016ETH, bohigas1984chaosuniversality, GUHR1998rmtchaos, kos2018RMTmbchaos, chen2018chaosuniversality}. A seminal result in this context is due to Page~\cite{Page1993}, who showed that the averaged bipartite entanglement entropy of Haar-random pure states obeys a volume-law. An extension of this result to Haar-random induced mixed states has been achieved in ~\cite{shapourian2021entanglement, Au11,Au12,Au13}. In particular the bipartite entanglement properties of Haar random induced states have been analyzed with the logarithmic negativity,  ${\mathcal E}(\rho)$,  for different partitions sizes $(N_A,N_B,N_C)$~\cite{shapourian2021entanglement} . The scaling behavior of the expected value of ${\mathcal E}(\rho)$, ${\mathbb E}[{\mathcal E}(\rho)]$, determines a characteristic phase diagram for random states (see Fig. 2 of Ref.~\cite{shapourian2021entanglement}, which shows a similar phase diagram as the one presented in the center of Fig.~\ref{fig:summaryB}). Depending on the partition sizes, the system can be in three different ``entanglement phases''.
Roughly speaking, the phase diagram presented in Ref.~\cite{shapourian2021entanglement} shows the following three different phases: 
(Phase~I)~For $N_C$ larger than $N_{AB}$, ${\mathbb E}[{\mathcal E}(\rho)]$ vanishes and thus, on average, $\rho$ is PPT. For obvious reasons, this phase is called the PPT phase. (Phase II)~For $N_C$ smaller than $N_{AB}$ and $N_{A}<N_{B}$ (with $N_{A}\not\simeq N_{B}$), the subsystem $A$ is not entangled with the subsystem $C$ but is maximally entangled with the subsystem $B$ and ${\mathbb E}[{\mathcal E}(\rho)]\sim N_{A}$. Obviously, similar results hold for $N_{B}<N_{A}$. This phase is called the maximally entangled (ME) phase. (Phase III) For $N_C$ smaller than $N_{AB}$ and $N_{A}\simeq N_{B}$, subsystems $A$ and $B$ are not maximally entangled and ${\mathbb E}[{\mathcal E}(\rho)]\sim (N_{AB}-N_C)/2$. This phase is called the Entanglement Saturation (ES) phase. Whereas the PPT and ME phases are expected (also due to the results on random pure states~\cite{Page1993}), Ref.~\cite{shapourian2021entanglement} showed the existence of the ES phase for mixed bipartite states. As we recall below, those results can be obtained from random matrix theory in the limit of high--dimensional Hilbert spaces. In Ref.~\cite{shapourian2021entanglement} it has also been shown that the probability distribution $P(\lambda)$ of the spectrum of $\rho^\Gamma$ shows a distinctive behavior in all three phases.
In the PPT and ES phases, it follows a semicircle distribution (with support only on the positive domain in the PPT phase).
In contrast to that, in the ME phase, the spectrum is bimodal, following two separate Marčenko-Pastur laws in the positive and negative domain (see also middle panel of Fig. 1).

\subsection{Summary of results}

We identify the following ratios as central quantities in the study of entanglement of random states: 
\begin{equation}\label{eq:r2}
r_2=\frac{p_2p_3}{p_4}\,,\quad\tilde r_2=\frac{{\mathbb E}[p_2]{\mathbb E}[p_3]}{{\mathbb E}[p_4]}
\end{equation}
and higher order generalizations of the form $r_n=p_np_{n+1}/(p_{n+2}p_{n-1})$ and ${\mathbb E}[p_n]{\mathbb E}[p_{n+1}]/({\mathbb E}[p_{n+2}]{\mathbb E}[p_{n-1}])$, respectively. Here, ${\mathbb E}$ denotes the ensemble average. We show that the quantity ${\mathbb E}[r_2]$ can be approximated by $\tilde r_2$ for Haar random states and used to detect and classify various types of entanglement phases. 

These definitions are inspired by the study of the entanglement structure of Haar-random induced mixed states presented in Ref.~\cite{shapourian2021entanglement} (see below). However, 
in contrast to the negativity, these  quantities only involve few moments of the PT, which makes them experimentally and also numerically more accessible than the negativity. $r_2$ and its generalizations do not only allow us to reproduce the phase diagram of Haar random induced states, but are capable of identifying various entanglement phases of different kind of random states.We show that they are capable of differentiating between Haar-random states and other sets of states. This is highly relevant within quantum computation and beyond as they can be used to confirm the behavior of random states or to show that the system of interest does not generate (enough) randomness. Moreover, other quantum resources, such as ``non-stabilizerness''\cite{PiroliNonstabilizerness} can be detected with these quantities.
Our main findings are summarized in Figure 1, which we explain now in more detail. 

\subsubsection{Haar random induced states (middle panel of Fig.~\ref{fig:summaryB})} As our first main result we show that $\tilde r_2$, which depends only on up to the fourth PT moment captures the entanglement structure of Haar-random induced mixed states.
In particular, $\tilde r_2$ takes quantized values $1,3/2,1$ for the different entanglement phases of Haar-random induced mixed states, identifying sharply the phase diagram shown in Fig. 2 of \cite{shapourian2021entanglement} (see middle panel of Figs.~\ref{fig:summaryB} and~\ref{fig:phase_diagram_r_2}). Moreover, we can understand these properties based on the universal properties of the negativity spectrum $\{\lambda_i\}$, the eigenvalues of $\rho^\Gamma$,  which are reflected in the value of $r_2$.
In particular, we show numerically that the typical spectrum in the ME and PPT phases displays one or two peaks around $\pm \sqrt{p_3}$.
As we will show below, having such a spectrum necessarily implies the property $r_2 =1$. Let us mention here that the two phases, for which $r_2$ is 1 can be easily differentiated using another quantity, which involves only the first two (non--trivial) moments (See Sec.~\ref{sec:p3neg} and App.~\ref{app:p3neg}). 

Despite the fact, that the PT moments are strongly related to the entanglement properties of a mixed state, it is rather surprising that the behaviour of ratios of these quantities show such a strong agreement with the one of the much more involved negativity, which is an entanglement monotone. However, as will become clear below, the analytic expressions derived in Ref.~\cite{shapourian2021entanglement} for the negativity, which involves all PT moments, in the thermodynamic limit motivated us to introduce and study the quantities presented in Eq. (\ref{eq:r2}), which are functions of only a few PT moments. 

\subsubsection{Random, but not Haar-random states
(side panels of Fig.~\ref{fig:summaryB})}

As a second main result we show that $r_2$ displays a very different behaviour for random, but not Haar-random states and that it reveals also other resources in quantum information. Furthermore, we show that the transition from randomly chosen states from a set of classically simulable 
states to Haar random states can be observed with the help of $r_2$. 
In order to demonstrate that, we consider various sets of physically relevant states, as illustrated in Fig.~\ref{fig:summaryB}.

{\it From classically simulable states to random states}:

Despite the simplicity of $r_2$, it exhibits a completely different behaviour for families of states, which, if viewed as output states of a quantum computation or simulation, can be simulated classically efficiently. Instances of such sets of states, which we investigate here, are (i) stabilizer states, which are generated by Clifford circuits acting on computational basis states, (ii) random fermionic Gaussian states, which result form random Matchgate circuits, and (iii) random matrix-product states (rMPS). The behaviour of $r_2$ as a function of the system size is very different for these sets of states compared to Haar-random induced states. To give an example,
$r_2$ takes a fixed value $1$ for any stabilizer state. Hence, any different value shows that the state cannot be generated by a Clifford circuit (acting on a computational basis state), which is classically efficiently simulable. In this sense, $r_2$ can be viewed as an indicator of ``magic''~\cite{haferkamp2020quantum}. It is well known that the inclusion of additional resources, such as the T-gate for Clifford circuits or the SWAP-gate for Matchgate computation (fermionic Gaussian states), elevates the computational power of a computation from classically efficiently simulable to a universal quantum computation. Interestingly, this transition can be made apparent by studying $r_2$. We show this explicitly for Fermionic Gaussian states in Sec.~\ref{sec:fermionic}. Similarly we show that random MPS states with low bond dimenions show a distinctive behavior of $r_2$ compared to random states. However, the ``phase diagram" resembles the one of Haar random states if the bond dimension increases. 

{\it Chaotic and non-chaotic evolutions}:
We show, with the help of an example, that the behavior of $r_2$ is different for  states generated by chaotic or non-chaotic Hamiltonian evolutions. To illustrate this, we discuss below an experimentally relevant situation based on the spin constrained dynamics of Rydberg atoms. Therefore, $r_2$ can serve as an indication of entanglement and/or Haar-``randomness'' that can be useful both numerically and experimentally.

\subsubsection{Measuring $r_2$ and Entanglement detection via $r_2$ and }

In contrast to the negativity, PT moments can be measured in an experiment using either randomized measurements, or quantum circuits using physical copies.
The PT moments of Haar random states being exponentially small in system size, and the quantity $r_2$ being a ratio of such small numbers, one needs however a high accuracy in estimating each PT moment.
We discuss these requirements in terms of number of measuremnts to overcome statistical errors in Sec.~\ref{sec:measurements} and App.~\ref{app:measuring}.

As a final result we study the capability of $r_2$ in detecting entanglement contained in single states. That is, we show that $r_2$, if evaluated on a single state can be used to detect entanglement and analyze its relation to other means of entanglement detection.

\section{$r_2$ reveals the phase diagram of Haar random states}\label{sec:haar}
In this section we focus on Haar-random induced mixed states. As mentioned before, 
the phase diagram of Ref.~\cite{shapourian2021entanglement} (which is similar to the one presented in Fig.~\ref{fig:summaryB}) is obtained by considering the logarithmic negativity $\mathcal{E}(\rho)$
as a function of the subsystem sizes. We first recall some details of the results obtained in 
Ref.~\cite{shapourian2021entanglement}, which also will  motivate the introduction of the quantities $r_2$ and $\tilde r_2$, and more generally $r_n$ and $\tilde r_n$. We show that $\tilde r_2$ takes well-defined quantized values in each of the phases identified originally by the behaviour of the negativity in Ref.~\cite{shapourian2021entanglement}. To relate the quantities introduced here, we finally provide numerical evidence (see Fig.~\ref{fig:phase_diagram_r_2}) that, for the ensemble of Haar-random induced mixed states, the average ${\mathbb E}[r_2]$ can be well approximated by $\tilde r_2$. 

\subsection{The ratios $\tilde r_n$ for Haar random states}

\begin{figure}[ht]
    \centering
    
    \includegraphics[width = 0.99\linewidth]{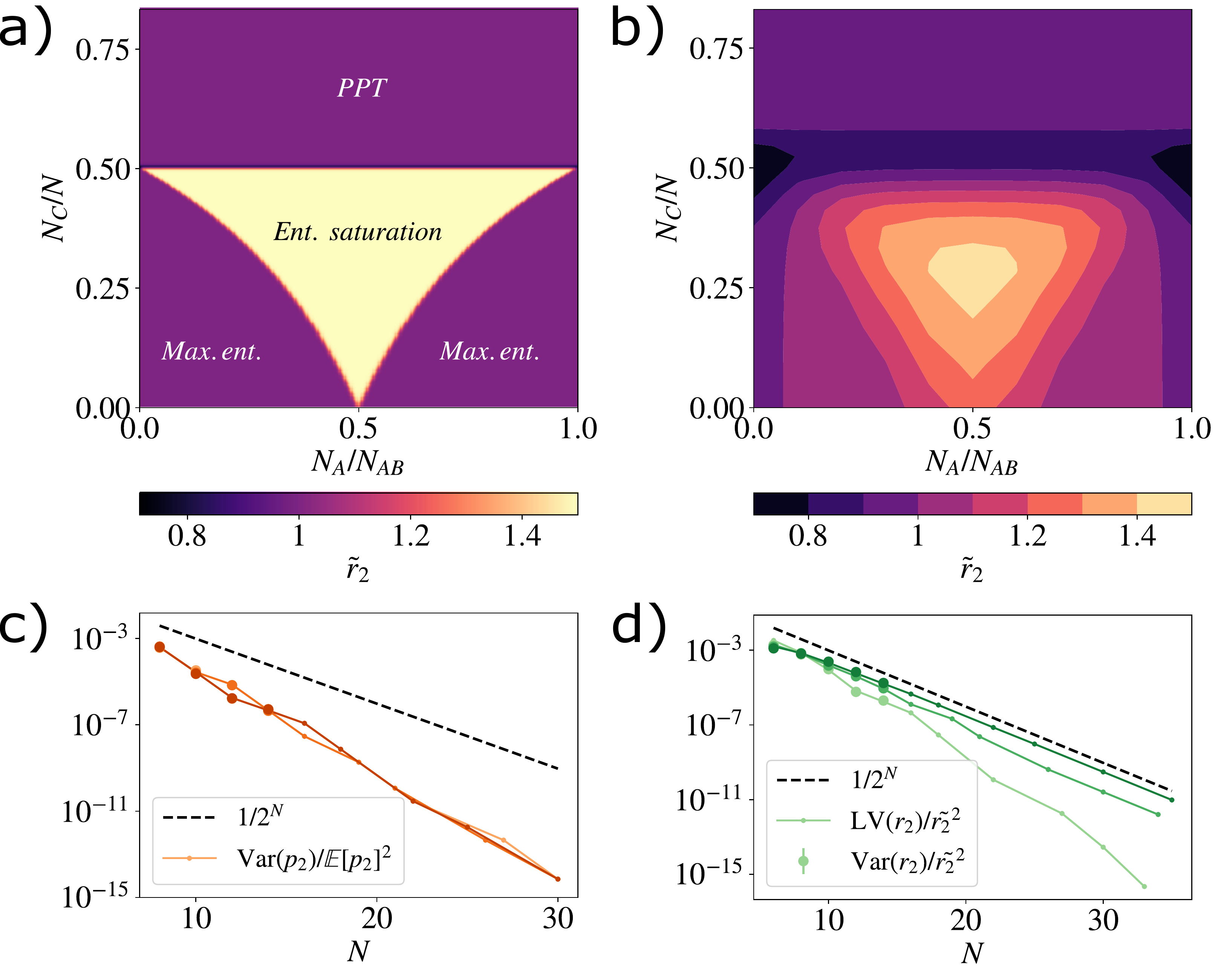} 
    
    \caption{a), b) Phase diagram for the ratio $\tilde r_2$ of Haar random states with fixed $N_{AB} = 256$ a) and $N_{AB} = 10$ b). The PT moments are computed using Eq.~\eqref{eq:ExactExpression}. It is possible to see that already at experimentally relevant sizes we recover features of the phase diagram in~\cite{shapourian2021entanglement}. Approaching the thermodynamic limit, we observe the emergence of the asymptotic values of $\tilde r_2 = 1$ for the ME and PPT phases, and of $\tilde r_2 = 3/2$ for the ES. c) Variance of $p_2$ as a function of $N$.
    The circles represent the variance for an ensemble of $100$ Haar random states, computed numerically. The dotted lines correspond to the analytical expressions obtained from random matrix theory. We choose $N_A=N_B$, with $N_{AB}\approx 2N/5$ (orange), and $N_{AB}\approx 4N/5$ (red) [We choose the closest integer for each $N$].
    d) Variance (circles) and linearized variance (dotted line) of $r_2$, showing an exponential reduction of statistical fluctuations with increasing $N$.
    Light green corresponds to the PPT phase, where for each $N$, we choose $N_A=N_B\approx N/5$.
    Green: we choose $N_A\approx N/5$, and $N_B\approx 3N/5$ (ME phase). 
    Dark green: We set $N_A=N_B\approx 2N/5$ (ES phase)
    }
    \label{fig:phase_diagram_r_2}
\end{figure}

Given the clear distinction between the three phases in terms of the logarithmic negativity (which contains information about PT moments of any possible order), it is interesting, from a more practical point of view, to see if the phases can also be resolved using only a few low-order PT moments. Low-order PT moments have the advantage (compared to the negativity), that they can be easily determined numerically and can be estimated experimentally, using randomized measurements~\cite{zhou2020single,elben2020mixed} or interferometric ``swap-tests''~\cite{Horodecki2003,carteret2005noiseless,graymachine2018}. Here we show that the phase diagram can be observed utilizing only low order PT moments.
More precisely, the expectation value of Haar-random induced mixed states ${\mathbb E}[r_2]$ is well approximated by $\tilde r_2$ (see Eq.~\eqref{eq:tilde r_n} below) and takes well-defined quantized values in each of these phases (see the middle panel of Fig.~\ref{fig:summaryB} and Fig.~\ref{fig:phase_diagram_r_2}). In particular, $\tilde r_2$ can be interpreted as an ``order parameter'' for the entanglement saturation phase, as it takes a fixed value $3/2$ only in this phase. 

Let us first recall a general method to compute PT moments of Haar-random induced mixed states. In general, the PT moments $p_n=\tr[(\rho^\Gamma)^n]$ of any bipartite state $\rho$ can be expressed as the following expectation value: 
\begin{equation}
    p_n ={\rm tr}\left(\Pi_A(\sigma_+)\otimes\Pi_B(\sigma_-)\rho_{AB}^{\otimes n}\right)\,.
\label{eq:SWAP}
\end{equation}
In the previous equation, we introduced the permutation operations $\Pi_A(\sigma_+)$, $\Pi_B(\sigma_-)$.
Let $S_n$ be the symmetric group over $n$ elements. For any permutation $\tau\in S_n$ and any subsystem $X=A,B,AB,ABC,\ldots$, we write $\Pi_X(\tau)$ for the following operator acting on $n$ copies of subsystem $X$,
\[
\Pi_X(\tau)\ket{\phi_1}_X\otimes\cdots\otimes\ket{\phi_n}_X=\ket{\phi_{\tau(1)}}_X\otimes\cdots\otimes\ket{\phi_{\tau(n)}}_X\,.
\]
Finally, $\sigma_{\pm}$ are two special permutations defined as $\sigma_\pm(k)=(k\pm 1) \mod n$, i.e.  cyclic (and anti-cyclic) permutations.

We are interested in the expectation value ${\mathbb E}[p_n]$ over Haar-random states $\ket{\psi}$ with $\rho_{AB}={\rm tr}_C\ket{\psi}\bra{\psi}$. In other words,
\begin{multline}\label{eq:SWAPbis}
{\mathbb E}[p_n]={\mathbb E}\Big[{\rm tr}\Big([\Pi_A(\sigma_+)\otimes\Pi_B(\sigma_-)\otimes1_C](\ket{\psi}\bra{\psi})^{\otimes n}\Big)\Big]\\
={\rm tr}\Big([\Pi_A(\sigma_+)\otimes\Pi_B(\sigma_-)\otimes1_C]{\mathbb E}\big[(\ket{\psi}\bra{\psi})^{\otimes n}\big]\Big)\,.
\end{multline}
Hence the problem can be reduced to compute the average of the linear operator $(\ket{\psi}\bra{\psi})^{\otimes n}$. It is well--known that for Haar-random states $\ket\psi$, the previous average equals the trace-one projector onto the symmetric subspace. This average can be obtained (see~\cite{Collins2006RMT,Elben2019correlations} and also Appendix~\ref{app:weingarten}) via the twirling formula~\eqref{eq:app-twirl}. Inserting this expression  in Eq.~\eqref{eq:SWAPbis} leads to 

\begin{equation}
    \mathbb{E}[p_n] = \frac{\sum_{\tau\in S_n} \text{tr}\Big[(\Pi_A(\sigma_+)\otimes\Pi_B(\sigma_-)\otimes 1_C)\Pi_{ABC}(\tau)\Big]}{\sum_{\tau\in S_n} \text{tr}\Big[\Pi_{ABC}(\tau)\Big]}\,.
    \label{eq:RMTexpression}
\end{equation}

As noted in Ref.~\cite{shapourian2021entanglement}, when working in the thermodynamic limit it is also possible to develop a diagrammatic approach to obtain a leading order expression for the average of PT moments of Haar random states:
\begin{equation}
    \mathbb{E}[p_n]\simeq\frac{1}{(L_A L_BL_C)^n} \sum_{\tau\in S_n}L_C^{c(\tau)} L_{A}^{c(\sigma_+\circ \tau)} L_{B}^{c(\sigma_-\circ \tau)},
    \label{eq:ExactExpression}
\end{equation}
where for any permutation $\tau\in S_n$, $c(\tau)$ is the number of cycles in $\tau$, counting also the cycles of length one. Using diagrammatic rules~\cite{shapourian2021entanglement}, one can obtain the thermodynamic limit of the expected values of PT moments by computing the leading contribution in $L$ of the previous expression.

One can show that in case $N_C>N_{AB}$ one obtains ${\mathbb E}[p_n]\simeq L_{AB}^{1-n}$ in the thermodynamic limit~\cite{shapourian2021entanglement}. For $N_C<N_{AB}$ and both $N_{A}<N/2$ and $N_{B}<N/2$, one gets in the thermodynamic limit
\[
{\mathbb E}[p_n]\simeq\begin{cases}
\displaystyle{\frac{C_k L_{AB}}{(L_{AB} L_C)^k}},&n=2k\\[4mm]
\displaystyle{\frac{(2k+1)C_k}{(L_{AB} L_C)^k}},&n=2k+1,
\end{cases}
\]
where $C_k=\binom{2k}{k}/(k+1)$ is the $k$th Catalan number. Finally, when $N_{AB}>N_C$ and $N_{A}>N/2$, we obtain in the thermodynamic limit
\[
{\mathbb E}[p_n]\simeq\begin{cases}
L_C^{1-n}L_{B}^{2-n},&n=2k\\[4mm]
(L_C L_{B})^{1-n}&n=2k+1.
\end{cases}
\]
The case in which $N_{AB}>N_C$ and $N_{B}>N/2$ is obtained by replacing $L_{B}$ with $L_{A}$ in the latter formula.

While in Ref.~\cite{shapourian2021entanglement}, the previous expressions are used to compute the expectation value of the logarithmic negativity in the thermodynamic limit via
\[
{\mathbb E}[{\mathcal E}(\rho)]\simeq\lim_{n\to1/2}\log{\mathbb E}[p_{2n}]\,,
\]
in this work we are interested in quantities that involve only few low-order PT moments. The previous discussion suggests that in order to distinguish the phases in the asymptotic limit, one could also consider ratios of these PT moments, and combine them in such a way that the numerator and denominator have the same total degree in $L$. This led us to introduce the ratios of averaged PT moments $\tilde r_n$ as:
\begin{equation}
    \tilde r_n =
    \frac{
    \mathbb{E}[p_n]
    \mathbb{E}[p_{n+1}]
    }
    {
      \mathbb{E}[p_{n+2}]
    \mathbb{E}[p_{n-1}]
    }\,.
    \label{eq:tilde r_n}
\end{equation}

In this work we will consider the two quantities, $r_2$ evaluated for a single state and $\tilde{r}_2$ evaluated for an ensembles of states.
Despite the fact that the average of $r_2$ over an ensemble of states does not necessarily coincide with the corresponding $\tilde{r}_2$, we show that in App.~\ref{sec:cov} for Haar random states, the average of $r_2$ approximates $\tilde{r}_2$. More importantly, the phase diagram can be reproduced  with both quantities. As shown in this App.~\ref{sec:cov}, the fact that $r_2$ has small statistical fluctuations around $\tilde r_2$ is due to the fact that each PT moment $p_n$ has a relative variance $\mathrm{Var}(p_n)/\mathbb{E}[p_n]^2$ that decays exponentially with $N$, see illustration in Fig.~\ref{fig:phase_diagram_r_2}a) for $p_2$.
As a consequence, we can Taylor expand $r_2$ around $\tilde r_2$, and find that the relative variance $\mathrm{Var}(r_2)/\tilde r_2^2$ also decays exponentially with $N$, see Fig.~\ref{fig:phase_diagram_r_2}b). 
We have discussed to which extent $r_2$ evaluated for a single random state can be utilized to approximate $\tilde r_2$ (for Haar random states). Note that additional statistical fluctuations arise when estimating $\tilde r_2$ from a finite number of measurements in an experiment. We discuss those effects in Sec.~\ref{sec:measurements} and App.~\ref{app:measuring}.

It can be easily checked, using the formulas below Eq.~\eqref{eq:ExactExpression} that, in the asymptotic limit, the ratios given in Eq.~\eqref{eq:tilde r_n} only depend on $n$ and take different but constant values in the two entangled phases. Because it involves the lowest-order moments, we will mostly focus on the ratio $\tilde {r}_2$. It is also the ratio that best allows to distinguish the entangled phases, taking value $3/2$ in the ES phase and value $1$ in the PPT and ME phase, see Fig.~\ref{fig:phase_diagram_r_2}. 
As shown in the second panel of Fig.~\ref{fig:phase_diagram_r_2}, the phases can also be seen in the finite dimensional case, here for $N_{AB}=10$.
Let us remark here, that another simple function of the first three moments can be used to distinguish between the PPT and the ME phases (see Sec.~\ref{sec:detection}). We will describe below measurement protocols that would allow to measure this phase diagram of Haar-random induced mixed states with system sizes that are compatible with current experimental systems.

Moreover, even though it cannot be seen in the asymptotic limit, there is a small region where $\tilde r_2<1$  in finite dimensional systems (see Fig.~\ref{fig:phase_diagram_r_2}b). In Fig. ~\ref{fig:phase_diagram_r_2} we show that the value of $r_2$ is below $1$ for $N_C\in[N_{AB},N_{AB}+2]$ for finite $N$. From numerical computations, it can be seen that the minimum of $r_2$ in the phase diagram seems to be always within this interval.

It is interesting to note in this context that it was shown in Ref.~\cite{Au12} that PPT-entangled states can be found with high probability in a region slightly above that region. More precisely, Theorems 1 and 2 of Ref.~\cite{Au12} show that for $4L_{AB}\le L_C\le L_{AB}^{3/2}$ the two following statements hold:  (a) the probability that $\rho^\Gamma$ has negative eigenvalues is exponentially small, and (b) the probability that $\rho$ is separable is exponentially small (see Ref.~\cite{Au12} for details). This implies that for $N_C\in[N_{AB}+2,3N_{AB}/2]$, the probability of $\rho$ being PPT-entangled is large. Equivalently, this will occur when $N_C/N\in[0.5+1/N,0.6]$.

\begin{figure*}[t]
 \includegraphics[width = 0.99\linewidth]
 {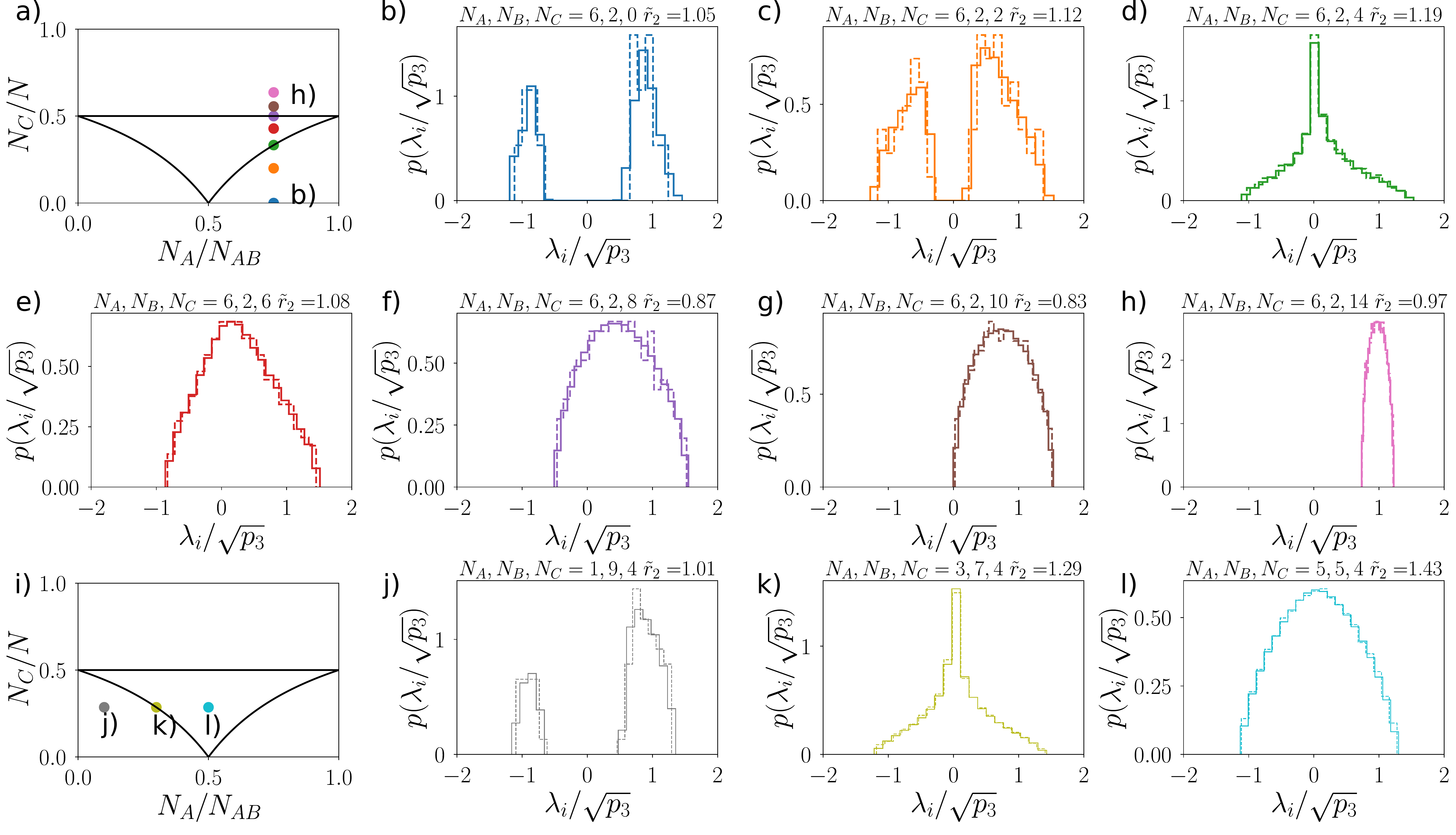}
    
  \caption{Distribution of the negativity spectrum $\{\lambda_i\}$ and $\tilde r_2$ for Haar random states.
  a)-h) Distribution of rescaled negativity spectra $\lambda_i/\sqrt{p_3}$ for a single random state (blue dashed line), and an average of $100$ random states (orange solid) for various points of the phase diagrams obtained from different $N_C=0,2,4,6,8,10,14$, with a fixed $N_A=6$, $N_B=2$.
  (same parameters as in Fig. 7 of Ref.~\cite{shapourian2021entanglement})
  i)-l) Same as a)-h) but for different values of $(N_A,N_B)=(1,9),(3,7),(5,5)$, and fixed $N_C=4$. 
  In the maximally entangled phase and the PPT phase, we observed two (one respectively) peaks centered around $\lambda_i=\pm \sqrt{p_3}$ ($1$). As a consequence, $\epsilon_i\approx 1$, and $\tilde r_2\approx 1$. Instead, in the entanglement saturation phase [panels e), l)], the peak is centered around $0$. 
  }
    \label{fig:haarspectra}
\end{figure*}

\subsection{Relation between $r_2$ and the negativity spectrum}
\label{sec:spectrum_subsec}

As shown in Ref.~\cite{shapourian2021entanglement}, the shape of the negativity spectrum (the spectrum of $\rho^\Gamma$) is very distinct for the three phases (see in particular Fig.~2 of Ref.~\cite{shapourian2021entanglement}).
More precisely, plotting the density of the eigenvalues of the PT operator, leads either to a single positive ``peak'' (in the PPT phase), to a function resembling a triangle around 0 (in the ES phase), or to two separate ``peaks'' one around a positive and one around a negative eigenvalue (in the ME phase). Here, we want to use this result and the behavior of $\tilde{r}_2$ to determine where these peaks are centered. To this end, we will consider the density not as a function of the eigenvalues, but as a function of rescaled and squared eigenvalues. 

We will first consider analytically a simplified situation, where we consider $r_2$ evaluated on a single state and model the peaks by delta distributions. Then, we will consider finite system sizes and illustrate this behaviour for $\tilde{r}_2$ (i.e. the average) numerically. 

Let us start out by considering $r_2$ evaluated on a single state to explain this behavior. Rewriting $r_2$ as a function of the negativity spectrum leads to 
\begin{equation}
    r_2= 
    \frac{\sum_i \lambda_i^2 p_3}{\sum_i \lambda_i^4}
    =
    \frac{\sum_i \epsilon_i}{\sum_i \epsilon_i^2}
    ,\label{eq:r2vsepsilon}
\end{equation}
where the sum is over eigenvalues $\{\lambda_i\}_i$ of the PT operator $\rho^\Gamma$. In the second equality, we have introduced the rescaled squared spectrum $\epsilon_i=\lambda_i^2/p_3$.

We show now that in the PPT and ME phases, where peaks occur in the negativity spectrum, the peaks are centered at $\epsilon_i=1$. Stated differently, we show that $r_2=1$ implies that $\epsilon_i=1,0$ for all $i$.

In case there is one peak, i.e. $\lambda_i=\lambda_1$ for all $i$, we have $r_2=1$ iff $L_{AB} \epsilon_1=L_{AB} \epsilon_1^2$, which proves the statement. In case there are two peaks, i.e. $\lambda_i$ occurs $k_i \neq 0$ times for $i=1,2$, the condition $r_2=1$ holds iff $k_1 \epsilon_1+k_2 \epsilon_2=k_1 \epsilon_1^2+k_2 \epsilon_2^2$. Inserting the definition of $\epsilon_i$ and using the normalization condition ${\rm tr}\rho^\Gamma=1$, the previous condition implies straightforwardly that $\epsilon_i=1,0$ for all $i$~\footnote{The normalization condition implies that $k_2=(1-k_1\lambda_1)/\lambda_2$. Inserting this expression for $k_2$ into the equation $r_2=1$ leads to $k_1\lambda_1(-1+k_1\lambda_1)(\lambda_1-\lambda_2)^2(\lambda_1+\lambda_2)=0$. Hence, the only non-trivial solutions to this equation and the normalization condition are $\lambda_1=\pm\lambda_2=(k_1\pm k_2)^{-1}$. In both cases $\epsilon_i=0,1$ for all $i$.}.

 In order to demonstrate that this conclusion does not only hold for the extreme case of delta distributions (and single states), we analyze the negativity spectrum numerically. In Fig.~\ref{fig:haarspectra}, we show that the  behavior mentioned above is robust within the various phases by studying how the negativity spectrum changes when the relative sizes of the tripartition vary. More precisely, panels (b)--(h) show the negativity spectrum for different values of $N_C/N$ for a fixed ratio $N_A/N_B$, as depicted in panel (a). In addition, in panels (j), (k) and (l) we show the negativity spectrum for different values of $N_A/N_{AB}$ with fixed $N_C$, as indicated in panel (i). Clearly, the PPT phase is characterized by a negativity spectrum with positive semi-definite support whereas in the ME and ES phases the probability density of negative eigenvalues is non-vanishing~\cite{shapourian2021entanglement}. Note that, as illustrated in panel (h), in the interior of the PPT phase $r_2\simeq1$ and the probability density has a single ``peak'' around $+\sqrt{p_3}$. Finally, panels (b), (c), and (j) show that the probability density has two ``peaks'' around $\pm\sqrt{p_3}$ in the interior of the ME phase for different values of $N_A$, $N_B$, $N_C$. Summarizing, while Ref.~\cite{shapourian2021entanglement} showed the shapes of the negativity spectrum in the various entanglement phases, here the value of $r_2$ allows us, using these results, to identify the locations of these peaks when $r_2=1$.

As we will show in Sec.~\ref{sec:detection} $r_2$ can also be utilized to detect a negative eigenvalue of the PT, i.e. to detect entanglement. Moreover, we will show there that the $p_3$-PPT condition~\cite{elben2020mixed} can be utilized to differentiate between phase I and phase II.

\section{Effect of white noise}
\label{sec:noise}

In this section we apply our procedure to pseudo Haar-random induced mixed states, which are convex combinations of Haar-random induced mixed states and some amount of white noise determined by a parameter $\epsilon$ (see Eq. (\ref{Eq:PseudoRandom})). 
As we will see, $\tilde r_2$, i.e. the mean value of low order PT moments, can be easily computed for the ensemble of pseudo Haar-random induced mixed states. This allows us to compute the corresponding phase diagram. In case it is known that the states generated in an experiment are pseudo random, the phase diagram can be utilized to determine the value of the parameter $\epsilon$.

We start defining the ensemble of pseudo Haar-random pure states with parameter $\epsilon$ as the set of states
\begin{equation} \label{Eq:PseudoRandom}
\rho'_{ABC}=(1-\epsilon)\ket\psi_{ABC}\bra\psi+\epsilon{\bf 1}_{ABC}/L\,,
\end{equation}
where ${\bf 1}_{ABC}$ denotes the $L\times L$ dimensional identity matrix with $L=2^N$ and $\ket\psi_{ABC}$ is Haar-random. From here, we obtain pseudo Haar-random {\em induced} mixed states as
\begin{equation}
\rho'={\rm tr}_C(\rho'_{ABC})=(1-\epsilon)\rho+\epsilon{\bf 1}_{AB}/L_{AB}\,,
\end{equation}
where $\rho={\rm tr}_C\ket\psi_{ABC}\bra\psi$ is a Haar-random induced mixed state, ${\bf 1}_{AB}$ denotes the identity matrix, and \mbox{$L_{AB}=2^{N_{AB}}$}.

Let $p_n'$ be the PT moment of a pseudo Haar-random induced mixed state and $p_n$ those of a Haar-random induced mixed state. Then, the mean values ${\mathbb E}[p_n']$ can be expressed in terms of the mean values ${\mathbb E}[p_n]$ as
\begin{equation}
{\mathbb E}[p_n']=\sum_{k=0}^n\binom{n}{k}(1-\epsilon)^k{\mathbb E}[p_k](\epsilon/L_{AB})^{n-k}\,,
\end{equation}
with ${\mathbb E}[p_0]=p_0=L_{AB}$. Clearly we have ${\mathbb E}[p_n']={\mathbb E}[p_n]$ if $\epsilon=0$; and ${\mathbb E}[p'_n]=L_{AB}^{1-n}$ if $\epsilon=1$. One can use the previous expression to compute the phase diagram of the ensemble of pseudo Haar-random induced mixed states with noise parameter $\epsilon$ for all $\epsilon\in[0,1]$. This phase diagram interpolates between the one for Haar-random induced mixed-states for $\epsilon=0$ and the trivial one associated to the maximally mixed states with $\tilde{r}_2\equiv1$ everywhere for $\epsilon=1$. For intermediate values $0<\epsilon<1$, the following two observations can be made: (i) the lower boundary of the PPT phase (located along the horizontal line $N_C/N=0.5$ for $\epsilon=0$) goes down to some value $N_C/N<0.5$ that depends on $\epsilon$ as depicted in Fig.~\ref{fig:whitenoise}; (ii) in the region corresponding to the ME phase for Haar-random states (where $\tilde r_2=1$ for $\epsilon=0$), $\tilde r_2\simeq1-\epsilon$ for the values of $1-\epsilon$ considered in Fig. \ref{fig:whitenoise} b). These results can be understood from the leading order contributions in the thermodynamic limit to PT moments in the two entangled phases. 

\begin{figure}[t]
    \centering
    
    \includegraphics[width = 0.99\linewidth]{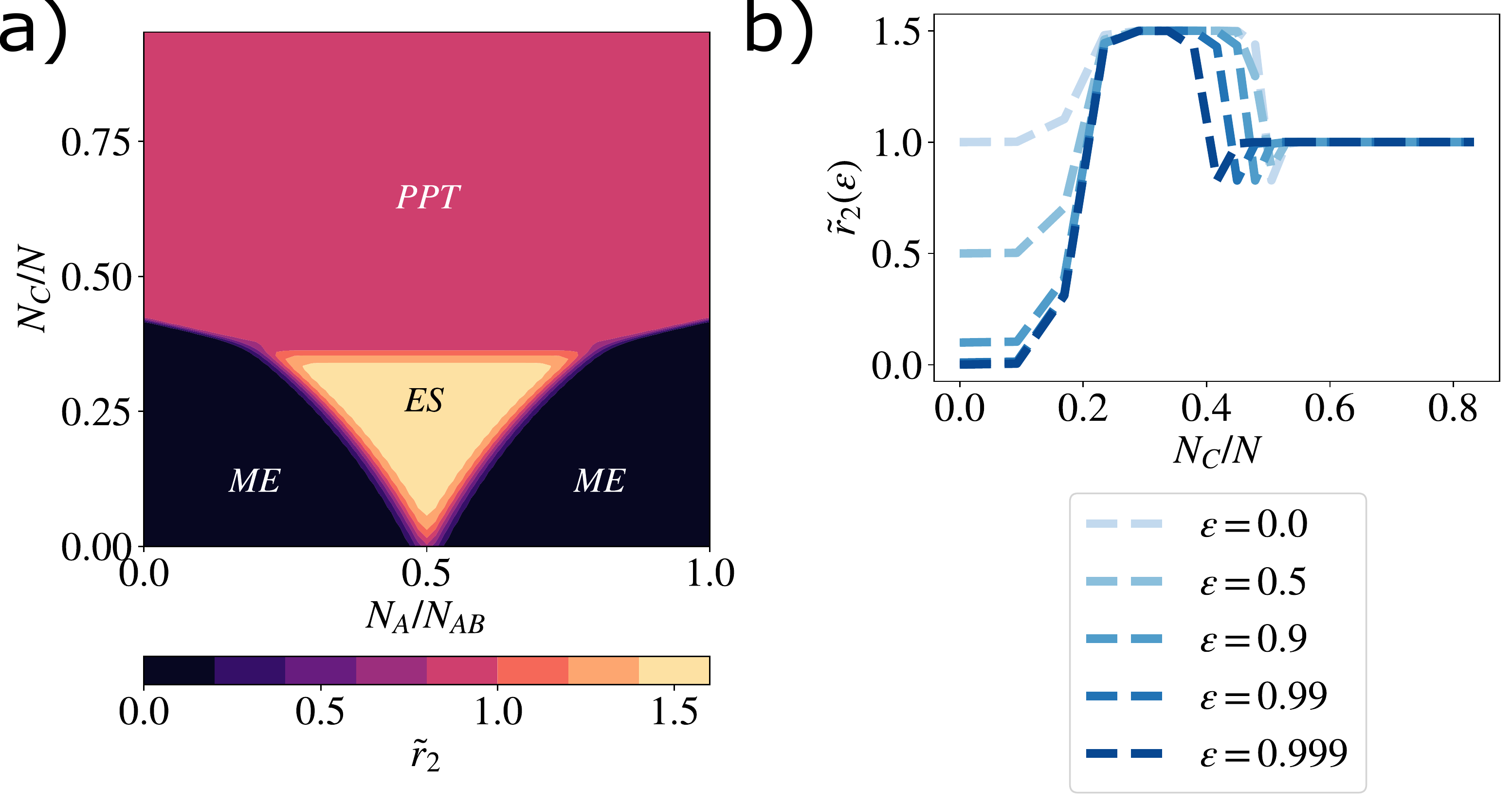}
    
    \caption{$\tilde{r}_2$ for Haar-random induced mixed states in presence of white noise. a) $\tilde{r}_2$ for all the possible tripartitions  with $N_{AB}=64$, and a white noise contribution of $\epsilon = 1 - 10^{-4}$, comparable with those in the experiment of Ref. \cite{arute2019quantum}.
    b) $\tilde{r}_2$ for different $\epsilon$, $N_{AB} = 64$ and $N_A=24$. In the ME region, $\tilde{r}_2$ decreases linearly with $1 - \epsilon$ for the parameters considered, whereas the $\tilde{r}_2=3/2$ region associated with the ES phase shrinks with increasing $\epsilon$.  
    }
    \label{fig:whitenoise}
\end{figure}

\section{Aspects of simulatability revealed by $r_2$: Stabilizer states}
\label{sec:stabilizers}
One may wonder whether the phase diagram revealed by $r_2$, $\tilde r_2$ or the negativity for Haar-random induced mixed states changes if one consider a different ensemble of quantum states. Here, we determine the values of $r_2$ for a class of quantum states which play an important role in the classical simulation of quantum computations: stabilizer states. We observe strong differences compared to the situation of Haar-random states. We will complement these results in Sec.~\ref{sec:MPS-FF} for other classes of states which are classically simulable, namely a class of random MPS and the class of fermionic Gaussian states. Note that in contrast to before, we consider here $r_2$ evaluated for a single stabilizer state and do not consider an average. This will be enough, as we will show, since for stabilizer states, $r_2$ can be seen to be always $1$.

Stabilizer states, sometimes also referred to as Clifford states, can be written as \mbox{$\ket{\psi}=U\ket{0}^{\otimes N}$}, where $U$ belongs to the $N$-qubit Clifford group. This group contains all unitary operators $U$ which map (under conjugation) any $N$-qubit Pauli operator $\sigma$ to some $N$-qubit Pauli operator, $\sigma'$, i.e. $\sigma'=U\sigma U^\dag$. According to the Gottesman-Knill theorem, the output of a Clifford circuit $U$ applied to a computational basis state can be simulated classically efficiently~\cite{gottesman1998heisenberg, gottesman2004simulation}. 

As shown in Ref.~\cite{bravyi2006ghz},
any three-partite stabilizer states $\ket{\psi}$ can be decomposed into GHZ states, Bell states, and product states, distributed among the three parties, $A$, $B$, and $C$~\cite{bravyi2006ghz}. That is, $\ket{\psi}$ can be written as
\begin{eqnarray}
\label{eq:GHZ}
\ket{\psi} &=& U_A U_B U_C\ket{0}^{\otimes s_{A}}
\ket{0}^{\otimes s_{B}}
\ket{0}^{\otimes s_{C}}
\ket{\mathrm{GHZ}}^{\otimes g_{ABC}}
\nonumber \\
&&
\ket{\mathrm{EPR}}^{\otimes e_{AB}}
\ket{\mathrm{EPR}}^{\otimes e_{AC}}
\ket{\mathrm{EPR}}^{\otimes e_{BC}},
\end{eqnarray}
with $U_A,U_B,U_C$ unitary Clifford operators on $A,B,C$, respectively. Using this decomposition 
and the fact that $p_n(\rho\otimes \sigma)=p_n(\rho)p_n(\sigma)$, 
it is straightforward to obtain the following PT moments 
\begin{eqnarray}
p_2 &=& \left(\frac{1}{2}\right)^{g_{ABC}+e_{AC}+e_{BC}}
\nonumber \\
p_3 &=& \left(\frac{1}{4}\right)^{e_{AB}} \left(\frac{1}{4}\right)^{g_{ABC}+e_{AC}+e_{BC}}
\nonumber \\
p_4 &=& \left(\frac{1}{4}\right)^{e_{AB}} \left(\frac{1}{8}\right)^{g_{ABC}+e_{AC}+e_{BC}}.
\label{eq:pnstab}
\end{eqnarray}
With all that it is easy to see that 
\begin{equation}
    r_2=1 \text{ for all Clifford states.}
\end{equation}
Hence, in stark contrast to random states, $r_2$ takes  a fixed value, which is independent of the stabilizer state and the system sizes. 

Given the decomposition above, it can also be seen that the negativity spectrum of stabilizer states  is constrained to two values $\lambda_i=\pm \sqrt{p_3}$, i.e all eigenvalues $\lambda_i$ of $\rho^\Gamma$ are either $\sqrt{p_3}$ or $-\sqrt{p_3}$. This is because each Bell pair between $A$ or $B$ and $C$, and each GHZ state in Eq.~\eqref{eq:GHZ} gives a $1/2$ multiplicative contribution to the negativity spectrum, while the $e_{AB}$ Bell pairs between $A$ and $B$ give a $\pm 1/2$ contribution. 
Therefore, this type of negativity spectrum is analogous to the ones of the PPT and ME phases of Haar-random states with $r_2\approx 1$.
However, if one measures in an experiment $r_2\neq 1$, e.g, in the ES phase for a Haar-random state, it proves that the state is not a stabilizer state and thus cannot be generated via Clifford gates, which are classically efficiently  simulable~\cite{haferkamp2020quantum}.

One may wonder what happens when  Clifford circuits are doped with $T$ gates, which make them universal for quantum computations. 
The question of convergence of the output of doped Clifford circuits to Haar-random states has been studied in Ref.~\cite{leone2021quantum}. In this work, we will focus on the transition of another class of constraint states to Haar-random states by considering fermionic Gaussian states (see Sec.~\ref{sec:MPS-FF}). However, let us mention here that recently, measures of ``magic'' have been introduced to quantify how distant a given quantum states is from the set of stabilizer states, in particular in terms of quantum resources ~\cite{haferkamp2020quantum,leone2022magic}.
The quantity $r_2-1$ vanish for Clifford states, but it does not measure how resourceful a state is. This can be easily understood by the fact that $r_2$ is invariant under local unitaries i.e.,  applying a local, non-Clifford, operation on a stabilizer state will also result in $r_2=1$. This is in contrast to the measures of ``magic'' introduced in Ref.~\cite{leone2022magic} that would detect such non-Clifford operations, and that are invariant under global entangling Clifford operations applied on non-Clifford states. Instead, what $r_2$ characterizes is a sort of \emph{magic entanglement} structure of non-Clifford states: any state with $r_2\neq 1$ has an entanglement content that cannot be generated using a Clifford circuit followed by local unitary operations.

Let us finally mention that for stabilizer states the negativity is given by the simple function~\cite{bravyi2006ghz} 
\begin{equation}\label{eq:negp3}
\mathcal{E}(\rho)=e_{AB}=\frac{1}{2}\log_2(p_2^2/p_3)\text{ for all Clifford states.}
\end{equation}
This shows that stabilizer state are PPT iff they satisfy the  $p_3$-PPT condition~\cite{elben2020mixed}, which states that for any PPT state it holds that $p_3\ge p_2^2$. In other words, $p_3<p_2^2$ implies that the partial transpose of the state is not positive semi-definite. In general there exist, of course, states which are not PPT and for which $p_3\ge p_2^2$. However, equation~\eqref{eq:negp3} shows that a Clifford state is PPT (has zero negativity) if and only if the $p_3$-PPT condition is satisfied. In fact, for stabilizer states Eq. (\ref{eq:pnstab}) implies that  the state is separable if and only if $p_3=p_2^2$. Otherwise, the $p_3$-PPT condition is violated. Thus, we always have $p_3\ge p_2^2$, which ensures that the expression of the negativity in Eq.~\eqref{eq:negp3} is always non-negative.

\section{$r_2$ is a test for Haar random states: Case study with the PXP model}
\label{sec:rydberg}

\begin{figure}
    \centering
    \includegraphics[width = 0.95\columnwidth]{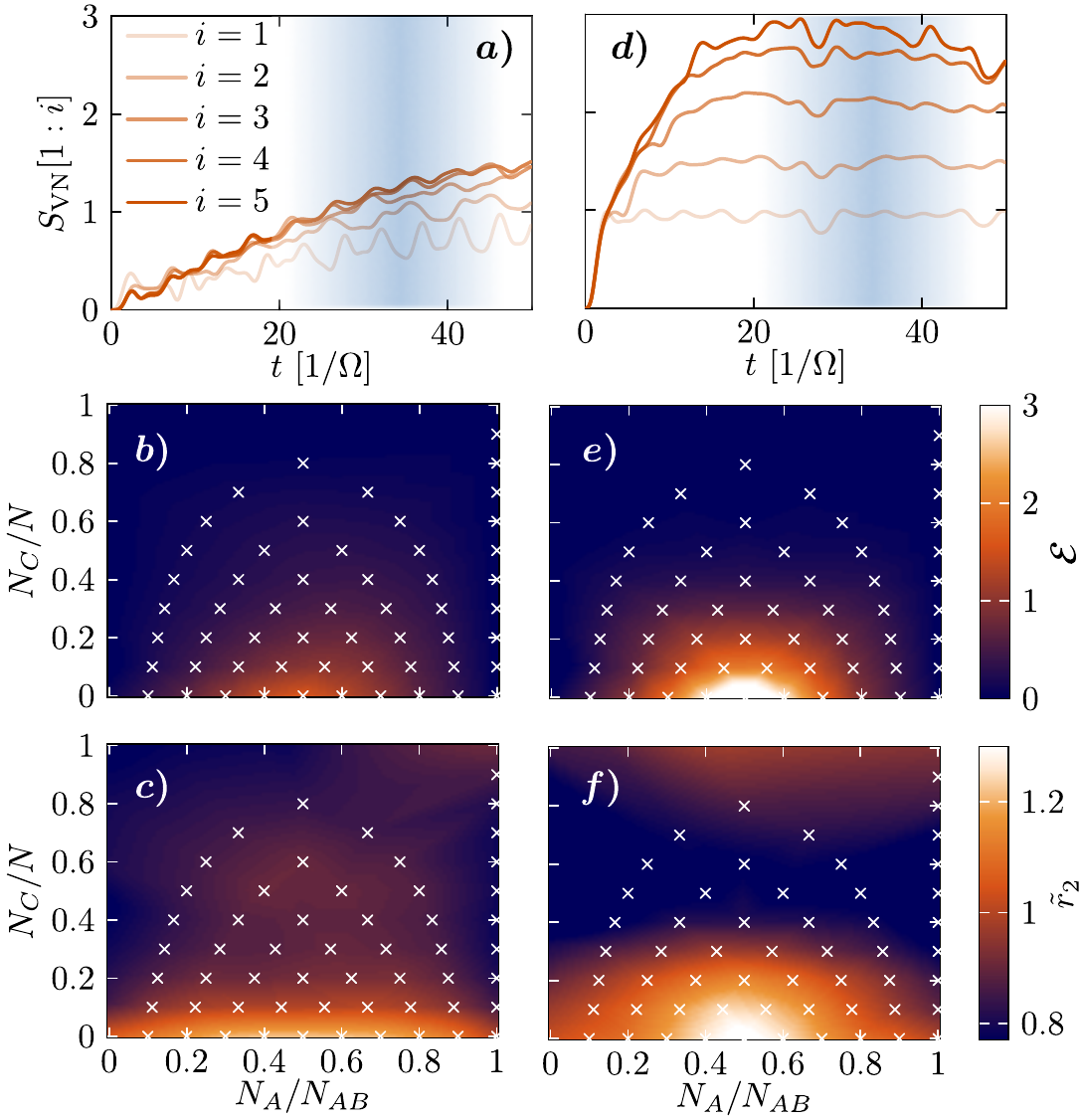}
    \caption{Entanglement structure of quantum many-body states in nonequilibrium dynamics of a PXP-model [Eq.~(\ref{eq:pxp}) in main text]. Panels a)-c): constrained dynamics, performing a quench from an antiferromagnetic initial state $\ket{\psi_0} = \ket{\mathbb{Z}_2} = \ket{0 1}^{\otimes N/2}$. Panels d)-f): quench from a maximally polarized state $\ket{\psi_0} = \ket{0}^{\otimes N}$, resulting in ergodic many-body dynamics. a) Von-Neumann entanglement entropy as a function of time for different bipartitions of a spin chain with $N = 10$. b) Negativity for fixed $N = 10$ for all possible connected tripartitions of the chain denoted by the white crosses. c) Corresponding phase diagram for the ratio $\tilde r_2$. The data in the panels b), c) and e), f) have been obtained by averaging the quantities over 300 states from the time interval $t \in [20, 50]\,1/\Omega$ indicated by the shaded area in the panels a) and d).}
    \label{fig:pxp}
\end{figure}

We now turn to the discussion on how to exploit the properties of $r_2$ to characterize the entanglement structure of quantum many-body states in quantum simulation experiments. In this section we focus on systems based on Rydberg atoms trapped in optical tweezers \cite{Browaeys2020}, which have been used recently to realize a large variety of correlated phases of matter, ranging from ground states of 1D and 2D spin models \cite{Bernien2017, Ebadi2021} to topological states \cite{Syl2019} and quantum spin liquids \cite{Semeghini2021}. In our context, Rydberg systems are of particular interest as they allow for the implementation of chaotic quantum many-body systems, where the entanglement structure of states generated by quenching in the long-time limit shares properties with the entanglement structure of Haar-random states~\cite{Wen2019}.

For the subsequent analysis we will focus on the dynamics of Rydberg atoms in a 1$D$-chain as previously studied in Ref.~\cite{Wen2019, Bernien2017}. Here, entanglement is generated via the Rydberg blockade mechanism. In particular, atoms located within the blockade radius cannot be simultaneously excited to the Rydberg state, due to the large interaction between Rydberg excited atoms. For a 1$D$-chain where the blockade affects only nearest-neighbour sites, the system is effectively described by a PXP-model
\begin{align} \label{eq:pxp}
H = \Omega \sum_{i} \mathcal{P} X_i \mathcal{P}.
\end{align}
Here the operator $\mathcal{P}$ constraints the Hilbert space by projecting out all states where two adjacent atoms are in the Rydberg state, i.e. $\mathcal{P} = \prod_i \left( \mathbb{1}_i  \mathbb{1}_{i+1} - Q_{i}Q_{i+1}  \right)$, where the operators  $Q_i$ are the local projectors $Q_i = \ket{1}_i \! \bra{1}$. Recently, the model (\ref{eq:pxp}) has attracted great interest due to its connection to quantum many-body scarring \cite{Serbyn2021, TurnerB2018, Cheng-Ju2020}. Despite the fact that the Hamiltonian (\ref{eq:pxp}) is non-integrable and quantum-chaotic \cite{Wen2019}, quench dynamics from specific unentangled product states lead to constrained dynamics with long-lived periodic revivals accompanied by suppression of thermalization. 

We first study the entanglement structure of states in the constrained case, by simulating a quantum quench $\ket{\psi(t)} = e^{-i H t} \ket{\psi_0}$ from a staggered initial state $\ket{\psi_0} = \ket{\mathbb{Z}_2} = \ket{10}^{\otimes N/2}$. As shown in Ref.~\cite{Wen2019}, in this case the state $\ket{\psi(t)}$ is well described by a MPS with low bond dimension.

This is also reflected in the slow growth of entanglement entropy in Fig.~\ref{fig:pxp} a). In Fig.~\ref{fig:pxp} b) we analyse the averaged entanglement negativity of the partial transpose $\rho^{\Gamma}$ for all possible connected tripartitions $\{N_A, N_B, N_C\}$ of the chain. For $N_C \ll N$, the negativity is maximal around $N_A/N_{AB} = 0.5$. Interestingly, the ratio $\tilde{r}_2$ in Fig.~\ref{fig:pxp} c) shows a quantitative different behavior that is not captured by the negativity. Close to $N_C/N = 0$, we observe a band in the horizontal direction in which $\tilde{r}_2$ saturates to a value $\tilde{r}_2>1$. With increasing $N_C$, the phase diagram shows an extended region where $\tilde{r}_2<1$. As we will see in section \ref{sec:mps}, both features are related to the finite correlation length and associated finite bond-dimension of the underlying state.
This example shows that, when the dynamics is constrained, $\tilde{r}_2$ shows a different behavior compared with random states.

For generic unentangled initial states, the dynamics of the system is ergodic with quick thermalization of local observables. The entanglement entropy Fig.~\ref{fig:pxp} d) grows linearly and quickly saturates to a value close to the Page entropy of a random state \cite{Page1993}. In this case the averaged Negativity Fig.~\ref{fig:pxp} e) essentially shows the same features as for Haar-random states \cite{shapourian2021entanglement}. We observe a peak in the Negativity for $N_C = 0$ and $N_A/N_{AB} = 0.5$, which broadens and fades out as the size of the bath $N_C$ is increased. Similar features are visible when analysing the ratio $\tilde{r}_2$. Here we additionally observe a band close $N_C/N = 0.5$ with $\tilde{r}_2 < 1$. As discussed above, slightly above this region it has been proven that PPT-entangled states are likely to be found.

We emphasize that in contrast to the negativity [Fig.~\ref{fig:pxp} b), e)], the ratio $\tilde{r}_2$ is easily accessible in current experimental settings. As discussed in Ref.~\cite{notarnicola2021randomized}, randomized measurements for obtaining moments of $\rho^{\Gamma}$ can be implemented in settings based on Rydberg atoms. Recently, direct measurement of R\'enyi entanglement entropies has been experimentally demonstrated in dynamically reconfigurable Rydberg arrays by applying beam-splitting operations as Bell-measurements between two copies of an atom array \cite{bluvstein2021quantum}. These ideas can be readily extended to measuring moments of the partially transposed density matrix based on preparing multiple copies of the same quantum state~\cite{Horodecki2003,carteret2005noiseless,graymachine2018}, see also our discussion in  Sec.~\ref{sec:measurements}.

\section{$r_2$ for two classically simulable class of states} 
\label{sec:MPS-FF}

We have discussed how PT moments reveal via the quantity $\tilde r_2$ the phase diagram of Haar random states, while exhibiting striking differences with Clifford states, and non-ergodic states of the PXP model.
We now show that $\tilde r_2$ shows also a distinctive behavior for two other important classes of quantum states: MPS and fermionic Gaussian states.

\subsection{Matrix-product states} \label{sec:mps}

\begin{figure}[t]
    \centering
    \includegraphics[width = 0.99\linewidth]{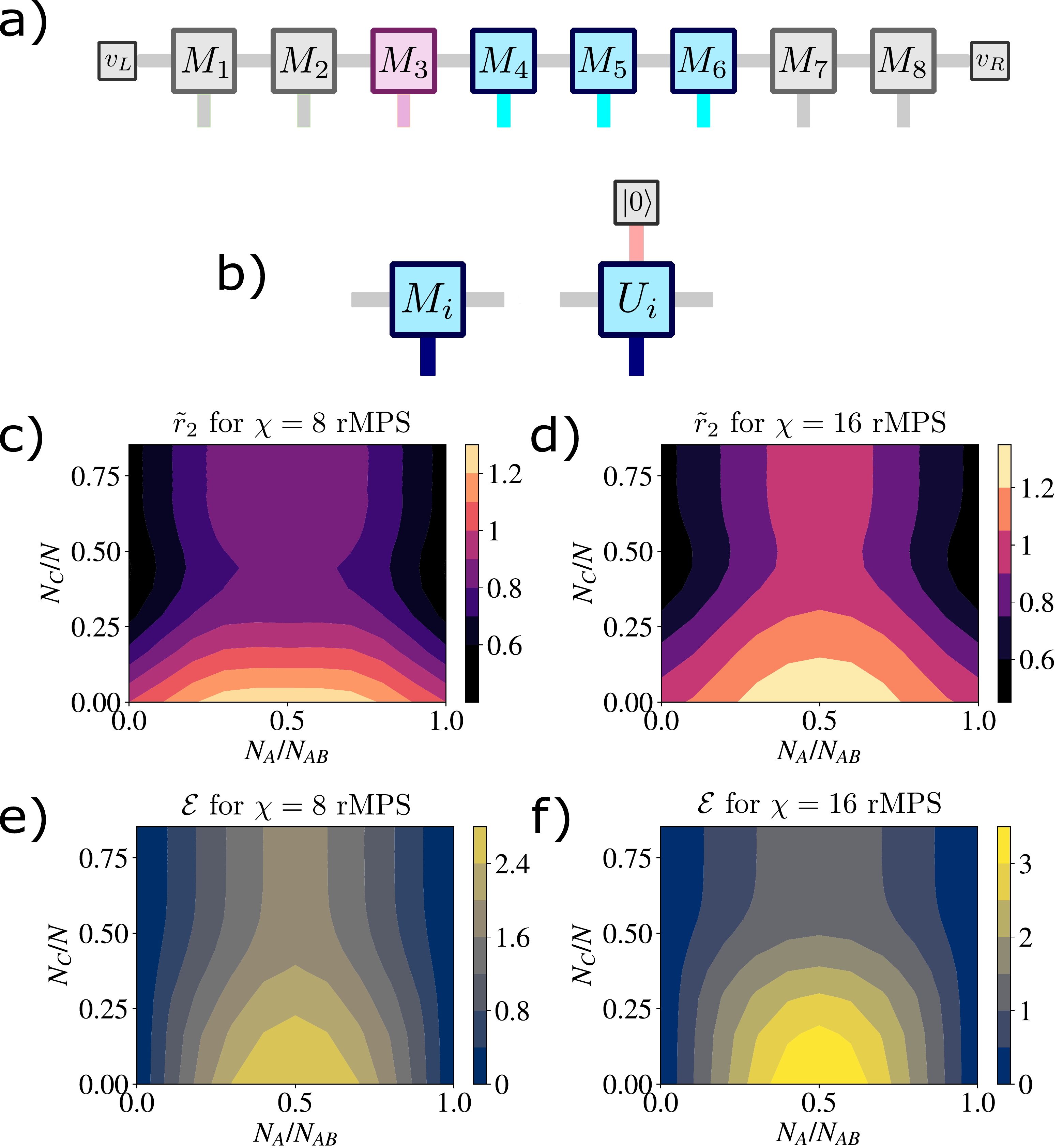}
     \caption{
     a) Pictorial representation of a matrix-product state of $N = 8$ qubits with the typical tripartition we consider. The gray matrices represent the qubits belonging to the region $C$, while the red (blue) matrix belongs to the region $A$ ($B$) respectively. (b) Definition of the matrices for rMPS. A random unitary $U_i$ is reshaped, i.e. its first index is partially contracted with a qubit in the state $\ket{0}$. 
    c), d) Ratio $\tilde{r}_2$ of the PT moments and  e),f) average logarithmic negativity for an ensemble of rMPS (definition in the text), with fixed $N_{AB}=10$ and bond dimensions respectively $\chi = 8$ and $\chi = 16$ (equivalent colorscales). These data have been obtained numerically averaging over $\sim 10^2$ random realizations of the rMPS.}
    \label{fig:MPS_r2}
\end{figure}

MPS form a class of quantum states with low level of entanglement that can describe in particular  ground states of gapped local Hamiltonians in one dimension \cite{eisert2010colloquium, hastings2007}.
In this section we describe how the ratio $\tilde{r}_2$ shows a different behavior compared to Haar random states.

A MPS describing the state of $N$ qubits can be written as \begin{equation}
    \ket{\psi} = \sum_{\sigma\in \{0,1\}^N} v_L^{T}M^{\sigma_1}_{1}M^{\sigma_1}_{2}\dots M^{\sigma_N}_{N}v_R\ket{\sigma_1,\dots,\sigma_N},
\end{equation}
where $M_{i}^{\sigma_i}$ are $\chi \times \chi$ matrices, $v^L$ and $v^R$ are vectors of length $\chi$.
Noting that the von Neumann entropy (and any R\'enyi entropy) between two connected partitions $A$ and $B$ is upper bounded by $\log \chi$ \cite{schollwock2011mps}, the bond dimension $\chi$ is the key parameter that controls the amount of entanglement of the MPS. 

Here, we consider a distribution~\cite{garnerone2010tipicality} of rMPS, which are obtained by drawing from the Haar measure a unitary matrix from the $U_i(2\chi)$ group for each site $i=1,\dots,N$ independently, and defining:
\begin{equation}
    [M_i^{\sigma_i}]_{\ell,\ell'} = [U_i]_{\ell,  \ell'+\chi\sigma_i}.
\end{equation}
The vectors components of $v^L$ and $v^R$ are sampled using independent Gaussian complex variables of zero mean and unit variance. When all the random variables have been initialized, we normalize the vector $\ket{\psi}$. 
We calculate numerically both the negativity and the PT moments using the algorithm presented in Ref.~\cite{ruggiero2016randomsinglet}, for various bond dimensions $\chi$. We consider here that the partitions $A$ and $B$ are adjacent and placed at the middle of the chain, see  Fig.~\ref{fig:MPS_r2}a).

In Fig.~\ref{fig:MPS_r2}c)-d) we show $\Tilde{r}_2$
for two values of $\chi=8,\ 16$. As a first notable difference with respect to Haar random states, we observe that for $N_C/N\ll1/2 $, $r_2>1$ for a large interval of values of $N_A/N_{AB}$. 
Interestingly, this region $r_2>1$ corresponds to a saturation of the negativity when varying $N_A/N_{AB}$ for a fixed $N_C$, c.f panels (e) and (f). 
In the limiting case of a pure state $N_C\to 0$, we can understand this saturation of the negativity
as a consequence of the finite bond dimension of the rMPS. Indeed, for pure states, the negativity can be shown to be upper bounded by $\log(\chi)$~\cite{calabresenegativity2012}, which is consistent with the two plateau values shown in panels (e) and (f) for $\chi=8$ and $\chi=16$. 

A second important observation is that $r_2<1$ in the limit $N_C\gg N/2$. In this case, the two partitions $A$ and $B$ are NPT entangled, as shown by the finite value of the negativity in panels (e) and (f). This can be interpreted as follows: for Haar random states, we have seen that the density matrix $\rho$ converges to a PPT density matrix with $r_2=1$ as $N_C$ increases (intuitively, adding a qubit in the bath $C$ always make the reduced state $\rho$ more mixed, until we reach the maximally mixed state). 
Here instead with rMPS, we obtain a NPT state for arbitrary large $N_C$ because the bond dimension introduces a finite correlation length between $AB$ and $C$~\cite{schollwock2011mps, eisert2010colloquium, eisert2021randomMPS}. 

\subsection{Fermionic states}\label{sec:fermionic}

In this section we study the behavior of the ratios $r_2$, $\tilde r_2$ for the ensemble of random fermionic Gaussian mixed states and show that it is again distinct from all the previously studied classes of states. Moreover we us $\tilde r_2$ to observe the transition from classically simulable states (fermionic Gaussian states) to Haar random states. To this end, we consider the change of $\tilde r_2$ as a function of the number of SWAP gates which dope the corresponding classically simulable circuit. 

\subsubsection{Definitions}

Fermionic Gaussian states have being studied in the context of entanglement characterization ~\cite{shapourian2021entanglement,murciano2021quench,murciano2021quench,Eisler_2015,eislerzimboras2016} and are also of interest in quantum computation as fermionic Gaussian {\em pure} states can be seen as the output of Matchgate (MG) circuits~\cite{Valiant2001,JM08,TD02}. This connection between fermionic Gaussian states and MG circuits can be used to define properly an ensemble of random mixed states and to compute the corresponding phase diagram associated with the ratios $r_2$, $\tilde r_2$. The idea is to uniformly sample MG circuits and then consider the reduced states of the resulting wavefunction. As we will explain below, the uniform sampling of a MG circuit $U$ acting on $N$ qubits can be done efficiently since they are characterized by a special orthogonal matrix $R\in{\rm SO}(2N)$, and the special orthogonal group has a unique invariant (Haar) measure induced by that of the unitary group ${\rm U}(2N)\supset{\rm SO}(2N)$. The reduced state of the pure state $U\ket{0^N}$ will be a fermionic Gaussian (mixed) state completely characterized by a correlation matrix scaling linearly with $N$ that can be efficiently computed~\cite{Peschel_2003,Peschel_2009,Eisler_2015} from the one of $U\ket{0^N}$. From this correlation matrix the PT moments can be determined, as we will explain below. Therefore, we can deal with much larger system sizes compared to the case in which we consider the output of a universal quantum computation.
In contrast to that, we consider in the subsequent subsection quantum circuits that are no longer  efficiently classically simulable by including additional resourceful gates like the SWAP gate.
As the number of resourceful gates increases, the circuits become universal. As we show here, this transition, from fermionic Gaussian states to Haar-random states as a function of the number of SWAP gates can be observed with $\tilde{r}_2$.  

We are interested in fermionic Gaussian mixed states defined on the Hilbert space of $N$ (ordered) fermionic modes/sites that we identify with the numbers $1,2,\ldots,N$. A fermionic Gaussian state can be written in the form
\begin{equation}\label{eq:FGstateExp}
\rho_{ABC}\propto\exp\bigg(\frac14\sum_{j,k=1}^{2N} W_{jk}c_jc_k\bigg)\,,
\end{equation}
where $W$ is a ($2N\times 2N$) purely imaginary antisymmetric matrix and $c_j$ are (anticommuting) Majorana fermionic operators.
Due to the relation 
$G=\tanh(W/2)$ \cite{Peschel_2003,Peschel_2009}, with the ($2N\times 2N$) covariance matrix $G$ with matrix elements given by $G_{jk}=(1/2){\rm tr}(\rho_{ABC}[c_j,c_k])$, such a density matrix can be uniquely characterized by its  covariance matrix. 
As mentioned before, fermionic Gaussian pure states have been shown to be equivalent to those states generated by MG circuits through a Jordan-Wigner (JW) transformation~\cite{Valiant2001,JM08,TD02}. The JW transformation is a unitary mapping from a $N$-modes fermionic state to a $N$-qubits ($N$-spins) state. In terms of the $2N$ Majorana fermionic operators, the JW mapping can be described by the well-known relations
\begin{equation}
\begin{aligned}
    c_{2k-1}&=\prod_{i<k}(\sigma^z_i)\sigma^x_k\\
    c_{2k}&=\prod_{i<k}(\sigma^z_i)\sigma^y_k,
\end{aligned}
\end{equation}
where $\sigma^x$, $\sigma^y$ and $\sigma^z$ denote the Pauli matrices. The $N$ fermionic creation (annihilation) operators $a^{\dagger}_k$ ($a_k$), for $k=1,\dots,N$ are related to the $2N$ Majorana fermionic operators via the equations  $c_{2k-1}=a_k+a_k^{\dagger}$ and $c_{2k}=-i(a_k-a_k^{\dagger})$. A state 
\begin{equation}\label{eq:fermionic}
\ket{\Psi}=\sum_{i_1,\ldots,i_N\in\{0,1\}}\alpha_{i_1,\ldots,i_N}(a_1^\dagger)^{i_1}\cdots(a_N^\dagger)^{i_N}\ket{\Omega}\,,
\end{equation}
with $\ket{\Omega}$ the Fock vacuum, can be related to the $N$-qubits state
\begin{equation}\label{eq:qubits}
\ket{\Phi}=\sum_{i_1,\ldots,i_N\in\{0,1\}}\alpha_{i_1,\ldots,i_N}\ket{i_1,\ldots,i_N}\,.
\end{equation}
Fermionic states~\cite{BK02} are those states of the form~\eqref{eq:fermionic} whose $N$-qubits representation~\eqref{eq:qubits} is an eigenstate of $\sigma_z^{\otimes N}$. 

Let us consider $N$-qubit states that are the output of nearest-neighbors MG circuits~\cite{Valiant2001,JM08,TD02}, i.e. we consider states of the form  $\ket\Phi=U\ket{0^N}$ where $U$ is a product of two-qubits match gates $M$ acting on nearest neighbors. Any match gate, $M$, can be written as 
\[
    M= \begin{pmatrix}
    u_{00} & 0 & 0 & u_{01} \\
    0 & v_{00} & v_{01} & 0 \\
    0 & v_{10} & v_{11} & 0 \\
    u_{10} & 0 & 0 & u_{11} 
    \end{pmatrix}\,,
\]
with $u=(u_{ij})$ and $v=(v_{ij})$ in ${\rm U}(2)$, and $\det u=\det v$. This automatically implies that the state $\ket\Phi=U\ket{0^N}$ is an eigenstate of the operator $(\sigma^z)^{\otimes N}$. Hence, the corresponding state $\ket{\Psi}$ (via Eqs.~\eqref{eq:fermionic} and~\eqref{eq:qubits}) can be written in the form of Eq.~\eqref{eq:FGstateExp} and is thus a fermionic {\em Gaussian} pure state. In particular, its reduced state in a connected subsystem is a fermionic {\em Gaussian} (mixed) state whose correlation matrix can be computed efficiently from that of $\ket\Phi$~\cite{Eisler_2015}.

Note that partial transposition in the fermionic case can be defined in different, in general non-equivalent ways~\cite{Eisler_2015,shapourian2017,murciano2021quench,murciano2022negativity}. Let us write $\rho={\rm tr}_C\ket{\Psi}\bra\Psi$ and $\rho'={\rm tr}_C\ket{\Phi}\bra\Phi$, where $\ket\Psi$ and $\ket\Phi$ are related via Eqs.~\eqref{eq:fermionic} and~\eqref{eq:qubits} and $\ket{\Phi}=U\ket{0^N}$ is the output of a MG circuit $U$. For simplicity, in what follows we will assume that subsystems $A$, $B$ and $C$ are connected and also that subsystems $A$ and $B$ are adjacent~\footnote{In all other cases, one can bring the systems in this order by applying the corresponding fermionic SWAP gates, defined by $\ket{ab}\mapsto(-1)^{ab}\ket{ba}$, to the state $\ket{\Phi}$.}. Then, the definition for the PT operator which we consider here~\cite{Eisler_2015} has the property~\cite{TD02} that the PT moments of $\rho$ coincide with the PT moments of $\rho'$.

\subsubsection{Sampling fermionic Gaussian states}

In what follows we denote by $G_0$ the correlation matrix of the fermionic state $\ket{\Omega}$ representing the vacuum (associated with the state $\ket{0^N}$).
Then, the correlation matrix of $\ket{\Psi}$  corresponding to the state $\ket{\Phi}=U\ket{0^N}$ (see Eqs.~\eqref{eq:fermionic} and~\eqref{eq:qubits}), where $U$ denotes a MG circuit is given by $G=R\cdot G_0\cdot R^T$. Here, $R\in{\rm SO}(2N)$ is related to the MG circuit via the equation ~\cite{TD02,josza2008mg}
\begin{equation}\label{eq:UR}
    U^\dagger c_i U =\sum_j R_{ij} c_j\,.
\end{equation}
This can be easily verified using Eq.~\eqref{eq:UR}, which implies that
\begin{multline*}
\langle c_ic_j\rangle_{U\ket{0^N}}=\langle U^\dagger c_i c_jU\rangle_{\ket{0^N}}=\langle U^\dagger c_i UU^\dagger  c_jU\rangle_{\ket{0^N}}\\
=\sum_{kl}R_{ik}R_{jl}\langle c_k c_l\rangle_{\ket{0^N}}=\sum_{kl}R_{ik}R_{jl}(G_0)_{kl}\,.
\end{multline*}

Note that the relation between a MG circuit $U$ and $R\in{\rm SO}(2N)$ is one-to-one. Let us mention here that sampling uniformly-random special orthogonal matrices $R\in{\rm SO}(2N)$ is equivalent to sample from particularly structured MG circuits~\cite{HNRW22} with ${\rm O}(N^2)$ number of MGs. 

The ($2N_{AB}\times 2N_{AB}$) correlation matrix of the reduced state in the connected subsystem $AB$ can be obtained from $G$ by deleting the rows and columns with indices that correspond to the modes in $C$, the complement of $AB$ in $1,2,\ldots,N$. Therefore, the fermionic Gaussian state in $AB$ can be expressed as
\begin{equation}
    \rho=\frac{1}{Z}\exp\bigg(\frac12\sum_{j,k\in AB} [\tanh^{-1}G]_{jk}c_jc_k\bigg)\,,
\end{equation}
where $Z$ is a normalization factor such that $\tr{\rho_{AB}}=1$. Let us denote by $G'$ the correlation matrix of the previous state. 

As explained in Appendix~\ref{app:freefermions} (see, e.g., Eqs.~\eqref{eq:FFPTtransform} and~Eq.~\eqref{eq:FFPTmoments}) the matrix $G'$, that is efficiently computable allows one to compute the desired PT moments.

Summarizing, the procedure to compute the required PT moments for the ensemble of fermionic Gaussian states is the following. First, one calculates the correlation matrix $G_0$ corresponding to the vacuum (associated to the state $\ket{0^N}$ in the qubits picture). Second, a ($2N\times 2N$) special orthogonal matrix $R$ is sampled uniformly random according to the unique invariant measure (Haar) of ${\rm SO}(2N)$. Third, the correlation matrix $G=R \cdot G_0 \cdot R^{T}$ is constructed. Fourth, the correlation matrix $G'$, corresponding to the reduced state, is obtained from $G$ by deleting the rows and columns with indices that correspond to the modes in $C$. Finally, one uses the formulas of Appendix~\ref{app:freefermions} (see, e.g., Eqs.~\eqref{eq:FFPTtransform} and Eq.~\eqref{eq:FFPTmoments}) to compute $\tilde{r}_2$.

 In Fig.~\ref{fig:fermions} we show the phase diagram  of $\tilde r_2$ as a function of $N_C$ and $N_A$, with $N_{AB}=32$ averaging over $200$ repetitions in panel a). We observe  qualitative differences with respect to Haar random states.
(i) First, we notice the presence of a region with large $\tilde{r}_2\gg 1$ for $N_C/N \ll 1$.
(ii) Second, when $N_C \gg N_{AB}$ we observe a large region with $\tilde{r}_2\ll 1$. 
Interestingly, $\tilde r_2$ does not converge to a fixed value when $N$ increases (keeping the ratios $N_{A,B,C}/N$ fixed).
This is shown in panel b) for $N_A=N_B$, using different values of $N_C/N=0,1/33,1/17,1/9,1/5,1/3,1/2,2/3$. 
For $N_C\le N/33$, $\tilde r_2 $ increases exponentially with system size. Instead for $N_C> N/33$, we observe that $\tilde r_2$ exponentially approaches $0$.

\begin{figure}
    \centering
    \includegraphics[width = 0.99\linewidth]{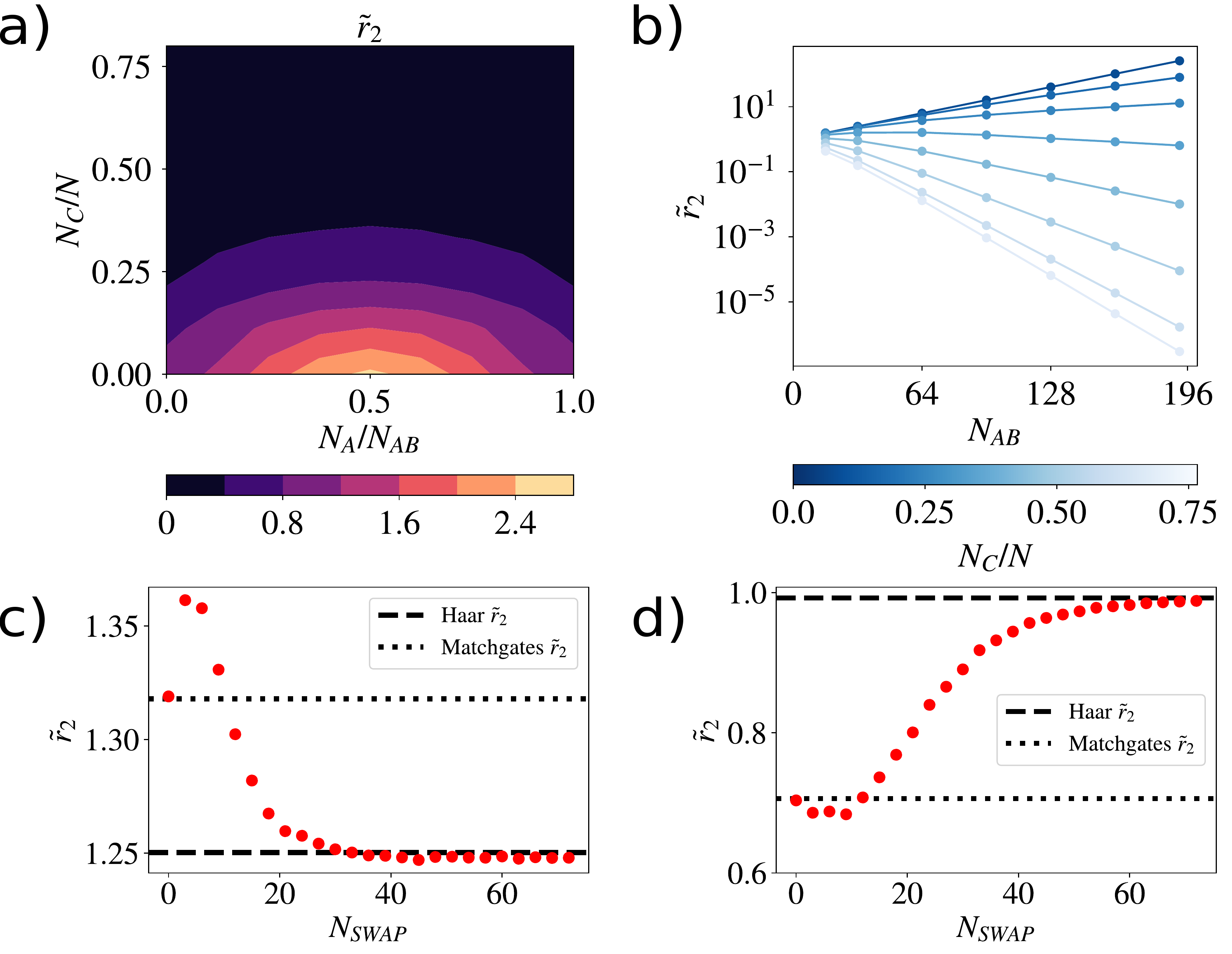}
    \caption{
    a) Phase diagram of the fermionic Gaussian states ensemble as a function of $N_A$ and $N_C$, $N_{AB}=32$ ($N=N_A+N_B+N_C$).
    b) $\tilde r_2$ for fermionic Gaussian states, with $N_A=N_B$, as a function of $N_{AB}$ for different values of $N_C/N$. These results are averaged over 200 random states.
    c) $\tilde r_2$ for a doped MG circuit with $N_A=5, N_B=5, N_C=0$ (corresponding to the ME phase for Haar random states)
    d) Same as c), but for $N_A=4, N_B=2, N_C=14$ (in the PPT phase for Haar random states). 
    Both in c), d), $\tilde r_2$ reveals the transition from Gaussian to Haar random states.
    }\label{fig:fermions}
\end{figure} 
\subsubsection{From Gaussian  to arbitrary states}\label{sec:doping}
Let us consider now nearest-neighbor MGs circuits that are doped with SWAP gates, which make a MG computation universal ~\cite{josza2008mg, hebenstreit2019magic, hebenstreit2020mg}.
By sampling numerically randomly states generated by such (MGs$+$SWAP) circuits, we investigate the transition from  Gaussian fermionic states to random states, as the number $N_{\mathrm{SWAP}}$ of SWAP gates increases. We consider quantum circuits $U$ composed of $3N$ layers each of it consists in the parallel application of $N/2$ (even layer) or $N/2-1$ (odd layer, respectively) nearest-neighbor random two-qubit gates. Among these $3N^2/4+3N(N/2-1)/2$ gates, $N_{\mathrm{SWAP}}$ of them are chosen randomly as SWAP gates, the rest are sampled as random MGs. Therefore the probability to apply a SWAP gate instead of an MG is approximately  $p_{\mathrm{SWAP}} \approx 2N_{\mathrm{SWAP}}/(3N^2)$.

The results for $\tilde r_2$ are shown in Fig.~\ref{fig:fermions} for partitions sizes $N_A,N_B,N_C$ belonging to the ME phases [panel c)], and the PPT phase [panel d)], respectively. In both cases, we observe that for $p_{\mathrm{SWAP}} = 0$ we recover the results of the previous subsection, as we sample approximately random Gaussian states with order $N^2$ MGs. Note that the sampling described in the previous subsection was equivalent to sample from the particularly structured circuits of Ref.~\cite{HNRW22}, where the number of MGs in each circuit was also ${\rm O}(N^2)$. 
As the number of SWAP gates, $N_{\mathrm{SWAP}}$ increases, $\tilde r_2$ converges to the value obtained by Eq.~\eqref{eq:RMTexpression}, indicating the generation of approximate Haar random states. 

\section{Measuring $r_2$ in experiments}\label{sec:measurements}
In this section we address the problem of measuring the ratio $r_2$ in experiments. Being a non-linear functional of the density matrix, $r_2$ cannot be `directly' measured, i.e. as the expectation value of an Hermitian operator. However, one can use approaches based on randomized measurements or physical copies, as we explain below.

For these two approaches, 
an important aspect to have in mind is that, in order to faithfully estimate a ratio of PT moments such as $r_2$, each PT moment must be estimated with a small relative error $\Delta p_n\ll p_n$.

Since $p_n$ is typically an exponentially small number, the determination of $r_2$ via measuring $p_n$ and taking the ratio of these quantities requires a very large number of measurements. 
However, the key features of $r_2$ are already visible for moderate system sizes $N\sim 8$, as shown in the various  numerical examples presented here and in Appendix~\ref{app:measuring}.
 
 \subsection{Randomized measurements}

The idea of randomized measurements consists of using statistical estimators of PT moments based on projective measurements that are performed after random unitary operations~\cite{huang2020shadows,zhou2020single,elben2020mixed,rath2021quantum}. The measurement protocol resembles the one of  quantum state tomography. However, a full quantum state tomography with accuracy $\epsilon$ on the matrix elements of $\rho$ requires at least $N_{\mathrm{tot}}\sim4^{N_{AB}}/\epsilon^2$ measurements \cite{haah2017sample}.
Randomized measurements estimation methods allow us to estimate PT moments with small error $\epsilon$,  it only requires  $N_{\mathrm{tot}}\sim \beta 2^{N_{AB}}/\epsilon^2$, with $\beta$ a prefactor that is state-dependent, and typically decreases with $N$ ~\cite{huang2020shadows,zhou2020single,elben2020mixed,rath2021quantum} (e.g,  $\beta=4p_2$ for  estimating $p_2$).
Due to this `friendly' exponential scaling, the PT moments  $p_2$, $p_3$ have been recently  measured experimentally for systems of up to $7$ qubits~\cite{elben2020mixed} 
(see also Ref.~\cite{rath2022entanglement} for a measurement of a fourth
order polynomial of the density matrix).
In App.~\ref{app:measuring}, we present for completeness a numerical study of statistical errors related to the estimation of $\tilde{r}_2$ with randomized measurements. We find that $\tilde{r_2}$ can be faithfully estimated for $N=8$ for the three entanglement phases with a number of measurements that is compatible with current experimental possibilities.

\subsection{Protocols with multiple copies}
Protocols based on performing measurements on multiple physical copies also allow us access to R\'enyi entropies~\cite{Alves2004,Daley2012,Islam2015,bluvstein2021quantum}, and can be adapted to measure PT moments~\cite{graymachine2018}. The idea is to rewrite PT moments as an expectation value of a permutation operator on the extended state $\rho^{\otimes n}$. While implementing with high-fidelity such a collective measurement on multiple copies can be seen as demanding from a technical point of view, the advantage compared to randomized measurements protools is that the required number of measurements simply scales as \mbox{$N_{\mathrm{tot}}\sim (1-p_n)/\epsilon^2\approx 1/\epsilon^2$} \cite{graymachine2018}.

\section{Entanglement detection via partial transpose moments}\label{sec:detection}

In this section, we mainly consider $r_2$ evaluated for a  single state $\rho$. We will show that the inequality $r_2>1$ detects a special class of entangled states. As this condition can be seen as a sufficient condition for entanglement based on PT moments, we then compare it to another such condition, which involves only the second and the third moment, namely the $p_3$-PPT condition introduced in Ref.~\cite{elben2020mixed}. Furthermore, we introduce the $p_3$ negativity and study this quantity in the context of Haar random states.

\subsection{Detecting entanglement via $r_2$}

As shown in Refs.~\cite{elben2020mixed,neven2021symmetry,yu2021optimal}, PT moments are well suited to detect entanglement and a complete set of inequalities involving PT moments can be derived which are satisfied if and only if the state has a positive partial transpose. Stated differently, any state which violates at least one of the inequalities is necessarily NPT and therefore entangled. Here, we use this insight to show that $r_2$ evaluated on a single state detects entanglement. To stress that we consider here single states, we use the notation $r_2(\rho)$ in the following. Let us now show the following simple  observation 

\begin{Observation} Any bipartite state $\rho$ with  $r_2(\rho)>1$ is entangled. 
\end{Observation}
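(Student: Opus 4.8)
The plan is to prove the contrapositive. Since every separable state is PPT \cite{Pe96,HHH96}, it suffices to establish that any state $\rho$ with positive partial transpose ($\rho^\Gamma\ge 0$) satisfies $r_2(\rho)\le 1$. Then $r_2(\rho)>1$ forces $\rho^\Gamma$ to have at least one negative eigenvalue, so $\rho$ is NPT and hence entangled. First I would pass to the negativity spectrum $\{\lambda_i\}$ of $\rho^\Gamma$, writing each moment as a power sum $p_n=\sum_i\lambda_i^n$, and use the normalization $p_1=\tr\rho^\Gamma=\tr\rho=1$. With this, the target inequality $r_2=p_2p_3/p_4\le 1$ can be written in the scale-homogeneous form $p_1p_4\ge p_2p_3$, i.e.\ $\big(\sum_i\lambda_i\big)\big(\sum_i\lambda_i^4\big)\ge\big(\sum_i\lambda_i^2\big)\big(\sum_i\lambda_i^3\big)$, which I must show holds whenever all $\lambda_i\ge 0$.

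The core step is to prove this inequality by symmetrizing the resulting double sum. Expanding the products as sums over ordered pairs $(i,j)$ and averaging over the exchange $i\leftrightarrow j$ yields the identity
\begin{equation}
p_1p_4-p_2p_3=\frac12\sum_{i,j}(\lambda_i-\lambda_j)^2\,\lambda_i\lambda_j\,(\lambda_i+\lambda_j)\,.
\end{equation}
Once $\lambda_i\ge 0$ for all $i$, every summand is manifestly nonnegative, so the right-hand side is $\ge 0$ and the inequality, hence $r_2\le 1$, follows immediately. An equivalent route would be to observe that the power sums $p_n$ form a log-convex sequence (by Cauchy--Schwarz, $p_a p_b\ge p_{(a+b)/2}^2$), and then invoke the elementary fact that a convex sequence satisfies $p_1p_4\ge p_2p_3$ because the exponent pair $(1,4)$ is more spread out than $(2,3)$ about their common midpoint $5/2$; I would prefer the explicit identity above, since it makes the role of positivity transparent.

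The main point to get right, rather than a genuine obstacle, is verifying the symmetrization identity and isolating exactly where positivity of the spectrum is used: it enters solely through the factor $\lambda_i\lambda_j(\lambda_i+\lambda_j)$. This is precisely the quantity that can become negative when $\rho^\Gamma$ possesses negative eigenvalues, which is the structural reason why $r_2>1$ is able to witness NPT entanglement. I would close by remarking that this places the observation within the hierarchy of PT-moment entanglement criteria of Refs.~\cite{elben2020mixed,neven2021symmetry,yu2021optimal}, and that, consistently with the stabilizer and Haar analyses above, states whose spectrum concentrates at $\lambda_i=\pm\sqrt{p_3}$ saturate the bound with $r_2=1$.
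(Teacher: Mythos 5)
Your proposal is correct and is essentially the paper's own proof: the symmetrization identity you write, $p_1p_4-p_2p_3=\frac12\sum_{i,j}(\lambda_i-\lambda_j)^2\lambda_i\lambda_j(\lambda_i+\lambda_j)$, is exactly the paper's quantity $\alpha_2$ (the $n=2$ case of its $\alpha_n=p_{n+2}p_{n-1}(1-r_n)$), and positivity of the spectrum enters in the same way, through the contrapositive separable $\Rightarrow$ PPT $\Rightarrow r_2\le 1$. The only differences are cosmetic: the paper sets up the identity for general $n$ before specializing, and your log-convexity aside is an additional (also valid) route not taken in the paper.
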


Despite the fact that this observation is a consequence of the subsequent observation, we present here a proof of it, as it illustrates a connection between the negativity spectrum and the condition $r_2(\rho)>1$.

\begin{proof} 

We denote by $\lambda_i$ the eigenvalues of the partial transpose of $\rho$ and define 
\begin{eqnarray}
\alpha_{n} &=& \frac{1}{2}
\sum_{i,j} (\lambda_i\lambda_j)^{n-1}
(\lambda_i-\lambda_j)^2
(\lambda_i+\lambda_j).
\end{eqnarray}
Using the fact that $p_n=\sum_i \lambda_i^n$, we  have 
\begin{eqnarray}
\alpha_{n} &=& 
p_{n+2}p_{n-1}(1-r_n).
\end{eqnarray}
For any separable $\rho$ 
we have $\lambda_i\geq 0$ and therefore $\alpha_n\ge 0$. Hence, 
$\alpha_n<0$ implies that $A$ and $B$ are entangled.
In particular, for $n=2$, and using the fact that \mbox{$p_4p_1=p_4>0$}, we obtain
$r_2>1$ implies that $A$ and $B$ are entangled. 
\end{proof}

In the situation of Haar random states, we see that the value $r_2$ fluctuating around $3/2$ in the entanglement saturation phase is an evidence of mixed-state entanglement. However, in the maximally entangled phase we have $r_2$ of order $1$. 
Clearly, the condition above is not necessary for entanglement. In fact, as we will show next, the $p_3$--PPT condition, i.e. $p_3>p_2^2$, is strictly stronger than the condition $r_2>1$, as stated in the following observation. 

\begin{Observation} For any bipartite state $\rho$ with  $r_2(\rho)>1$ it holds that the entanglement contained in the state is detected by the $p_3$--PPT condition. 
\end{Observation}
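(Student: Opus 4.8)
The plan is to reduce the statement to the elementary algebraic implication $r_2(\rho)>1 \Rightarrow p_3 < p_2^2$. By the discussion around Eq.~\eqref{eq:negp3}, the $p_3$-PPT condition flags $\rho$ as entangled exactly when $p_3 < p_2^2$ (this is the violation of the inequality $p_3 \ge p_2^2$ that every PPT state obeys), so proving this implication is the same as proving the Observation. Throughout I would work with the real eigenvalues $\{\lambda_i\}$ of $\rho^\Gamma$, using $p_n = \sum_i \lambda_i^n$ and the normalization $p_1 = \tr\rho^\Gamma = 1$.

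First I would dispose of the sign bookkeeping. Since $\rho^\Gamma$ is a nonzero Hermitian operator, $p_2 = \sum_i \lambda_i^2 > 0$ and $p_4 = \sum_i \lambda_i^4 > 0$. The hypothesis $r_2 = p_2 p_3/p_4 > 1$ therefore reads $p_2 p_3 > p_4 > 0$, which forces $p_3 > 0$; in particular the degenerate case $p_3 \le 0$ never arises under the hypothesis and needs no separate treatment.

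The single nontrivial input is the Cauchy--Schwarz inequality applied to the vectors $(\lambda_i)_i$ and $(\lambda_i^2)_i$, which holds for arbitrary real numbers with no positivity assumption: $p_3^2 = (\sum_i \lambda_i\,\lambda_i^2)^2 \le (\sum_i \lambda_i^2)(\sum_i \lambda_i^4) = p_2 p_4$. Chaining this with the hypothesis $p_4 < p_2 p_3$ gives $p_3^2 \le p_2 p_4 < p_2(p_2 p_3) = p_2^2 p_3$, and dividing by $p_3 > 0$ yields exactly $p_3 < p_2^2$, i.e. the $p_3$-PPT condition detects the entanglement.

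I do not expect a genuine obstacle here, as the argument is essentially a one-line application of Cauchy--Schwarz; the only point requiring care is establishing $p_3 > 0$ before dividing. It is worth noting, looking ahead, that the same chain of inequalities shows the containment $\{r_2 > 1\} \subseteq \{p_3 < p_2^2\}$ is strict --- one can exhibit NPT spectra with $p_3 < p_2^2$ yet $r_2 \le 1$ --- which substantiates the claim in the surrounding text that the $p_3$-PPT condition is strictly stronger than $r_2 > 1$ as an entanglement criterion.
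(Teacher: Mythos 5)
Your proof is correct and takes essentially the same route as the paper's: both arguments hinge on the single moment inequality $p_3^2 \le p_2 p_4$, which the paper imports as Lemma~1 of Ref.~\cite{yu2021optimal} and you rederive via Cauchy--Schwarz applied to $(\lambda_i)$ and $(\lambda_i^2)$, followed by the same algebraic chain. The only cosmetic difference is logical direction --- you argue directly from $r_2>1$ (extracting $p_3>0$ from the hypothesis before dividing), whereas the paper proves the contrapositive, deducing $r_2\le 1$ from $p_3\ge p_2^2$.
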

\begin{proof} 
It is easy to show (see Lemma 1 of Ref.~\cite{yu2021optimal}) that for any state $\rho$ it holds that 
\begin{equation}
    p_2p_4\ge p_3^2\label{eq:Hamburger},
\end{equation}

Using this Lemma, we will show now that if $\rho$ satisfies the $p_3$-PPT condition, i.e. if  $p_3\ge p_2^2$ then $r_2\le 1$. Multiplying the left and right hand side of these two inequalities respectively and dividing by the strictly positive number $p_2$,  we obtain 

\begin{equation}
    p_4p_3\ge p_3^2p_2.
\end{equation}

Due to the prerequisite $p_3\ge p_2^2$, we have that 
$p_3>0$. Hence, after dividing the inequality above by $p_3p_4>0$, we obtain 
$r_2=p_2p_3/p_4\le 1$. This shows that if $r_2>1$, then $p_3<p_2^2$. 
\end{proof}

\subsection{Introducing the $p_3$-negativity}
\label{sec:p3neg}

Finally we investigate here to which extent the $p_3$-PPT condition can be used to detect entanglement for random states. To this end we  find it instructive to introduce the `$p_3-$negativity' 
\begin{equation}
    \mathcal{E}_3(\rho)=\frac{1}{2}\log_2(p_2^2/p_3).
\end{equation}
Note that the $p_3$-PPT condition is equivalent to the condition $\mathcal{E}_3(\rho)<0$. 
In addition, for stabilizer states (see Sec.~\ref{sec:stabilizers}), we showed that  $\mathcal{E}_3=\mathcal{E}$. For random states we also define the quantity
$\tilde{\mathcal{E}}_3=\frac{1}{2}\log_2(\mathbb{E}[p_2]^2/\mathbb{E}[p_3])$ obtained after averaging the PT moments, and which can be thus calculated analytically.

As shown in App.~\ref{app:p3neg}, for such random states 
the value of $\tilde{\mathcal{E}}_3$ closely resembles the one of the average negativity  $\mathbb{E}[\mathcal{E}](\rho)$. In particular, while $r_2$ does not differentiate between the PPT phase and the maximally entangled phase ($r_2=1$ in both phases), we have $\tilde{\mathcal{E}}_3(\rho)\approx 0$ in the PPT phase, and $\tilde{\mathcal{E}}_3(\rho)\sim  \min(N_A,N_B)$ in the maximally entangled phase. Thus $\tilde{\mathcal{E}}_3$ can be used to distinguish these two phases. 

As a final remark, for all the random induced mixed states that we have considered, cf details on the numerical simulations in App.~\ref{app:p3neg}, we have observed that the following inequality holds $\mathcal{E}_3(\rho)\le \mathcal{E}(\rho)$. The question of whether the $p_3$-negativity can be proven to be a lower bound to the negativity for any quantum state is left for further work.

\section{Conclusion}
\label{sec:conclusion}
The ratio $r_2$ (and $\tilde{r}_2$) provides a tool to study the entanglement of mixed states, from only the first four moment of the partial transpose. It can be computed numerically and for small system sizes measured experimentally to probe the entanglement phase diagram of random states~\cite{shapourian2021entanglement}, and identify sharp differences compared to Clifford, MPS, Gaussian fermionic states. The value of $r_2$ reflects in particular universal  properties of mixed-state entanglement, in relation to the negativity spectrum.

These results raise interesting prospects regarding the dynamics of quantum circuits, where entanglement grows as a consequence of unitary time evolution, but is also affected by decoherence and or measurements~\cite{fisher2022random}. In this context, it will be in particular important to understand how PT moments reveal the emergence of Haar random states in random quantum circuits, in comparison e.g., with random Clifford  circuits.

Another interesting outlook for our work could be to discover other types of dimensionless ratios, which can tell us about entanglement beyond the PPT condition, for instance in relation to the realignment criterion~\cite{rudolph2000separability,ChenCCNR2003,rudolph2005further}.

\section{Acknowledgements}
We thank A. Rath, C. Lancien, R. Kueng for useful discussions.
Work in Grenoble is funded by the French National Resarch Agency via the JCJC project QRand (ANR-20-CE47-0005), and via the France 2030 programs EPIQ (ANR-22-PETQ-0007), and QUBITAF (ANR-22-PETQ-0004).
B.V., P.Z., and M.V. acknowledge funding from the Austrian Science Foundation (FWF, P 32597 N). 
J.C. and B.K. are grateful for the support
of the Austrian Science Fund (FWF): stand alone
project P32273-N27 and the SFB BeyondC F 7107-N38. The work of V.V. was partly
supported by the ERC under grant number 758329
(AGEnTh), and by the MIUR Programme FARE (MEPH).
\bibliography{biblioRandom.bib}

\begin{thebibliography}{94}%
\makeatletter
\providecommand \@ifxundefined [1]{%
 \@ifx{#1\undefined}
}%
\providecommand \@ifnum [1]{%
 \ifnum #1\expandafter \@firstoftwo
 \else \expandafter \@secondoftwo
 \fi
}%
\providecommand \@ifx [1]{%
 \ifx #1\expandafter \@firstoftwo
 \else \expandafter \@secondoftwo
 \fi
}%
\providecommand \natexlab [1]{#1}%
\providecommand \enquote  [1]{``#1''}%
\providecommand \bibnamefont  [1]{#1}%
\providecommand \bibfnamefont [1]{#1}%
\providecommand \citenamefont [1]{#1}%
\providecommand \href@noop [0]{\@secondoftwo}%
\providecommand \href [0]{\begingroup \@sanitize@url \@href}%
\providecommand \@href[1]{\@@startlink{#1}\@@href}%
\providecommand \@@href[1]{\endgroup#1\@@endlink}%
\providecommand \@sanitize@url [0]{\catcode `\\12\catcode `\$12\catcode
  `\&12\catcode `\#12\catcode `\^12\catcode `\_12\catcode `\%12\relax}%
\providecommand \@@startlink[1]{}%
\providecommand \@@endlink[0]{}%
\providecommand \url  [0]{\begingroup\@sanitize@url \@url }%
\providecommand \@url [1]{\endgroup\@href {#1}{\urlprefix }}%
\providecommand \urlprefix  [0]{URL }%
\providecommand \Eprint [0]{\href }%
\providecommand \doibase [0]{https://doi.org/}%
\providecommand \selectlanguage [0]{\@gobble}%
\providecommand \bibinfo  [0]{\@secondoftwo}%
\providecommand \bibfield  [0]{\@secondoftwo}%
\providecommand \translation [1]{[#1]}%
\providecommand \BibitemOpen [0]{}%
\providecommand \bibitemStop [0]{}%
\providecommand \bibitemNoStop [0]{.\EOS\space}%
\providecommand \EOS [0]{\spacefactor3000\relax}%
\providecommand \BibitemShut  [1]{\csname bibitem#1\endcsname}%
\let\auto@bib@innerbib\@empty
\bibitem [{\citenamefont {Altman}\ \emph {et~al.}(2021)\citenamefont {Altman},
  \citenamefont {Brown}, \citenamefont {Carleo}, \citenamefont {Carr},
  \citenamefont {Demler}, \citenamefont {Chin}, \citenamefont {DeMarco},
  \citenamefont {Economou}, \citenamefont {Eriksson}, \citenamefont {Fu},
  \citenamefont {Greiner}, \citenamefont {Hazzard}, \citenamefont {Hulet},
  \citenamefont {Koll\'ar}, \citenamefont {Lev}, \citenamefont {Lukin},
  \citenamefont {Ma}, \citenamefont {Mi}, \citenamefont {Misra}, \citenamefont
  {Monroe}, \citenamefont {Murch}, \citenamefont {Nazario}, \citenamefont {Ni},
  \citenamefont {Potter}, \citenamefont {Roushan}, \citenamefont {Saffman},
  \citenamefont {Schleier-Smith}, \citenamefont {Siddiqi}, \citenamefont
  {Simmonds}, \citenamefont {Singh}, \citenamefont {Spielman}, \citenamefont
  {Temme}, \citenamefont {Weiss}, \citenamefont {Vu\ifmmode \check{c}\else
  \v{c}\fi{}kovi\ifmmode~\acute{c}\else \'{c}\fi{}}, \citenamefont
  {Vuleti\ifmmode~\acute{c}\else \'{c}\fi{}}, \citenamefont {Ye},\ and\
  \citenamefont {Zwierlein}}]{Altman2021qsim}%
  \BibitemOpen
  \bibfield  {author} {\bibinfo {author} {\bibfnamefont {E.}~\bibnamefont
  {Altman}}, \bibinfo {author} {\bibfnamefont {K.~R.}\ \bibnamefont {Brown}},
  \bibinfo {author} {\bibfnamefont {G.}~\bibnamefont {Carleo}}, \bibinfo
  {author} {\bibfnamefont {L.~D.}\ \bibnamefont {Carr}}, \bibinfo {author}
  {\bibfnamefont {E.}~\bibnamefont {Demler}}, \bibinfo {author} {\bibfnamefont
  {C.}~\bibnamefont {Chin}}, \bibinfo {author} {\bibfnamefont {B.}~\bibnamefont
  {DeMarco}}, \bibinfo {author} {\bibfnamefont {S.~E.}\ \bibnamefont
  {Economou}}, \bibinfo {author} {\bibfnamefont {M.~A.}\ \bibnamefont
  {Eriksson}}, \bibinfo {author} {\bibfnamefont {K.-M.~C.}\ \bibnamefont {Fu}},
  \bibinfo {author} {\bibfnamefont {M.}~\bibnamefont {Greiner}}, \bibinfo
  {author} {\bibfnamefont {K.~R.}\ \bibnamefont {Hazzard}}, \bibinfo {author}
  {\bibfnamefont {R.~G.}\ \bibnamefont {Hulet}}, \bibinfo {author}
  {\bibfnamefont {A.~J.}\ \bibnamefont {Koll\'ar}}, \bibinfo {author}
  {\bibfnamefont {B.~L.}\ \bibnamefont {Lev}}, \bibinfo {author} {\bibfnamefont
  {M.~D.}\ \bibnamefont {Lukin}}, \bibinfo {author} {\bibfnamefont
  {R.}~\bibnamefont {Ma}}, \bibinfo {author} {\bibfnamefont {X.}~\bibnamefont
  {Mi}}, \bibinfo {author} {\bibfnamefont {S.}~\bibnamefont {Misra}}, \bibinfo
  {author} {\bibfnamefont {C.}~\bibnamefont {Monroe}}, \bibinfo {author}
  {\bibfnamefont {K.}~\bibnamefont {Murch}}, \bibinfo {author} {\bibfnamefont
  {Z.}~\bibnamefont {Nazario}}, \bibinfo {author} {\bibfnamefont {K.-K.}\
  \bibnamefont {Ni}}, \bibinfo {author} {\bibfnamefont {A.~C.}\ \bibnamefont
  {Potter}}, \bibinfo {author} {\bibfnamefont {P.}~\bibnamefont {Roushan}},
  \bibinfo {author} {\bibfnamefont {M.}~\bibnamefont {Saffman}}, \bibinfo
  {author} {\bibfnamefont {M.}~\bibnamefont {Schleier-Smith}}, \bibinfo
  {author} {\bibfnamefont {I.}~\bibnamefont {Siddiqi}}, \bibinfo {author}
  {\bibfnamefont {R.}~\bibnamefont {Simmonds}}, \bibinfo {author}
  {\bibfnamefont {M.}~\bibnamefont {Singh}}, \bibinfo {author} {\bibfnamefont
  {I.}~\bibnamefont {Spielman}}, \bibinfo {author} {\bibfnamefont
  {K.}~\bibnamefont {Temme}}, \bibinfo {author} {\bibfnamefont {D.~S.}\
  \bibnamefont {Weiss}}, \bibinfo {author} {\bibfnamefont {J.}~\bibnamefont
  {Vu\ifmmode \check{c}\else \v{c}\fi{}kovi\ifmmode~\acute{c}\else
  \'{c}\fi{}}}, \bibinfo {author} {\bibfnamefont {V.}~\bibnamefont
  {Vuleti\ifmmode~\acute{c}\else \'{c}\fi{}}}, \bibinfo {author} {\bibfnamefont
  {J.}~\bibnamefont {Ye}},\ and\ \bibinfo {author} {\bibfnamefont
  {M.}~\bibnamefont {Zwierlein}},\ }\bibfield  {title} {\bibinfo {title}
  {Quantum simulators: Architectures and opportunities},\ }\href
  {https://doi.org/10.1103/PRXQuantum.2.017003} {\bibfield  {journal} {\bibinfo
   {journal} {PRX Quantum}\ }\textbf {\bibinfo {volume} {2}},\ \bibinfo {pages}
  {017003} (\bibinfo {year} {2021})}\BibitemShut {NoStop}%
\bibitem [{\citenamefont {Shapourian}\ \emph {et~al.}(2021)\citenamefont
  {Shapourian}, \citenamefont {Liu}, \citenamefont {Kudler-Flam},\ and\
  \citenamefont {Vishwanath}}]{shapourian2021entanglement}%
  \BibitemOpen
  \bibfield  {author} {\bibinfo {author} {\bibfnamefont {H.}~\bibnamefont
  {Shapourian}}, \bibinfo {author} {\bibfnamefont {S.}~\bibnamefont {Liu}},
  \bibinfo {author} {\bibfnamefont {J.}~\bibnamefont {Kudler-Flam}},\ and\
  \bibinfo {author} {\bibfnamefont {A.}~\bibnamefont {Vishwanath}},\ }\bibfield
   {title} {\bibinfo {title} {Entanglement negativity spectrum of random mixed
  states: A diagrammatic approach},\ }\href
  {https://doi.org/10.1103/PRXQuantum.2.030347} {\bibfield  {journal} {\bibinfo
   {journal} {PRX Quantum}\ }\textbf {\bibinfo {volume} {2}},\ \bibinfo {pages}
  {030347} (\bibinfo {year} {2021})}\BibitemShut {NoStop}%
\bibitem [{\citenamefont {Aubrun}(2010)}]{Au11}%
  \BibitemOpen
  \bibfield  {author} {\bibinfo {author} {\bibfnamefont {G.}~\bibnamefont
  {Aubrun}},\ }\href@noop {} {\bibinfo {title} {Partial transposition of random
  states and non-centered semicircular distributions}} (\bibinfo {year}
  {2010}),\ \Eprint {https://arxiv.org/abs/1011.0275} {arXiv:1011.0275}
  \BibitemShut {NoStop}%
\bibitem [{\citenamefont {Aubrun}\ \emph {et~al.}(2012)\citenamefont {Aubrun},
  \citenamefont {Szarek},\ and\ \citenamefont {Ye}}]{Au12}%
  \BibitemOpen
  \bibfield  {author} {\bibinfo {author} {\bibfnamefont {G.}~\bibnamefont
  {Aubrun}}, \bibinfo {author} {\bibfnamefont {S.~J.}\ \bibnamefont {Szarek}},\
  and\ \bibinfo {author} {\bibfnamefont {D.}~\bibnamefont {Ye}},\ }\bibfield
  {title} {\bibinfo {title} {Phase transitions for random states and a
  semicircle law for the partial transpose},\ }\href
  {https://doi.org/10.1103/PhysRevA.85.030302} {\bibfield  {journal} {\bibinfo
  {journal} {Phys. Rev. A}\ }\textbf {\bibinfo {volume} {85}},\ \bibinfo
  {pages} {030302} (\bibinfo {year} {2012})}\BibitemShut {NoStop}%
\bibitem [{\citenamefont {Aubrun}\ \emph {et~al.}(2013)\citenamefont {Aubrun},
  \citenamefont {Szarek},\ and\ \citenamefont {Ye}}]{Au13}%
  \BibitemOpen
  \bibfield  {author} {\bibinfo {author} {\bibfnamefont {G.}~\bibnamefont
  {Aubrun}}, \bibinfo {author} {\bibfnamefont {S.~J.}\ \bibnamefont {Szarek}},\
  and\ \bibinfo {author} {\bibfnamefont {D.}~\bibnamefont {Ye}},\ }\bibfield
  {title} {\bibinfo {title} {Entanglement thresholds for random induced
  states},\ }\href {https://doi.org/10.1002/cpa.21460} {\bibfield  {journal}
  {\bibinfo  {journal} {Communications on Pure and Applied Mathematics}\
  }\textbf {\bibinfo {volume} {67}},\ \bibinfo {pages} {129} (\bibinfo {year}
  {2013})}\BibitemShut {NoStop}%
\bibitem [{\citenamefont {Fisher}\ \emph {et~al.}(2022)\citenamefont {Fisher},
  \citenamefont {Khemani}, \citenamefont {Nahum},\ and\ \citenamefont
  {Vijay}}]{fisher2022random}%
  \BibitemOpen
  \bibfield  {author} {\bibinfo {author} {\bibfnamefont {M.~P.~A.}\
  \bibnamefont {Fisher}}, \bibinfo {author} {\bibfnamefont {V.}~\bibnamefont
  {Khemani}}, \bibinfo {author} {\bibfnamefont {A.}~\bibnamefont {Nahum}},\
  and\ \bibinfo {author} {\bibfnamefont {S.}~\bibnamefont {Vijay}},\
  }\href@noop {} {\bibinfo {title} {Random quantum circuits}} (\bibinfo {year}
  {2022}),\ \Eprint {https://arxiv.org/abs/2207.14280} {arXiv:2207.14280}
  \BibitemShut {NoStop}%
\bibitem [{\citenamefont {Potter}\ and\ \citenamefont
  {Vasseur}(2022)}]{Potter_2022}%
  \BibitemOpen
  \bibfield  {author} {\bibinfo {author} {\bibfnamefont {A.~C.}\ \bibnamefont
  {Potter}}\ and\ \bibinfo {author} {\bibfnamefont {R.}~\bibnamefont
  {Vasseur}},\ }\bibfield  {title} {\bibinfo {title} {Entanglement dynamics in
  hybrid quantum circuits},\ }in\ \href
  {https://doi.org/10.1007/978-3-031-03998-0_9} {\emph {\bibinfo {booktitle}
  {Quantum Science and Technology}}}\ (\bibinfo  {publisher} {Springer
  International Publishing},\ \bibinfo {year} {2022})\ pp.\ \bibinfo {pages}
  {211--249}\BibitemShut {NoStop}%
\bibitem [{\citenamefont {Peres}(1996)}]{Pe96}%
  \BibitemOpen
  \bibfield  {author} {\bibinfo {author} {\bibfnamefont {A.}~\bibnamefont
  {Peres}},\ }\bibfield  {title} {\bibinfo {title} {Separability criterion for
  density matrices},\ }\href {https://doi.org/10.1103/PhysRevLett.77.1413}
  {\bibfield  {journal} {\bibinfo  {journal} {Phys. Rev. Lett.}\ }\textbf
  {\bibinfo {volume} {77}},\ \bibinfo {pages} {1413} (\bibinfo {year}
  {1996})}\BibitemShut {NoStop}%
\bibitem [{\citenamefont {Horodecki}(2003)}]{Horodecki2003}%
  \BibitemOpen
  \bibfield  {author} {\bibinfo {author} {\bibfnamefont {P.}~\bibnamefont
  {Horodecki}},\ }\bibfield  {title} {\bibinfo {title} {Measuring quantum
  entanglement without prior state reconstruction},\ }\href
  {https://doi.org/10.1103/PhysRevLett.90.167901} {\bibfield  {journal}
  {\bibinfo  {journal} {Phys. Rev. Lett.}\ }\textbf {\bibinfo {volume} {90}},\
  \bibinfo {pages} {167901} (\bibinfo {year} {2003})}\BibitemShut {NoStop}%
\bibitem [{\citenamefont {Calabrese}\ \emph {et~al.}(2012)\citenamefont
  {Calabrese}, \citenamefont {Cardy},\ and\ \citenamefont
  {Tonni}}]{calabresenegativity2012}%
  \BibitemOpen
  \bibfield  {author} {\bibinfo {author} {\bibfnamefont {P.}~\bibnamefont
  {Calabrese}}, \bibinfo {author} {\bibfnamefont {J.}~\bibnamefont {Cardy}},\
  and\ \bibinfo {author} {\bibfnamefont {E.}~\bibnamefont {Tonni}},\ }\bibfield
   {title} {\bibinfo {title} {Entanglement negativity in quantum field
  theory},\ }\href {https://doi.org/10.1103/PhysRevLett.109.130502} {\bibfield
  {journal} {\bibinfo  {journal} {Phys. Rev. Lett.}\ }\textbf {\bibinfo
  {volume} {109}},\ \bibinfo {pages} {130502} (\bibinfo {year}
  {2012})}\BibitemShut {NoStop}%
\bibitem [{\citenamefont {Elben}\ \emph {et~al.}(2020)\citenamefont {Elben},
  \citenamefont {Kueng}, \citenamefont {Huang}, \citenamefont {van Bijnen},
  \citenamefont {Kokail}, \citenamefont {Dalmonte}, \citenamefont {Calabrese},
  \citenamefont {Kraus}, \citenamefont {Preskill}, \citenamefont {Zoller},\
  and\ \citenamefont {Vermersch}}]{elben2020mixed}%
  \BibitemOpen
  \bibfield  {author} {\bibinfo {author} {\bibfnamefont {A.}~\bibnamefont
  {Elben}}, \bibinfo {author} {\bibfnamefont {R.}~\bibnamefont {Kueng}},
  \bibinfo {author} {\bibfnamefont {H.-Y.~R.}\ \bibnamefont {Huang}}, \bibinfo
  {author} {\bibfnamefont {R.}~\bibnamefont {van Bijnen}}, \bibinfo {author}
  {\bibfnamefont {C.}~\bibnamefont {Kokail}}, \bibinfo {author} {\bibfnamefont
  {M.}~\bibnamefont {Dalmonte}}, \bibinfo {author} {\bibfnamefont
  {P.}~\bibnamefont {Calabrese}}, \bibinfo {author} {\bibfnamefont
  {B.}~\bibnamefont {Kraus}}, \bibinfo {author} {\bibfnamefont
  {J.}~\bibnamefont {Preskill}}, \bibinfo {author} {\bibfnamefont
  {P.}~\bibnamefont {Zoller}},\ and\ \bibinfo {author} {\bibfnamefont
  {B.}~\bibnamefont {Vermersch}},\ }\bibfield  {title} {\bibinfo {title}
  {Mixed-state entanglement from local randomized measurements},\ }\href
  {https://doi.org/10.1103/PhysRevLett.125.200501} {\bibfield  {journal}
  {\bibinfo  {journal} {Phys. Rev. Lett.}\ }\textbf {\bibinfo {volume} {125}},\
  \bibinfo {pages} {200501} (\bibinfo {year} {2020})}\BibitemShut {NoStop}%
\bibitem [{\citenamefont {Neven}\ \emph {et~al.}(2021)\citenamefont {Neven},
  \citenamefont {Carrasco}, \citenamefont {Vitale}, \citenamefont {Kokail},
  \citenamefont {Elben}, \citenamefont {Dalmonte}, \citenamefont {Calabrese},
  \citenamefont {Zoller}, \citenamefont {Vermersch}, \citenamefont {Kueng},\
  and\ \citenamefont {Kraus}}]{neven2021symmetry}%
  \BibitemOpen
  \bibfield  {author} {\bibinfo {author} {\bibfnamefont {A.}~\bibnamefont
  {Neven}}, \bibinfo {author} {\bibfnamefont {J.}~\bibnamefont {Carrasco}},
  \bibinfo {author} {\bibfnamefont {V.}~\bibnamefont {Vitale}}, \bibinfo
  {author} {\bibfnamefont {C.}~\bibnamefont {Kokail}}, \bibinfo {author}
  {\bibfnamefont {A.}~\bibnamefont {Elben}}, \bibinfo {author} {\bibfnamefont
  {M.}~\bibnamefont {Dalmonte}}, \bibinfo {author} {\bibfnamefont
  {P.}~\bibnamefont {Calabrese}}, \bibinfo {author} {\bibfnamefont
  {P.}~\bibnamefont {Zoller}}, \bibinfo {author} {\bibfnamefont
  {B.}~\bibnamefont {Vermersch}}, \bibinfo {author} {\bibfnamefont
  {R.}~\bibnamefont {Kueng}},\ and\ \bibinfo {author} {\bibfnamefont
  {B.}~\bibnamefont {Kraus}},\ }\bibfield  {title} {\bibinfo {title}
  {{Symmetry-resolved entanglement detection using partial transpose
  moments}},\ }\href {https://doi.org/10.1038/s41534-021-00487-y} {\bibfield
  {journal} {\bibinfo  {journal} {npj Quantum Information}\ }\textbf {\bibinfo
  {volume} {7}},\ \bibinfo {pages} {152} (\bibinfo {year} {2021})}\BibitemShut
  {NoStop}%
\bibitem [{\citenamefont {Wybo}\ \emph {et~al.}(2020)\citenamefont {Wybo},
  \citenamefont {Knap},\ and\ \citenamefont {Pollmann}}]{wybo2020mbl}%
  \BibitemOpen
  \bibfield  {author} {\bibinfo {author} {\bibfnamefont {E.}~\bibnamefont
  {Wybo}}, \bibinfo {author} {\bibfnamefont {M.}~\bibnamefont {Knap}},\ and\
  \bibinfo {author} {\bibfnamefont {F.}~\bibnamefont {Pollmann}},\ }\bibfield
  {title} {\bibinfo {title} {Entanglement dynamics of a many-body localized
  system coupled to a bath},\ }\bibfield  {journal} {\bibinfo  {journal}
  {Physical Review B}\ }\textbf {\bibinfo {volume} {102}},\ \href
  {https://doi.org/10.1103/physrevb.102.064304} {10.1103/physrevb.102.064304}
  (\bibinfo {year} {2020})\BibitemShut {NoStop}%
\bibitem [{\citenamefont {Wu}\ \emph {et~al.}(2020)\citenamefont {Wu},
  \citenamefont {Lu}, \citenamefont {Chung}, \citenamefont {Kao},\ and\
  \citenamefont {Grover}}]{wu2020montecarlofiniteT}%
  \BibitemOpen
  \bibfield  {author} {\bibinfo {author} {\bibfnamefont {K.-H.}\ \bibnamefont
  {Wu}}, \bibinfo {author} {\bibfnamefont {T.-C.}\ \bibnamefont {Lu}}, \bibinfo
  {author} {\bibfnamefont {C.-M.}\ \bibnamefont {Chung}}, \bibinfo {author}
  {\bibfnamefont {Y.-J.}\ \bibnamefont {Kao}},\ and\ \bibinfo {author}
  {\bibfnamefont {T.}~\bibnamefont {Grover}},\ }\bibfield  {title} {\bibinfo
  {title} {Entanglement renyi negativity across a finite temperature
  transition: A monte carlo study},\ }\href
  {https://doi.org/10.1103/PhysRevLett.125.140603} {\bibfield  {journal}
  {\bibinfo  {journal} {Phys. Rev. Lett.}\ }\textbf {\bibinfo {volume} {125}},\
  \bibinfo {pages} {140603} (\bibinfo {year} {2020})}\BibitemShut {NoStop}%
\bibitem [{\citenamefont {Feldman}\ \emph {et~al.}(2022)\citenamefont
  {Feldman}, \citenamefont {Kshetrimayum}, \citenamefont {Eisert},\ and\
  \citenamefont {Goldstein}}]{feldman2022entanglement}%
  \BibitemOpen
  \bibfield  {author} {\bibinfo {author} {\bibfnamefont {N.}~\bibnamefont
  {Feldman}}, \bibinfo {author} {\bibfnamefont {A.}~\bibnamefont
  {Kshetrimayum}}, \bibinfo {author} {\bibfnamefont {J.}~\bibnamefont
  {Eisert}},\ and\ \bibinfo {author} {\bibfnamefont {M.}~\bibnamefont
  {Goldstein}},\ }\bibfield  {title} {\bibinfo {title} {Entanglement estimation
  in tensor network states via sampling},\ }\href
  {https://doi.org/10.1103/PRXQuantum.3.030312} {\bibfield  {journal} {\bibinfo
   {journal} {PRX Quantum}\ }\textbf {\bibinfo {volume} {3}},\ \bibinfo {pages}
  {030312} (\bibinfo {year} {2022})}\BibitemShut {NoStop}%
\bibitem [{\citenamefont {Zhou}\ \emph {et~al.}(2020)\citenamefont {Zhou},
  \citenamefont {Zeng},\ and\ \citenamefont {Liu}}]{zhou2020single}%
  \BibitemOpen
  \bibfield  {author} {\bibinfo {author} {\bibfnamefont {Y.}~\bibnamefont
  {Zhou}}, \bibinfo {author} {\bibfnamefont {P.}~\bibnamefont {Zeng}},\ and\
  \bibinfo {author} {\bibfnamefont {Z.}~\bibnamefont {Liu}},\ }\bibfield
  {title} {\bibinfo {title} {Single-copies estimation of entanglement
  negativity},\ }\href {https://doi.org/10.1103/PhysRevLett.125.200502}
  {\bibfield  {journal} {\bibinfo  {journal} {Phys. Rev. Lett.}\ }\textbf
  {\bibinfo {volume} {125}},\ \bibinfo {pages} {200502} (\bibinfo {year}
  {2020})}\BibitemShut {NoStop}%
\bibitem [{\citenamefont {Elben}\ \emph {et~al.}(2022)\citenamefont {Elben},
  \citenamefont {Flammia}, \citenamefont {Huang}, \citenamefont {Kueng},
  \citenamefont {Preskill}, \citenamefont {Vermersch},\ and\ \citenamefont
  {Zoller}}]{elben2022therandomized}%
  \BibitemOpen
  \bibfield  {author} {\bibinfo {author} {\bibfnamefont {A.}~\bibnamefont
  {Elben}}, \bibinfo {author} {\bibfnamefont {S.~T.}\ \bibnamefont {Flammia}},
  \bibinfo {author} {\bibfnamefont {H.-Y.}\ \bibnamefont {Huang}}, \bibinfo
  {author} {\bibfnamefont {R.}~\bibnamefont {Kueng}}, \bibinfo {author}
  {\bibfnamefont {J.}~\bibnamefont {Preskill}}, \bibinfo {author}
  {\bibfnamefont {B.}~\bibnamefont {Vermersch}},\ and\ \bibinfo {author}
  {\bibfnamefont {P.}~\bibnamefont {Zoller}},\ }\bibfield  {title} {\bibinfo
  {title} {The randomized measurement toolbox},\ }\href
  {https://doi.org/10.1038/s42254-022-00535-2} {\bibfield  {journal} {\bibinfo
  {journal} {Nature Reviews Physics}\ } (\bibinfo {year} {2022})}\BibitemShut
  {NoStop}%
\bibitem [{\citenamefont {Bravyi}\ \emph {et~al.}(2006)\citenamefont {Bravyi},
  \citenamefont {Fattal},\ and\ \citenamefont {Gottesman}}]{bravyi2006ghz}%
  \BibitemOpen
  \bibfield  {author} {\bibinfo {author} {\bibfnamefont {S.}~\bibnamefont
  {Bravyi}}, \bibinfo {author} {\bibfnamefont {D.}~\bibnamefont {Fattal}},\
  and\ \bibinfo {author} {\bibfnamefont {D.}~\bibnamefont {Gottesman}},\
  }\bibfield  {title} {\bibinfo {title} {Ghz extraction yield for multipartite
  stabilizer states},\ }\href {https://doi.org/10.1063/1.2203431} {\bibfield
  {journal} {\bibinfo  {journal} {Journal of Mathematical Physics}\ }\textbf
  {\bibinfo {volume} {47}},\ \bibinfo {pages} {062106} (\bibinfo {year}
  {2006})}\BibitemShut {NoStop}%
\bibitem [{\citenamefont {Horodecki}\ \emph {et~al.}(1996)\citenamefont
  {Horodecki}, \citenamefont {Horodecki},\ and\ \citenamefont
  {Horodecki}}]{HHH96}%
  \BibitemOpen
  \bibfield  {author} {\bibinfo {author} {\bibfnamefont {M.}~\bibnamefont
  {Horodecki}}, \bibinfo {author} {\bibfnamefont {P.}~\bibnamefont
  {Horodecki}},\ and\ \bibinfo {author} {\bibfnamefont {R.}~\bibnamefont
  {Horodecki}},\ }\bibfield  {title} {\bibinfo {title} {Separability of mixed
  states: necessary and sufficient conditions},\ }\href
  {https://doi.org/https://doi.org/10.1016/S0375-9601(96)00706-2} {\bibfield
  {journal} {\bibinfo  {journal} {Physics Letters A}\ }\textbf {\bibinfo
  {volume} {223}},\ \bibinfo {pages} {1} (\bibinfo {year} {1996})}\BibitemShut
  {NoStop}%
\bibitem [{\citenamefont {Vidal}\ and\ \citenamefont {Werner}(2002)}]{VW02}%
  \BibitemOpen
  \bibfield  {author} {\bibinfo {author} {\bibfnamefont {G.}~\bibnamefont
  {Vidal}}\ and\ \bibinfo {author} {\bibfnamefont {R.~F.}\ \bibnamefont
  {Werner}},\ }\bibfield  {title} {\bibinfo {title} {Computable measure of
  entanglement},\ }\href {https://doi.org/10.1103/PhysRevA.65.032314}
  {\bibfield  {journal} {\bibinfo  {journal} {Phys. Rev. A}\ }\textbf {\bibinfo
  {volume} {65}},\ \bibinfo {pages} {032314} (\bibinfo {year}
  {2002})}\BibitemShut {NoStop}%
\bibitem [{\citenamefont {Plenio}(2005)}]{Pl05}%
  \BibitemOpen
  \bibfield  {author} {\bibinfo {author} {\bibfnamefont {M.~B.}\ \bibnamefont
  {Plenio}},\ }\bibfield  {title} {\bibinfo {title} {Logarithmic negativity: A
  full entanglement monotone that is not convex},\ }\href
  {https://doi.org/10.1103/PhysRevLett.95.090503} {\bibfield  {journal}
  {\bibinfo  {journal} {Phys. Rev. Lett.}\ }\textbf {\bibinfo {volume} {95}},\
  \bibinfo {pages} {090503} (\bibinfo {year} {2005})}\BibitemShut {NoStop}%
\bibitem [{\citenamefont {Yu}\ \emph {et~al.}(2021)\citenamefont {Yu},
  \citenamefont {Imai},\ and\ \citenamefont {G\"uhne}}]{yu2021optimal}%
  \BibitemOpen
  \bibfield  {author} {\bibinfo {author} {\bibfnamefont {X.-D.}\ \bibnamefont
  {Yu}}, \bibinfo {author} {\bibfnamefont {S.}~\bibnamefont {Imai}},\ and\
  \bibinfo {author} {\bibfnamefont {O.}~\bibnamefont {G\"uhne}},\ }\bibfield
  {title} {\bibinfo {title} {Optimal entanglement certification from moments of
  the partial transpose},\ }\href
  {https://doi.org/10.1103/PhysRevLett.127.060504} {\bibfield  {journal}
  {\bibinfo  {journal} {Phys. Rev. Lett.}\ }\textbf {\bibinfo {volume} {127}},\
  \bibinfo {pages} {060504} (\bibinfo {year} {2021})}\BibitemShut {NoStop}%
\bibitem [{\citenamefont {D’Alessio}\ \emph {et~al.}(2016)\citenamefont
  {D’Alessio}, \citenamefont {Kafri}, \citenamefont {Polkovnikov},\ and\
  \citenamefont {Rigol}}]{dalessio2016ETH}%
  \BibitemOpen
  \bibfield  {author} {\bibinfo {author} {\bibfnamefont {L.}~\bibnamefont
  {D’Alessio}}, \bibinfo {author} {\bibfnamefont {Y.}~\bibnamefont {Kafri}},
  \bibinfo {author} {\bibfnamefont {A.}~\bibnamefont {Polkovnikov}},\ and\
  \bibinfo {author} {\bibfnamefont {M.}~\bibnamefont {Rigol}},\ }\bibfield
  {title} {\bibinfo {title} {From quantum chaos and eigenstate thermalization
  to statistical mechanics and thermodynamics},\ }\href
  {https://doi.org/10.1080/00018732.2016.1198134} {\bibfield  {journal}
  {\bibinfo  {journal} {Advances in Physics}\ }\textbf {\bibinfo {volume}
  {65}},\ \bibinfo {pages} {239–362} (\bibinfo {year} {2016})}\BibitemShut
  {NoStop}%
\bibitem [{\citenamefont {Nahum}\ \emph {et~al.}(2017)\citenamefont {Nahum},
  \citenamefont {Ruhman}, \citenamefont {Vijay},\ and\ \citenamefont
  {Haah}}]{nahum2017quantum}%
  \BibitemOpen
  \bibfield  {author} {\bibinfo {author} {\bibfnamefont {A.}~\bibnamefont
  {Nahum}}, \bibinfo {author} {\bibfnamefont {J.}~\bibnamefont {Ruhman}},
  \bibinfo {author} {\bibfnamefont {S.}~\bibnamefont {Vijay}},\ and\ \bibinfo
  {author} {\bibfnamefont {J.}~\bibnamefont {Haah}},\ }\bibfield  {title}
  {\bibinfo {title} {Quantum entanglement growth under random unitary
  dynamics},\ }\href {https://doi.org/10.1103/PhysRevX.7.031016} {\bibfield
  {journal} {\bibinfo  {journal} {Phys. Rev. X}\ }\textbf {\bibinfo {volume}
  {7}},\ \bibinfo {pages} {031016} (\bibinfo {year} {2017})}\BibitemShut
  {NoStop}%
\bibitem [{\citenamefont {Arute}\ \emph {et~al.}(2019)\citenamefont {Arute},
  \citenamefont {Arya}, \citenamefont {Babbush}, \citenamefont {Bacon},
  \citenamefont {Bardin}, \citenamefont {Barends}, \citenamefont {Biswas},
  \citenamefont {Boixo}, \citenamefont {Brandao}, \citenamefont {Buell},
  \citenamefont {Burkett}, \citenamefont {Chen}, \citenamefont {Chen},
  \citenamefont {Chiaro}, \citenamefont {Collins}, \citenamefont {Courtney},
  \citenamefont {Dunsworth}, \citenamefont {Farhi}, \citenamefont {Foxen},
  \citenamefont {Fowler}, \citenamefont {Gidney}, \citenamefont {Giustina},
  \citenamefont {Graff}, \citenamefont {Guerin}, \citenamefont {Habegger},
  \citenamefont {Harrigan}, \citenamefont {Hartmann}, \citenamefont {Ho},
  \citenamefont {Hoffmann}, \citenamefont {Huang}, \citenamefont {Humble},
  \citenamefont {Isakov}, \citenamefont {Jeffrey}, \citenamefont {Jiang},
  \citenamefont {Kafri}, \citenamefont {Kechedzhi}, \citenamefont {Kelly},
  \citenamefont {Klimov}, \citenamefont {Knysh}, \citenamefont {Korotkov},
  \citenamefont {Kostritsa}, \citenamefont {Landhuis}, \citenamefont
  {Lindmark}, \citenamefont {Lucero}, \citenamefont {Lyakh}, \citenamefont
  {Mandrà}, \citenamefont {McClean}, \citenamefont {McEwen}, \citenamefont
  {Megrant}, \citenamefont {Mi}, \citenamefont {Michielsen}, \citenamefont
  {Mohseni}, \citenamefont {Mutus}, \citenamefont {Naaman}, \citenamefont
  {Neeley}, \citenamefont {Neill}, \citenamefont {Niu}, \citenamefont {Ostby},
  \citenamefont {Petukhov}, \citenamefont {Platt}, \citenamefont {Quintana},
  \citenamefont {Rieffel}, \citenamefont {Roushan}, \citenamefont {Rubin},
  \citenamefont {Sank}, \citenamefont {Satzinger}, \citenamefont {Smelyanskiy},
  \citenamefont {Sung}, \citenamefont {Trevithick}, \citenamefont
  {Vainsencher}, \citenamefont {Villalonga}, \citenamefont {White},
  \citenamefont {Yao}, \citenamefont {Yeh}, \citenamefont {Zalcman},
  \citenamefont {Neven},\ and\ \citenamefont {Martinis}}]{arute2019quantum}%
  \BibitemOpen
  \bibfield  {author} {\bibinfo {author} {\bibfnamefont {F.}~\bibnamefont
  {Arute}}, \bibinfo {author} {\bibfnamefont {K.}~\bibnamefont {Arya}},
  \bibinfo {author} {\bibfnamefont {R.}~\bibnamefont {Babbush}}, \bibinfo
  {author} {\bibfnamefont {D.}~\bibnamefont {Bacon}}, \bibinfo {author}
  {\bibfnamefont {J.~C.}\ \bibnamefont {Bardin}}, \bibinfo {author}
  {\bibfnamefont {R.}~\bibnamefont {Barends}}, \bibinfo {author} {\bibfnamefont
  {R.}~\bibnamefont {Biswas}}, \bibinfo {author} {\bibfnamefont
  {S.}~\bibnamefont {Boixo}}, \bibinfo {author} {\bibfnamefont {F.~G. S.~L.}\
  \bibnamefont {Brandao}}, \bibinfo {author} {\bibfnamefont {D.~A.}\
  \bibnamefont {Buell}}, \bibinfo {author} {\bibfnamefont {B.}~\bibnamefont
  {Burkett}}, \bibinfo {author} {\bibfnamefont {Y.}~\bibnamefont {Chen}},
  \bibinfo {author} {\bibfnamefont {Z.}~\bibnamefont {Chen}}, \bibinfo {author}
  {\bibfnamefont {B.}~\bibnamefont {Chiaro}}, \bibinfo {author} {\bibfnamefont
  {R.}~\bibnamefont {Collins}}, \bibinfo {author} {\bibfnamefont
  {W.}~\bibnamefont {Courtney}}, \bibinfo {author} {\bibfnamefont
  {A.}~\bibnamefont {Dunsworth}}, \bibinfo {author} {\bibfnamefont
  {E.}~\bibnamefont {Farhi}}, \bibinfo {author} {\bibfnamefont
  {B.}~\bibnamefont {Foxen}}, \bibinfo {author} {\bibfnamefont
  {A.}~\bibnamefont {Fowler}}, \bibinfo {author} {\bibfnamefont
  {C.}~\bibnamefont {Gidney}}, \bibinfo {author} {\bibfnamefont
  {M.}~\bibnamefont {Giustina}}, \bibinfo {author} {\bibfnamefont
  {R.}~\bibnamefont {Graff}}, \bibinfo {author} {\bibfnamefont
  {K.}~\bibnamefont {Guerin}}, \bibinfo {author} {\bibfnamefont
  {S.}~\bibnamefont {Habegger}}, \bibinfo {author} {\bibfnamefont {M.~P.}\
  \bibnamefont {Harrigan}}, \bibinfo {author} {\bibfnamefont {M.~J.}\
  \bibnamefont {Hartmann}}, \bibinfo {author} {\bibfnamefont {A.}~\bibnamefont
  {Ho}}, \bibinfo {author} {\bibfnamefont {M.}~\bibnamefont {Hoffmann}},
  \bibinfo {author} {\bibfnamefont {T.}~\bibnamefont {Huang}}, \bibinfo
  {author} {\bibfnamefont {T.~S.}\ \bibnamefont {Humble}}, \bibinfo {author}
  {\bibfnamefont {S.~V.}\ \bibnamefont {Isakov}}, \bibinfo {author}
  {\bibfnamefont {E.}~\bibnamefont {Jeffrey}}, \bibinfo {author} {\bibfnamefont
  {Z.}~\bibnamefont {Jiang}}, \bibinfo {author} {\bibfnamefont
  {D.}~\bibnamefont {Kafri}}, \bibinfo {author} {\bibfnamefont
  {K.}~\bibnamefont {Kechedzhi}}, \bibinfo {author} {\bibfnamefont
  {J.}~\bibnamefont {Kelly}}, \bibinfo {author} {\bibfnamefont {P.~V.}\
  \bibnamefont {Klimov}}, \bibinfo {author} {\bibfnamefont {S.}~\bibnamefont
  {Knysh}}, \bibinfo {author} {\bibfnamefont {A.}~\bibnamefont {Korotkov}},
  \bibinfo {author} {\bibfnamefont {F.}~\bibnamefont {Kostritsa}}, \bibinfo
  {author} {\bibfnamefont {D.}~\bibnamefont {Landhuis}}, \bibinfo {author}
  {\bibfnamefont {M.}~\bibnamefont {Lindmark}}, \bibinfo {author}
  {\bibfnamefont {E.}~\bibnamefont {Lucero}}, \bibinfo {author} {\bibfnamefont
  {D.}~\bibnamefont {Lyakh}}, \bibinfo {author} {\bibfnamefont
  {S.}~\bibnamefont {Mandrà}}, \bibinfo {author} {\bibfnamefont {J.~R.}\
  \bibnamefont {McClean}}, \bibinfo {author} {\bibfnamefont {M.}~\bibnamefont
  {McEwen}}, \bibinfo {author} {\bibfnamefont {A.}~\bibnamefont {Megrant}},
  \bibinfo {author} {\bibfnamefont {X.}~\bibnamefont {Mi}}, \bibinfo {author}
  {\bibfnamefont {K.}~\bibnamefont {Michielsen}}, \bibinfo {author}
  {\bibfnamefont {M.}~\bibnamefont {Mohseni}}, \bibinfo {author} {\bibfnamefont
  {J.}~\bibnamefont {Mutus}}, \bibinfo {author} {\bibfnamefont
  {O.}~\bibnamefont {Naaman}}, \bibinfo {author} {\bibfnamefont
  {M.}~\bibnamefont {Neeley}}, \bibinfo {author} {\bibfnamefont
  {C.}~\bibnamefont {Neill}}, \bibinfo {author} {\bibfnamefont {M.~Y.}\
  \bibnamefont {Niu}}, \bibinfo {author} {\bibfnamefont {E.}~\bibnamefont
  {Ostby}}, \bibinfo {author} {\bibfnamefont {A.}~\bibnamefont {Petukhov}},
  \bibinfo {author} {\bibfnamefont {J.~C.}\ \bibnamefont {Platt}}, \bibinfo
  {author} {\bibfnamefont {C.}~\bibnamefont {Quintana}}, \bibinfo {author}
  {\bibfnamefont {E.~G.}\ \bibnamefont {Rieffel}}, \bibinfo {author}
  {\bibfnamefont {P.}~\bibnamefont {Roushan}}, \bibinfo {author} {\bibfnamefont
  {N.~C.}\ \bibnamefont {Rubin}}, \bibinfo {author} {\bibfnamefont
  {D.}~\bibnamefont {Sank}}, \bibinfo {author} {\bibfnamefont {K.~J.}\
  \bibnamefont {Satzinger}}, \bibinfo {author} {\bibfnamefont {V.}~\bibnamefont
  {Smelyanskiy}}, \bibinfo {author} {\bibfnamefont {K.~J.}\ \bibnamefont
  {Sung}}, \bibinfo {author} {\bibfnamefont {M.~D.}\ \bibnamefont
  {Trevithick}}, \bibinfo {author} {\bibfnamefont {A.}~\bibnamefont
  {Vainsencher}}, \bibinfo {author} {\bibfnamefont {B.}~\bibnamefont
  {Villalonga}}, \bibinfo {author} {\bibfnamefont {T.}~\bibnamefont {White}},
  \bibinfo {author} {\bibfnamefont {Z.~J.}\ \bibnamefont {Yao}}, \bibinfo
  {author} {\bibfnamefont {P.}~\bibnamefont {Yeh}}, \bibinfo {author}
  {\bibfnamefont {A.}~\bibnamefont {Zalcman}}, \bibinfo {author} {\bibfnamefont
  {H.}~\bibnamefont {Neven}},\ and\ \bibinfo {author} {\bibfnamefont {J.~M.}\
  \bibnamefont {Martinis}},\ }\bibfield  {title} {\bibinfo {title} {Quantum
  supremacy using a programmable superconducting processor},\ }\href
  {https://doi.org/10.1038/s41586-019-1666-5} {\bibfield  {journal} {\bibinfo
  {journal} {Nature}\ }\textbf {\bibinfo {volume} {574}},\ \bibinfo {pages}
  {505–510} (\bibinfo {year} {2019})}\BibitemShut {NoStop}%
\bibitem [{\citenamefont {Lashkari}\ \emph {et~al.}(2013)\citenamefont
  {Lashkari}, \citenamefont {Stanford}, \citenamefont {Hastings}, \citenamefont
  {Osborne},\ and\ \citenamefont {Hayden}}]{lashkari2013scrambling}%
  \BibitemOpen
  \bibfield  {author} {\bibinfo {author} {\bibfnamefont {N.}~\bibnamefont
  {Lashkari}}, \bibinfo {author} {\bibfnamefont {D.}~\bibnamefont {Stanford}},
  \bibinfo {author} {\bibfnamefont {M.}~\bibnamefont {Hastings}}, \bibinfo
  {author} {\bibfnamefont {T.}~\bibnamefont {Osborne}},\ and\ \bibinfo {author}
  {\bibfnamefont {P.}~\bibnamefont {Hayden}},\ }\bibfield  {title} {\bibinfo
  {title} {Towards the fast scrambling conjecture},\ }\bibfield  {journal}
  {\bibinfo  {journal} {Journal of High Energy Physics}\ }\href
  {https://doi.org/10.1007/jhep04(2013)022} {10.1007/jhep04(2013)022} (\bibinfo
  {year} {2013})\BibitemShut {NoStop}%
\bibitem [{\citenamefont {Hosur}\ \emph {et~al.}(2016)\citenamefont {Hosur},
  \citenamefont {Qi}, \citenamefont {Roberts},\ and\ \citenamefont
  {Yoshida}}]{hosur2016chaoschannels}%
  \BibitemOpen
  \bibfield  {author} {\bibinfo {author} {\bibfnamefont {P.}~\bibnamefont
  {Hosur}}, \bibinfo {author} {\bibfnamefont {X.-L.}\ \bibnamefont {Qi}},
  \bibinfo {author} {\bibfnamefont {D.~A.}\ \bibnamefont {Roberts}},\ and\
  \bibinfo {author} {\bibfnamefont {B.}~\bibnamefont {Yoshida}},\ }\bibfield
  {title} {\bibinfo {title} {Chaos in quantum channels},\ }\bibfield  {journal}
  {\bibinfo  {journal} {Journal of High Energy Physics}\ }\href
  {https://doi.org/10.1007/jhep02(2016)004} {10.1007/jhep02(2016)004} (\bibinfo
  {year} {2016})\BibitemShut {NoStop}%
\bibitem [{\citenamefont {Nahum}\ \emph {et~al.}(2018)\citenamefont {Nahum},
  \citenamefont {Vijay},\ and\ \citenamefont {Haah}}]{nahum2018operator}%
  \BibitemOpen
  \bibfield  {author} {\bibinfo {author} {\bibfnamefont {A.}~\bibnamefont
  {Nahum}}, \bibinfo {author} {\bibfnamefont {S.}~\bibnamefont {Vijay}},\ and\
  \bibinfo {author} {\bibfnamefont {J.}~\bibnamefont {Haah}},\ }\bibfield
  {title} {\bibinfo {title} {Operator spreading in random unitary circuits},\
  }\href {https://doi.org/10.1103/PhysRevX.8.021014} {\bibfield  {journal}
  {\bibinfo  {journal} {Phys. Rev. X}\ }\textbf {\bibinfo {volume} {8}},\
  \bibinfo {pages} {021014} (\bibinfo {year} {2018})}\BibitemShut {NoStop}%
\bibitem [{\citenamefont {Hayden}\ and\ \citenamefont
  {Preskill}(2007)}]{hayden2007BH}%
  \BibitemOpen
  \bibfield  {author} {\bibinfo {author} {\bibfnamefont {P.}~\bibnamefont
  {Hayden}}\ and\ \bibinfo {author} {\bibfnamefont {J.}~\bibnamefont
  {Preskill}},\ }\bibfield  {title} {\bibinfo {title} {Black holes as mirrors:
  quantum information in random subsystems},\ }\href
  {https://doi.org/10.1088/1126-6708/2007/09/120} {\bibfield  {journal}
  {\bibinfo  {journal} {Journal of High Energy Physics}\ }\textbf {\bibinfo
  {volume} {2007}},\ \bibinfo {pages} {120–120} (\bibinfo {year}
  {2007})}\BibitemShut {NoStop}%
\bibitem [{\citenamefont {Penington}\ \emph {et~al.}(2020)\citenamefont
  {Penington}, \citenamefont {Shenker}, \citenamefont {Stanford},\ and\
  \citenamefont {Yang}}]{penington2020replica}%
  \BibitemOpen
  \bibfield  {author} {\bibinfo {author} {\bibfnamefont {G.}~\bibnamefont
  {Penington}}, \bibinfo {author} {\bibfnamefont {S.~H.}\ \bibnamefont
  {Shenker}}, \bibinfo {author} {\bibfnamefont {D.}~\bibnamefont {Stanford}},\
  and\ \bibinfo {author} {\bibfnamefont {Z.}~\bibnamefont {Yang}},\ }\href@noop
  {} {\bibinfo {title} {Replica wormholes and the black hole interior}}
  (\bibinfo {year} {2020}),\ \Eprint {https://arxiv.org/abs/1911.11977}
  {arXiv:1911.11977 [hep-th]} \BibitemShut {NoStop}%
\bibitem [{\citenamefont {Piroli}\ \emph {et~al.}(2020)\citenamefont {Piroli},
  \citenamefont {Sünderhauf},\ and\ \citenamefont {Qi}}]{piroli2020BH}%
  \BibitemOpen
  \bibfield  {author} {\bibinfo {author} {\bibfnamefont {L.}~\bibnamefont
  {Piroli}}, \bibinfo {author} {\bibfnamefont {C.}~\bibnamefont
  {Sünderhauf}},\ and\ \bibinfo {author} {\bibfnamefont {X.-L.}\ \bibnamefont
  {Qi}},\ }\bibfield  {title} {\bibinfo {title} {A random unitary circuit model
  for black hole evaporation},\ }\bibfield  {journal} {\bibinfo  {journal}
  {Journal of High Energy Physics}\ }\href
  {https://doi.org/10.1007/jhep04(2020)063} {10.1007/jhep04(2020)063} (\bibinfo
  {year} {2020})\BibitemShut {NoStop}%
\bibitem [{\citenamefont {Bohigas}\ \emph {et~al.}(1984)\citenamefont
  {Bohigas}, \citenamefont {Giannoni},\ and\ \citenamefont
  {Schmit}}]{bohigas1984chaosuniversality}%
  \BibitemOpen
  \bibfield  {author} {\bibinfo {author} {\bibfnamefont {O.}~\bibnamefont
  {Bohigas}}, \bibinfo {author} {\bibfnamefont {M.~J.}\ \bibnamefont
  {Giannoni}},\ and\ \bibinfo {author} {\bibfnamefont {C.}~\bibnamefont
  {Schmit}},\ }\bibfield  {title} {\bibinfo {title} {Characterization of
  chaotic quantum spectra and universality of level fluctuation laws},\ }\href
  {https://doi.org/10.1103/PhysRevLett.52.1} {\bibfield  {journal} {\bibinfo
  {journal} {Phys. Rev. Lett.}\ }\textbf {\bibinfo {volume} {52}},\ \bibinfo
  {pages} {1} (\bibinfo {year} {1984})}\BibitemShut {NoStop}%
\bibitem [{\citenamefont {Guhr}\ \emph {et~al.}(1998)\citenamefont {Guhr},
  \citenamefont {Müller–Groeling},\ and\ \citenamefont
  {Weidenmüller}}]{GUHR1998rmtchaos}%
  \BibitemOpen
  \bibfield  {author} {\bibinfo {author} {\bibfnamefont {T.}~\bibnamefont
  {Guhr}}, \bibinfo {author} {\bibfnamefont {A.}~\bibnamefont
  {Müller–Groeling}},\ and\ \bibinfo {author} {\bibfnamefont {H.~A.}\
  \bibnamefont {Weidenmüller}},\ }\bibfield  {title} {\bibinfo {title}
  {Random-matrix theories in quantum physics: common concepts},\ }\href
  {https://doi.org/https://doi.org/10.1016/S0370-1573(97)00088-4} {\bibfield
  {journal} {\bibinfo  {journal} {Physics Reports}\ }\textbf {\bibinfo {volume}
  {299}},\ \bibinfo {pages} {189} (\bibinfo {year} {1998})}\BibitemShut
  {NoStop}%
\bibitem [{\citenamefont {Kos}\ \emph {et~al.}(2018)\citenamefont {Kos},
  \citenamefont {Ljubotina},\ and\ \citenamefont {Prosen}}]{kos2018RMTmbchaos}%
  \BibitemOpen
  \bibfield  {author} {\bibinfo {author} {\bibfnamefont {P.}~\bibnamefont
  {Kos}}, \bibinfo {author} {\bibfnamefont {M.}~\bibnamefont {Ljubotina}},\
  and\ \bibinfo {author} {\bibfnamefont {T.~c.~v.}\ \bibnamefont {Prosen}},\
  }\bibfield  {title} {\bibinfo {title} {Many-body quantum chaos: Analytic
  connection to random matrix theory},\ }\href
  {https://doi.org/10.1103/PhysRevX.8.021062} {\bibfield  {journal} {\bibinfo
  {journal} {Phys. Rev. X}\ }\textbf {\bibinfo {volume} {8}},\ \bibinfo {pages}
  {021062} (\bibinfo {year} {2018})}\BibitemShut {NoStop}%
\bibitem [{\citenamefont {Chen}\ and\ \citenamefont
  {Ludwig}(2018)}]{chen2018chaosuniversality}%
  \BibitemOpen
  \bibfield  {author} {\bibinfo {author} {\bibfnamefont {X.}~\bibnamefont
  {Chen}}\ and\ \bibinfo {author} {\bibfnamefont {A.~W.~W.}\ \bibnamefont
  {Ludwig}},\ }\bibfield  {title} {\bibinfo {title} {Universal spectral
  correlations in the chaotic wave function and the development of quantum
  chaos},\ }\href {https://doi.org/10.1103/PhysRevB.98.064309} {\bibfield
  {journal} {\bibinfo  {journal} {Phys. Rev. B}\ }\textbf {\bibinfo {volume}
  {98}},\ \bibinfo {pages} {064309} (\bibinfo {year} {2018})}\BibitemShut
  {NoStop}%
\bibitem [{\citenamefont {Page}(1993)}]{Page1993}%
  \BibitemOpen
  \bibfield  {author} {\bibinfo {author} {\bibfnamefont {D.~N.}\ \bibnamefont
  {Page}},\ }\bibfield  {title} {\bibinfo {title} {Average entropy of a
  subsystem},\ }\href {https://doi.org/10.1103/PhysRevLett.71.1291} {\bibfield
  {journal} {\bibinfo  {journal} {Phys. Rev. Lett.}\ }\textbf {\bibinfo
  {volume} {71}},\ \bibinfo {pages} {1291} (\bibinfo {year}
  {1993})}\BibitemShut {NoStop}%
\bibitem [{\citenamefont {Haug}\ and\ \citenamefont
  {Piroli}(2022)}]{PiroliNonstabilizerness}%
  \BibitemOpen
  \bibfield  {author} {\bibinfo {author} {\bibfnamefont {T.}~\bibnamefont
  {Haug}}\ and\ \bibinfo {author} {\bibfnamefont {L.}~\bibnamefont {Piroli}},\
  }\href {https://arxiv.org/abs/2207.13076} {\bibinfo {title} {Quantifying
  nonstabilizerness of matrix product states}} (\bibinfo {year} {2022}),\
  \Eprint {https://arxiv.org/abs/2207.13076} {arXiv:2207.13076} \BibitemShut
  {NoStop}%
\bibitem [{\citenamefont {Haferkamp}\ \emph {et~al.}(2020)\citenamefont
  {Haferkamp}, \citenamefont {Montealegre-Mora}, \citenamefont {Heinrich},
  \citenamefont {Eisert}, \citenamefont {Gross},\ and\ \citenamefont
  {Roth}}]{haferkamp2020quantum}%
  \BibitemOpen
  \bibfield  {author} {\bibinfo {author} {\bibfnamefont {J.}~\bibnamefont
  {Haferkamp}}, \bibinfo {author} {\bibfnamefont {F.}~\bibnamefont
  {Montealegre-Mora}}, \bibinfo {author} {\bibfnamefont {M.}~\bibnamefont
  {Heinrich}}, \bibinfo {author} {\bibfnamefont {J.}~\bibnamefont {Eisert}},
  \bibinfo {author} {\bibfnamefont {D.}~\bibnamefont {Gross}},\ and\ \bibinfo
  {author} {\bibfnamefont {I.}~\bibnamefont {Roth}},\ }\href@noop {} {\bibinfo
  {title} {Quantum homeopathy works: Efficient unitary designs with a
  system-size independent number of non-clifford gates}} (\bibinfo {year}
  {2020}),\ \Eprint {https://arxiv.org/abs/2002.09524} {arXiv:2002.09524
  [quant-ph]} \BibitemShut {NoStop}%
\bibitem [{\citenamefont {Carteret}(2005)}]{carteret2005noiseless}%
  \BibitemOpen
  \bibfield  {author} {\bibinfo {author} {\bibfnamefont {H.~A.}\ \bibnamefont
  {Carteret}},\ }\bibfield  {title} {\bibinfo {title} {Noiseless quantum
  circuits for the peres separability criterion},\ }\href
  {https://doi.org/10.1103/PhysRevLett.94.040502} {\bibfield  {journal}
  {\bibinfo  {journal} {Phys. Rev. Lett.}\ }\textbf {\bibinfo {volume} {94}},\
  \bibinfo {pages} {040502} (\bibinfo {year} {2005})}\BibitemShut {NoStop}%
\bibitem [{\citenamefont {Gray}\ \emph {et~al.}(2018)\citenamefont {Gray},
  \citenamefont {Banchi}, \citenamefont {Bayat},\ and\ \citenamefont
  {Bose}}]{graymachine2018}%
  \BibitemOpen
  \bibfield  {author} {\bibinfo {author} {\bibfnamefont {J.}~\bibnamefont
  {Gray}}, \bibinfo {author} {\bibfnamefont {L.}~\bibnamefont {Banchi}},
  \bibinfo {author} {\bibfnamefont {A.}~\bibnamefont {Bayat}},\ and\ \bibinfo
  {author} {\bibfnamefont {S.}~\bibnamefont {Bose}},\ }\bibfield  {title}
  {\bibinfo {title} {Machine-learning-assisted many-body entanglement
  measurement},\ }\href {https://doi.org/10.1103/PhysRevLett.121.150503}
  {\bibfield  {journal} {\bibinfo  {journal} {Phys. Rev. Lett.}\ }\textbf
  {\bibinfo {volume} {121}},\ \bibinfo {pages} {150503} (\bibinfo {year}
  {2018})}\BibitemShut {NoStop}%
\bibitem [{\citenamefont {Collins}\ and\ \citenamefont
  {{\'{S}}niady}(2006)}]{Collins2006RMT}%
  \BibitemOpen
  \bibfield  {author} {\bibinfo {author} {\bibfnamefont {B.}~\bibnamefont
  {Collins}}\ and\ \bibinfo {author} {\bibfnamefont {P.}~\bibnamefont
  {{\'{S}}niady}},\ }\bibfield  {title} {\bibinfo {title} {Integration with
  respect to the haar measure on unitary, orthogonal and symplectic group},\
  }\href {https://doi.org/10.1007/s00220-006-1554-3} {\bibfield  {journal}
  {\bibinfo  {journal} {Communications in Mathematical Physics}\ }\textbf
  {\bibinfo {volume} {264}},\ \bibinfo {pages} {773} (\bibinfo {year}
  {2006})}\BibitemShut {NoStop}%
\bibitem [{\citenamefont {Elben}\ \emph {et~al.}(2019)\citenamefont {Elben},
  \citenamefont {Vermersch}, \citenamefont {Roos},\ and\ \citenamefont
  {Zoller}}]{Elben2019correlations}%
  \BibitemOpen
  \bibfield  {author} {\bibinfo {author} {\bibfnamefont {A.}~\bibnamefont
  {Elben}}, \bibinfo {author} {\bibfnamefont {B.}~\bibnamefont {Vermersch}},
  \bibinfo {author} {\bibfnamefont {C.~F.}\ \bibnamefont {Roos}},\ and\
  \bibinfo {author} {\bibfnamefont {P.}~\bibnamefont {Zoller}},\ }\bibfield
  {title} {\bibinfo {title} {Statistical correlations between locally
  randomized measurements: A toolbox for probing entanglement in many-body
  quantum states},\ }\href {https://doi.org/10.1103/PhysRevA.99.052323}
  {\bibfield  {journal} {\bibinfo  {journal} {Phys. Rev. A}\ }\textbf {\bibinfo
  {volume} {99}},\ \bibinfo {pages} {052323} (\bibinfo {year}
  {2019})}\BibitemShut {NoStop}%
\bibitem [{Note1()}]{Note1}%
  \BibitemOpen
  \bibinfo {note} {The normalization condition implies that $k_2=(1-k_1\lambda
  _1)/\lambda _2$. Inserting this expression for $k_2$ into the equation
  $r_2=1$ leads to $k_1\lambda _1(-1+k_1\lambda _1)(\lambda _1-\lambda
  _2)^2(\lambda _1+\lambda _2)=0$. Hence, the only non-trivial solutions to
  this equation and the normalization condition are $\lambda _1=\pm \lambda
  _2=(k_1\pm k_2)^{-1}$. In both cases $\epsilon _i=0,1$ for all
  $i$.}\BibitemShut {Stop}%
\bibitem [{\citenamefont {Gottesman}(1998)}]{gottesman1998heisenberg}%
  \BibitemOpen
  \bibfield  {author} {\bibinfo {author} {\bibfnamefont {D.}~\bibnamefont
  {Gottesman}},\ }\href@noop {} {\bibinfo {title} {The heisenberg
  representation of quantum computers}} (\bibinfo {year} {1998}),\ \Eprint
  {https://arxiv.org/abs/quant-ph/9807006} {arXiv:quant-ph/9807006 [quant-ph]}
  \BibitemShut {NoStop}%
\bibitem [{\citenamefont {Aaronson}\ and\ \citenamefont
  {Gottesman}(2004)}]{gottesman2004simulation}%
  \BibitemOpen
  \bibfield  {author} {\bibinfo {author} {\bibfnamefont {S.}~\bibnamefont
  {Aaronson}}\ and\ \bibinfo {author} {\bibfnamefont {D.}~\bibnamefont
  {Gottesman}},\ }\bibfield  {title} {\bibinfo {title} {Improved simulation of
  stabilizer circuits},\ }\href {https://doi.org/10.1103/PhysRevA.70.052328}
  {\bibfield  {journal} {\bibinfo  {journal} {Phys. Rev. A}\ }\textbf {\bibinfo
  {volume} {70}},\ \bibinfo {pages} {052328} (\bibinfo {year}
  {2004})}\BibitemShut {NoStop}%
\bibitem [{\citenamefont {Leone}\ \emph {et~al.}(2021)\citenamefont {Leone},
  \citenamefont {Oliviero}, \citenamefont {Zhou},\ and\ \citenamefont
  {Hamma}}]{leone2021quantum}%
  \BibitemOpen
  \bibfield  {author} {\bibinfo {author} {\bibfnamefont {L.}~\bibnamefont
  {Leone}}, \bibinfo {author} {\bibfnamefont {S.~F.~E.}\ \bibnamefont
  {Oliviero}}, \bibinfo {author} {\bibfnamefont {Y.}~\bibnamefont {Zhou}},\
  and\ \bibinfo {author} {\bibfnamefont {A.}~\bibnamefont {Hamma}},\ }\bibfield
   {title} {\bibinfo {title} {Quantum chaos is quantum},\ }\href
  {https://doi.org/10.22331/q-2021-05-04-453} {\bibfield  {journal} {\bibinfo
  {journal} {Quantum}\ }\textbf {\bibinfo {volume} {5}},\ \bibinfo {pages}
  {453} (\bibinfo {year} {2021})}\BibitemShut {NoStop}%
\bibitem [{\citenamefont {Leone}\ \emph {et~al.}(2022)\citenamefont {Leone},
  \citenamefont {Oliviero},\ and\ \citenamefont {Hamma}}]{leone2022magic}%
  \BibitemOpen
  \bibfield  {author} {\bibinfo {author} {\bibfnamefont {L.}~\bibnamefont
  {Leone}}, \bibinfo {author} {\bibfnamefont {S.~F.~E.}\ \bibnamefont
  {Oliviero}},\ and\ \bibinfo {author} {\bibfnamefont {A.}~\bibnamefont
  {Hamma}},\ }\bibfield  {title} {\bibinfo {title} {Stabilizer r\'enyi
  entropy},\ }\href {https://doi.org/10.1103/PhysRevLett.128.050402} {\bibfield
   {journal} {\bibinfo  {journal} {Phys. Rev. Lett.}\ }\textbf {\bibinfo
  {volume} {128}},\ \bibinfo {pages} {050402} (\bibinfo {year}
  {2022})}\BibitemShut {NoStop}%
\bibitem [{\citenamefont {Browaeys}\ and\ \citenamefont
  {Lahaye}(2020)}]{Browaeys2020}%
  \BibitemOpen
  \bibfield  {author} {\bibinfo {author} {\bibfnamefont {A.}~\bibnamefont
  {Browaeys}}\ and\ \bibinfo {author} {\bibfnamefont {T.}~\bibnamefont
  {Lahaye}},\ }\bibfield  {title} {\bibinfo {title} {Many-body physics with
  individually controlled rydberg atoms},\ }\href
  {https://doi.org/10.1038/s41567-019-0733-z} {\bibfield  {journal} {\bibinfo
  {journal} {Nature Physics}\ }\textbf {\bibinfo {volume} {16}},\ \bibinfo
  {pages} {132} (\bibinfo {year} {2020})}\BibitemShut {NoStop}%
\bibitem [{\citenamefont {Bernien}\ \emph {et~al.}(2017)\citenamefont
  {Bernien}, \citenamefont {Schwartz}, \citenamefont {Keesling}, \citenamefont
  {Levine}, \citenamefont {Omran}, \citenamefont {Pichler}, \citenamefont
  {Choi}, \citenamefont {Zibrov}, \citenamefont {Endres}, \citenamefont
  {Greiner}, \citenamefont {Vuleti{\'c}},\ and\ \citenamefont
  {Lukin}}]{Bernien2017}%
  \BibitemOpen
  \bibfield  {author} {\bibinfo {author} {\bibfnamefont {H.}~\bibnamefont
  {Bernien}}, \bibinfo {author} {\bibfnamefont {S.}~\bibnamefont {Schwartz}},
  \bibinfo {author} {\bibfnamefont {A.}~\bibnamefont {Keesling}}, \bibinfo
  {author} {\bibfnamefont {H.}~\bibnamefont {Levine}}, \bibinfo {author}
  {\bibfnamefont {A.}~\bibnamefont {Omran}}, \bibinfo {author} {\bibfnamefont
  {H.}~\bibnamefont {Pichler}}, \bibinfo {author} {\bibfnamefont
  {S.}~\bibnamefont {Choi}}, \bibinfo {author} {\bibfnamefont {A.~S.}\
  \bibnamefont {Zibrov}}, \bibinfo {author} {\bibfnamefont {M.}~\bibnamefont
  {Endres}}, \bibinfo {author} {\bibfnamefont {M.}~\bibnamefont {Greiner}},
  \bibinfo {author} {\bibfnamefont {V.}~\bibnamefont {Vuleti{\'c}}},\ and\
  \bibinfo {author} {\bibfnamefont {M.~D.}\ \bibnamefont {Lukin}},\ }\bibfield
  {title} {\bibinfo {title} {Probing many-body dynamics on a 51-atom quantum
  simulator},\ }\href {https://doi.org/10.1038/nature24622} {\bibfield
  {journal} {\bibinfo  {journal} {Nature}\ }\textbf {\bibinfo {volume} {551}},\
  \bibinfo {pages} {579} (\bibinfo {year} {2017})}\BibitemShut {NoStop}%
\bibitem [{\citenamefont {Ebadi}\ \emph {et~al.}(2021)\citenamefont {Ebadi},
  \citenamefont {Wang}, \citenamefont {Levine}, \citenamefont {Keesling},
  \citenamefont {Semeghini}, \citenamefont {Omran}, \citenamefont {Bluvstein},
  \citenamefont {Samajdar}, \citenamefont {Pichler}, \citenamefont {Ho},
  \citenamefont {Choi}, \citenamefont {Sachdev}, \citenamefont {Greiner},
  \citenamefont {Vuleti{\'c}},\ and\ \citenamefont {Lukin}}]{Ebadi2021}%
  \BibitemOpen
  \bibfield  {author} {\bibinfo {author} {\bibfnamefont {S.}~\bibnamefont
  {Ebadi}}, \bibinfo {author} {\bibfnamefont {T.~T.}\ \bibnamefont {Wang}},
  \bibinfo {author} {\bibfnamefont {H.}~\bibnamefont {Levine}}, \bibinfo
  {author} {\bibfnamefont {A.}~\bibnamefont {Keesling}}, \bibinfo {author}
  {\bibfnamefont {G.}~\bibnamefont {Semeghini}}, \bibinfo {author}
  {\bibfnamefont {A.}~\bibnamefont {Omran}}, \bibinfo {author} {\bibfnamefont
  {D.}~\bibnamefont {Bluvstein}}, \bibinfo {author} {\bibfnamefont
  {R.}~\bibnamefont {Samajdar}}, \bibinfo {author} {\bibfnamefont
  {H.}~\bibnamefont {Pichler}}, \bibinfo {author} {\bibfnamefont {W.~W.}\
  \bibnamefont {Ho}}, \bibinfo {author} {\bibfnamefont {S.}~\bibnamefont
  {Choi}}, \bibinfo {author} {\bibfnamefont {S.}~\bibnamefont {Sachdev}},
  \bibinfo {author} {\bibfnamefont {M.}~\bibnamefont {Greiner}}, \bibinfo
  {author} {\bibfnamefont {V.}~\bibnamefont {Vuleti{\'c}}},\ and\ \bibinfo
  {author} {\bibfnamefont {M.~D.}\ \bibnamefont {Lukin}},\ }\bibfield  {title}
  {\bibinfo {title} {Quantum phases of matter on a 256-atom programmable
  quantum simulator},\ }\href {https://doi.org/10.1038/s41586-021-03582-4}
  {\bibfield  {journal} {\bibinfo  {journal} {Nature}\ }\textbf {\bibinfo
  {volume} {595}},\ \bibinfo {pages} {227} (\bibinfo {year}
  {2021})}\BibitemShut {NoStop}%
\bibitem [{\citenamefont {de~L{\'{e}}s{\'{e}}leuc}\ \emph
  {et~al.}(2019)\citenamefont {de~L{\'{e}}s{\'{e}}leuc}, \citenamefont
  {Lienhard}, \citenamefont {Scholl}, \citenamefont {Barredo}, \citenamefont
  {Weber}, \citenamefont {Lang}, \citenamefont {Büchler}, \citenamefont
  {Lahaye},\ and\ \citenamefont {Browaeys}}]{Syl2019}%
  \BibitemOpen
  \bibfield  {author} {\bibinfo {author} {\bibfnamefont {S.}~\bibnamefont
  {de~L{\'{e}}s{\'{e}}leuc}}, \bibinfo {author} {\bibfnamefont
  {V.}~\bibnamefont {Lienhard}}, \bibinfo {author} {\bibfnamefont
  {P.}~\bibnamefont {Scholl}}, \bibinfo {author} {\bibfnamefont
  {D.}~\bibnamefont {Barredo}}, \bibinfo {author} {\bibfnamefont
  {S.}~\bibnamefont {Weber}}, \bibinfo {author} {\bibfnamefont
  {N.}~\bibnamefont {Lang}}, \bibinfo {author} {\bibfnamefont {H.~P.}\
  \bibnamefont {Büchler}}, \bibinfo {author} {\bibfnamefont {T.}~\bibnamefont
  {Lahaye}},\ and\ \bibinfo {author} {\bibfnamefont {A.}~\bibnamefont
  {Browaeys}},\ }\bibfield  {title} {\bibinfo {title} {Observation of a
  symmetry-protected topological phase of interacting bosons with rydberg
  atoms},\ }\href {https://doi.org/10.1126/science.aav9105} {\bibfield
  {journal} {\bibinfo  {journal} {Science}\ }\textbf {\bibinfo {volume}
  {365}},\ \bibinfo {pages} {775} (\bibinfo {year} {2019})}\BibitemShut
  {NoStop}%
\bibitem [{\citenamefont {Semeghini}\ \emph {et~al.}(2021)\citenamefont
  {Semeghini}, \citenamefont {Levine}, \citenamefont {Keesling}, \citenamefont
  {Ebadi}, \citenamefont {Wang}, \citenamefont {Bluvstein}, \citenamefont
  {Verresen}, \citenamefont {Pichler}, \citenamefont {Kalinowski},
  \citenamefont {Samajdar}, \citenamefont {Omran}, \citenamefont {Sachdev},
  \citenamefont {Vishwanath}, \citenamefont {Greiner}, \citenamefont
  {Vuletić},\ and\ \citenamefont {Lukin}}]{Semeghini2021}%
  \BibitemOpen
  \bibfield  {author} {\bibinfo {author} {\bibfnamefont {G.}~\bibnamefont
  {Semeghini}}, \bibinfo {author} {\bibfnamefont {H.}~\bibnamefont {Levine}},
  \bibinfo {author} {\bibfnamefont {A.}~\bibnamefont {Keesling}}, \bibinfo
  {author} {\bibfnamefont {S.}~\bibnamefont {Ebadi}}, \bibinfo {author}
  {\bibfnamefont {T.~T.}\ \bibnamefont {Wang}}, \bibinfo {author}
  {\bibfnamefont {D.}~\bibnamefont {Bluvstein}}, \bibinfo {author}
  {\bibfnamefont {R.}~\bibnamefont {Verresen}}, \bibinfo {author}
  {\bibfnamefont {H.}~\bibnamefont {Pichler}}, \bibinfo {author} {\bibfnamefont
  {M.}~\bibnamefont {Kalinowski}}, \bibinfo {author} {\bibfnamefont
  {R.}~\bibnamefont {Samajdar}}, \bibinfo {author} {\bibfnamefont
  {A.}~\bibnamefont {Omran}}, \bibinfo {author} {\bibfnamefont
  {S.}~\bibnamefont {Sachdev}}, \bibinfo {author} {\bibfnamefont
  {A.}~\bibnamefont {Vishwanath}}, \bibinfo {author} {\bibfnamefont
  {M.}~\bibnamefont {Greiner}}, \bibinfo {author} {\bibfnamefont
  {V.}~\bibnamefont {Vuletić}},\ and\ \bibinfo {author} {\bibfnamefont
  {M.~D.}\ \bibnamefont {Lukin}},\ }\bibfield  {title} {\bibinfo {title}
  {Probing topological spin liquids on a programmable quantum simulator},\
  }\href {https://doi.org/10.1126/science.abi8794} {\bibfield  {journal}
  {\bibinfo  {journal} {Science}\ }\textbf {\bibinfo {volume} {374}},\ \bibinfo
  {pages} {1242–1247} (\bibinfo {year} {2021})}\BibitemShut {NoStop}%
\bibitem [{\citenamefont {Ho}\ \emph {et~al.}(2019)\citenamefont {Ho},
  \citenamefont {Choi}, \citenamefont {Pichler},\ and\ \citenamefont
  {Lukin}}]{Wen2019}%
  \BibitemOpen
  \bibfield  {author} {\bibinfo {author} {\bibfnamefont {W.~W.}\ \bibnamefont
  {Ho}}, \bibinfo {author} {\bibfnamefont {S.}~\bibnamefont {Choi}}, \bibinfo
  {author} {\bibfnamefont {H.}~\bibnamefont {Pichler}},\ and\ \bibinfo {author}
  {\bibfnamefont {M.~D.}\ \bibnamefont {Lukin}},\ }\bibfield  {title} {\bibinfo
  {title} {Periodic orbits, entanglement, and quantum many-body scars in
  constrained models: Matrix product state approach},\ }\href
  {https://doi.org/10.1103/PhysRevLett.122.040603} {\bibfield  {journal}
  {\bibinfo  {journal} {Phys. Rev. Lett.}\ }\textbf {\bibinfo {volume} {122}},\
  \bibinfo {pages} {040603} (\bibinfo {year} {2019})}\BibitemShut {NoStop}%
\bibitem [{\citenamefont {Serbyn}\ \emph {et~al.}(2021)\citenamefont {Serbyn},
  \citenamefont {Abanin},\ and\ \citenamefont {Papi{\'c}}}]{Serbyn2021}%
  \BibitemOpen
  \bibfield  {author} {\bibinfo {author} {\bibfnamefont {M.}~\bibnamefont
  {Serbyn}}, \bibinfo {author} {\bibfnamefont {D.~A.}\ \bibnamefont {Abanin}},\
  and\ \bibinfo {author} {\bibfnamefont {Z.}~\bibnamefont {Papi{\'c}}},\
  }\bibfield  {title} {\bibinfo {title} {Quantum many-body scars and weak
  breaking of ergodicity},\ }\href {https://doi.org/10.1038/s41567-021-01230-2}
  {\bibfield  {journal} {\bibinfo  {journal} {Nature Physics}\ }\textbf
  {\bibinfo {volume} {17}},\ \bibinfo {pages} {675} (\bibinfo {year}
  {2021})}\BibitemShut {NoStop}%
\bibitem [{\citenamefont {Turner}\ \emph {et~al.}(2018)\citenamefont {Turner},
  \citenamefont {Michailidis}, \citenamefont {Abanin}, \citenamefont {Serbyn},\
  and\ \citenamefont {Papi\ifmmode~\acute{c}\else \'{c}\fi{}}}]{TurnerB2018}%
  \BibitemOpen
  \bibfield  {author} {\bibinfo {author} {\bibfnamefont {C.~J.}\ \bibnamefont
  {Turner}}, \bibinfo {author} {\bibfnamefont {A.~A.}\ \bibnamefont
  {Michailidis}}, \bibinfo {author} {\bibfnamefont {D.~A.}\ \bibnamefont
  {Abanin}}, \bibinfo {author} {\bibfnamefont {M.}~\bibnamefont {Serbyn}},\
  and\ \bibinfo {author} {\bibfnamefont {Z.}~\bibnamefont
  {Papi\ifmmode~\acute{c}\else \'{c}\fi{}}},\ }\bibfield  {title} {\bibinfo
  {title} {Quantum scarred eigenstates in a rydberg atom chain: Entanglement,
  breakdown of thermalization, and stability to perturbations},\ }\href
  {https://doi.org/10.1103/PhysRevB.98.155134} {\bibfield  {journal} {\bibinfo
  {journal} {Phys. Rev. B}\ }\textbf {\bibinfo {volume} {98}},\ \bibinfo
  {pages} {155134} (\bibinfo {year} {2018})}\BibitemShut {NoStop}%
\bibitem [{\citenamefont {Lin}\ \emph {et~al.}(2020)\citenamefont {Lin},
  \citenamefont {Calvera},\ and\ \citenamefont {Hsieh}}]{Cheng-Ju2020}%
  \BibitemOpen
  \bibfield  {author} {\bibinfo {author} {\bibfnamefont {C.-J.}\ \bibnamefont
  {Lin}}, \bibinfo {author} {\bibfnamefont {V.}~\bibnamefont {Calvera}},\ and\
  \bibinfo {author} {\bibfnamefont {T.~H.}\ \bibnamefont {Hsieh}},\ }\bibfield
  {title} {\bibinfo {title} {Quantum many-body scar states in two-dimensional
  rydberg atom arrays},\ }\href {https://doi.org/10.1103/PhysRevB.101.220304}
  {\bibfield  {journal} {\bibinfo  {journal} {Phys. Rev. B}\ }\textbf {\bibinfo
  {volume} {101}},\ \bibinfo {pages} {220304} (\bibinfo {year}
  {2020})}\BibitemShut {NoStop}%
\bibitem [{\citenamefont {Notarnicola}\ \emph {et~al.}(2021)\citenamefont
  {Notarnicola}, \citenamefont {Elben}, \citenamefont {Lahaye}, \citenamefont
  {Browaeys}, \citenamefont {Montangero},\ and\ \citenamefont
  {Vermersch}}]{notarnicola2021randomized}%
  \BibitemOpen
  \bibfield  {author} {\bibinfo {author} {\bibfnamefont {S.}~\bibnamefont
  {Notarnicola}}, \bibinfo {author} {\bibfnamefont {A.}~\bibnamefont {Elben}},
  \bibinfo {author} {\bibfnamefont {T.}~\bibnamefont {Lahaye}}, \bibinfo
  {author} {\bibfnamefont {A.}~\bibnamefont {Browaeys}}, \bibinfo {author}
  {\bibfnamefont {S.}~\bibnamefont {Montangero}},\ and\ \bibinfo {author}
  {\bibfnamefont {B.}~\bibnamefont {Vermersch}},\ }\href@noop {} {\bibinfo
  {title} {A randomized measurement toolbox for rydberg quantum technologies}}
  (\bibinfo {year} {2021}),\ \Eprint {https://arxiv.org/abs/2112.11046}
  {arXiv:2112.11046 [quant-ph]} \BibitemShut {NoStop}%
\bibitem [{\citenamefont {Bluvstein}\ \emph {et~al.}(2021)\citenamefont
  {Bluvstein}, \citenamefont {Levine}, \citenamefont {Semeghini}, \citenamefont
  {Wang}, \citenamefont {Ebadi}, \citenamefont {Kalinowski}, \citenamefont
  {Keesling}, \citenamefont {Maskara}, \citenamefont {Pichler}, \citenamefont
  {Greiner}, \citenamefont {Vuletic},\ and\ \citenamefont
  {Lukin}}]{bluvstein2021quantum}%
  \BibitemOpen
  \bibfield  {author} {\bibinfo {author} {\bibfnamefont {D.}~\bibnamefont
  {Bluvstein}}, \bibinfo {author} {\bibfnamefont {H.}~\bibnamefont {Levine}},
  \bibinfo {author} {\bibfnamefont {G.}~\bibnamefont {Semeghini}}, \bibinfo
  {author} {\bibfnamefont {T.~T.}\ \bibnamefont {Wang}}, \bibinfo {author}
  {\bibfnamefont {S.}~\bibnamefont {Ebadi}}, \bibinfo {author} {\bibfnamefont
  {M.}~\bibnamefont {Kalinowski}}, \bibinfo {author} {\bibfnamefont
  {A.}~\bibnamefont {Keesling}}, \bibinfo {author} {\bibfnamefont
  {N.}~\bibnamefont {Maskara}}, \bibinfo {author} {\bibfnamefont
  {H.}~\bibnamefont {Pichler}}, \bibinfo {author} {\bibfnamefont
  {M.}~\bibnamefont {Greiner}}, \bibinfo {author} {\bibfnamefont
  {V.}~\bibnamefont {Vuletic}},\ and\ \bibinfo {author} {\bibfnamefont {M.~D.}\
  \bibnamefont {Lukin}},\ }\href@noop {} {\bibinfo {title} {A quantum processor
  based on coherent transport of entangled atom arrays}} (\bibinfo {year}
  {2021}),\ \Eprint {https://arxiv.org/abs/2112.03923} {arXiv:2112.03923
  [quant-ph]} \BibitemShut {NoStop}%
\bibitem [{\citenamefont {Eisert}\ \emph {et~al.}(2010)\citenamefont {Eisert},
  \citenamefont {Cramer},\ and\ \citenamefont {Plenio}}]{eisert2010colloquium}%
  \BibitemOpen
  \bibfield  {author} {\bibinfo {author} {\bibfnamefont {J.}~\bibnamefont
  {Eisert}}, \bibinfo {author} {\bibfnamefont {M.}~\bibnamefont {Cramer}},\
  and\ \bibinfo {author} {\bibfnamefont {M.~B.}\ \bibnamefont {Plenio}},\
  }\bibfield  {title} {\bibinfo {title} {Colloquium: Area laws for the
  entanglement entropy},\ }\href {https://doi.org/10.1103/RevModPhys.82.277}
  {\bibfield  {journal} {\bibinfo  {journal} {Rev. Mod. Phys.}\ }\textbf
  {\bibinfo {volume} {82}},\ \bibinfo {pages} {277} (\bibinfo {year}
  {2010})}\BibitemShut {NoStop}%
\bibitem [{\citenamefont {Hastings}(2007)}]{hastings2007}%
  \BibitemOpen
  \bibfield  {author} {\bibinfo {author} {\bibfnamefont {M.~B.}\ \bibnamefont
  {Hastings}},\ }\bibfield  {title} {\bibinfo {title} {An area law for
  one-dimensional quantum systems},\ }\href
  {https://doi.org/10.1088/1742-5468/2007/08/p08024} {\bibfield  {journal}
  {\bibinfo  {journal} {Journal of Statistical Mechanics: Theory and
  Experiment}\ }\textbf {\bibinfo {volume} {2007}},\ \bibinfo {pages}
  {P08024–P08024} (\bibinfo {year} {2007})}\BibitemShut {NoStop}%
\bibitem [{\citenamefont {Schollwöck}(2011)}]{schollwock2011mps}%
  \BibitemOpen
  \bibfield  {author} {\bibinfo {author} {\bibfnamefont {U.}~\bibnamefont
  {Schollwöck}},\ }\bibfield  {title} {\bibinfo {title} {The density-matrix
  renormalization group in the age of matrix product states},\ }\href
  {https://doi.org/10.1016/j.aop.2010.09.012} {\bibfield  {journal} {\bibinfo
  {journal} {Annals of Physics}\ }\textbf {\bibinfo {volume} {326}},\ \bibinfo
  {pages} {96–192} (\bibinfo {year} {2011})}\BibitemShut {NoStop}%
\bibitem [{\citenamefont {Garnerone}\ \emph {et~al.}(2010)\citenamefont
  {Garnerone}, \citenamefont {de~Oliveira},\ and\ \citenamefont
  {Zanardi}}]{garnerone2010tipicality}%
  \BibitemOpen
  \bibfield  {author} {\bibinfo {author} {\bibfnamefont {S.}~\bibnamefont
  {Garnerone}}, \bibinfo {author} {\bibfnamefont {T.~R.}\ \bibnamefont
  {de~Oliveira}},\ and\ \bibinfo {author} {\bibfnamefont {P.}~\bibnamefont
  {Zanardi}},\ }\bibfield  {title} {\bibinfo {title} {Typicality in random
  matrix product states},\ }\href {https://doi.org/10.1103/PhysRevA.81.032336}
  {\bibfield  {journal} {\bibinfo  {journal} {Phys. Rev. A}\ }\textbf {\bibinfo
  {volume} {81}},\ \bibinfo {pages} {032336} (\bibinfo {year}
  {2010})}\BibitemShut {NoStop}%
\bibitem [{\citenamefont {Ruggiero}\ \emph {et~al.}(2016)\citenamefont
  {Ruggiero}, \citenamefont {Alba},\ and\ \citenamefont
  {Calabrese}}]{ruggiero2016randomsinglet}%
  \BibitemOpen
  \bibfield  {author} {\bibinfo {author} {\bibfnamefont {P.}~\bibnamefont
  {Ruggiero}}, \bibinfo {author} {\bibfnamefont {V.}~\bibnamefont {Alba}},\
  and\ \bibinfo {author} {\bibfnamefont {P.}~\bibnamefont {Calabrese}},\
  }\bibfield  {title} {\bibinfo {title} {Entanglement negativity in random spin
  chains},\ }\href {https://doi.org/10.1103/PhysRevB.94.035152} {\bibfield
  {journal} {\bibinfo  {journal} {Phys. Rev. B}\ }\textbf {\bibinfo {volume}
  {94}},\ \bibinfo {pages} {035152} (\bibinfo {year} {2016})}\BibitemShut
  {NoStop}%
\bibitem [{\citenamefont {Haferkamp}\ \emph {et~al.}(2021)\citenamefont
  {Haferkamp}, \citenamefont {Bertoni}, \citenamefont {Roth},\ and\
  \citenamefont {Eisert}}]{eisert2021randomMPS}%
  \BibitemOpen
  \bibfield  {author} {\bibinfo {author} {\bibfnamefont {J.}~\bibnamefont
  {Haferkamp}}, \bibinfo {author} {\bibfnamefont {C.}~\bibnamefont {Bertoni}},
  \bibinfo {author} {\bibfnamefont {I.}~\bibnamefont {Roth}},\ and\ \bibinfo
  {author} {\bibfnamefont {J.}~\bibnamefont {Eisert}},\ }\bibfield  {title}
  {\bibinfo {title} {Emergent statistical mechanics from properties of
  disordered random matrix product states},\ }\href
  {https://doi.org/10.1103/PRXQuantum.2.040308} {\bibfield  {journal} {\bibinfo
   {journal} {PRX Quantum}\ }\textbf {\bibinfo {volume} {2}},\ \bibinfo {pages}
  {040308} (\bibinfo {year} {2021})}\BibitemShut {NoStop}%
\bibitem [{\citenamefont {Murciano}\ \emph {et~al.}(2021)\citenamefont
  {Murciano}, \citenamefont {Alba},\ and\ \citenamefont
  {Calabrese}}]{murciano2021quench}%
  \BibitemOpen
  \bibfield  {author} {\bibinfo {author} {\bibfnamefont {S.}~\bibnamefont
  {Murciano}}, \bibinfo {author} {\bibfnamefont {V.}~\bibnamefont {Alba}},\
  and\ \bibinfo {author} {\bibfnamefont {P.}~\bibnamefont {Calabrese}},\
  }\href@noop {} {\bibinfo {title} {Quench dynamics of r\'enyi negativities and
  the quasiparticle picture}} (\bibinfo {year} {2021}),\ \Eprint
  {https://arxiv.org/abs/2110.14589} {arXiv:2110.14589 [cond-mat.stat-mech]}
  \BibitemShut {NoStop}%
\bibitem [{\citenamefont {Eisler}\ and\ \citenamefont
  {Zimbor{\'{a}}s}(2015)}]{Eisler_2015}%
  \BibitemOpen
  \bibfield  {author} {\bibinfo {author} {\bibfnamefont {V.}~\bibnamefont
  {Eisler}}\ and\ \bibinfo {author} {\bibfnamefont {Z.}~\bibnamefont
  {Zimbor{\'{a}}s}},\ }\bibfield  {title} {\bibinfo {title} {On the partial
  transpose of fermionic gaussian states},\ }\href
  {https://doi.org/10.1088/1367-2630/17/5/053048} {\bibfield  {journal}
  {\bibinfo  {journal} {New Journal of Physics}\ }\textbf {\bibinfo {volume}
  {17}},\ \bibinfo {pages} {053048} (\bibinfo {year} {2015})}\BibitemShut
  {NoStop}%
\bibitem [{\citenamefont {Eisler}\ and\ \citenamefont
  {Zimbor\'as}(2016)}]{eislerzimboras2016}%
  \BibitemOpen
  \bibfield  {author} {\bibinfo {author} {\bibfnamefont {V.}~\bibnamefont
  {Eisler}}\ and\ \bibinfo {author} {\bibfnamefont {Z.}~\bibnamefont
  {Zimbor\'as}},\ }\bibfield  {title} {\bibinfo {title} {Entanglement
  negativity in two-dimensional free lattice models},\ }\href
  {https://doi.org/10.1103/PhysRevB.93.115148} {\bibfield  {journal} {\bibinfo
  {journal} {Phys. Rev. B}\ }\textbf {\bibinfo {volume} {93}},\ \bibinfo
  {pages} {115148} (\bibinfo {year} {2016})}\BibitemShut {NoStop}%
\bibitem [{\citenamefont {Valiant}(2001)}]{Valiant2001}%
  \BibitemOpen
  \bibfield  {author} {\bibinfo {author} {\bibfnamefont {L.~G.}\ \bibnamefont
  {Valiant}},\ }\bibfield  {title} {\bibinfo {title} {Quantum computers that
  can be simulated classically in polynomial time},\ }in\ \href
  {https://doi.org/10.1145/380752.380785} {\emph {\bibinfo {booktitle}
  {Proceedings of the Thirty-Third Annual ACM Symposium on Theory of
  Computing}}}\ (\bibinfo  {publisher} {Association for Computing Machinery},\
  \bibinfo {address} {New York, NY, USA},\ \bibinfo {year} {2001})\BibitemShut
  {NoStop}%
\bibitem [{\citenamefont {Jozsa}\ and\ \citenamefont
  {Miyake}(2008{\natexlab{a}})}]{JM08}%
  \BibitemOpen
  \bibfield  {author} {\bibinfo {author} {\bibfnamefont {R.}~\bibnamefont
  {Jozsa}}\ and\ \bibinfo {author} {\bibfnamefont {A.}~\bibnamefont {Miyake}},\
  }\bibfield  {title} {\bibinfo {title} {Matchgates and classical simulation of
  quantum circuits},\ }\href@noop {} {\bibfield  {journal} {\bibinfo  {journal}
  {Proc. R. Soc. A}\ }\textbf {\bibinfo {volume} {464}},\ \bibinfo {pages}
  {3089} (\bibinfo {year} {2008}{\natexlab{a}})}\BibitemShut {NoStop}%
\bibitem [{\citenamefont {Terhal}\ and\ \citenamefont
  {DiVincenzo}(2002)}]{TD02}%
  \BibitemOpen
  \bibfield  {author} {\bibinfo {author} {\bibfnamefont {B.~M.}\ \bibnamefont
  {Terhal}}\ and\ \bibinfo {author} {\bibfnamefont {D.~P.}\ \bibnamefont
  {DiVincenzo}},\ }\bibfield  {title} {\bibinfo {title} {Classical simulation
  of noninteracting-fermion quantum circuits},\ }\href
  {https://doi.org/10.1103/PhysRevA.65.032325} {\bibfield  {journal} {\bibinfo
  {journal} {Phys. Rev. A}\ }\textbf {\bibinfo {volume} {65}},\ \bibinfo
  {pages} {032325} (\bibinfo {year} {2002})}\BibitemShut {NoStop}%
\bibitem [{\citenamefont {Peschel}(2003)}]{Peschel_2003}%
  \BibitemOpen
  \bibfield  {author} {\bibinfo {author} {\bibfnamefont {I.}~\bibnamefont
  {Peschel}},\ }\bibfield  {title} {\bibinfo {title} {Calculation of reduced
  density matrices from correlation functions},\ }\href
  {https://doi.org/10.1088/0305-4470/36/14/101} {\bibfield  {journal} {\bibinfo
   {journal} {J. Phys. A}\ }\textbf {\bibinfo {volume} {36}},\ \bibinfo {pages}
  {L205–L208} (\bibinfo {year} {2003})}\BibitemShut {NoStop}%
\bibitem [{\citenamefont {Peschel}\ and\ \citenamefont
  {Eisler}(2009)}]{Peschel_2009}%
  \BibitemOpen
  \bibfield  {author} {\bibinfo {author} {\bibfnamefont {I.}~\bibnamefont
  {Peschel}}\ and\ \bibinfo {author} {\bibfnamefont {V.}~\bibnamefont
  {Eisler}},\ }\bibfield  {title} {\bibinfo {title} {Reduced density matrices
  and entanglement entropy in free lattice models},\ }\href
  {https://doi.org/10.1088/1751-8113/42/50/504003} {\bibfield  {journal}
  {\bibinfo  {journal} {J. Phys. A}\ }\textbf {\bibinfo {volume} {42}},\
  \bibinfo {pages} {504003} (\bibinfo {year} {2009})}\BibitemShut {NoStop}%
\bibitem [{\citenamefont {Bravyi}\ and\ \citenamefont {Kitaev}(2002)}]{BK02}%
  \BibitemOpen
  \bibfield  {author} {\bibinfo {author} {\bibfnamefont {S.~B.}\ \bibnamefont
  {Bravyi}}\ and\ \bibinfo {author} {\bibfnamefont {A.~Y.}\ \bibnamefont
  {Kitaev}},\ }\bibfield  {title} {\bibinfo {title} {Fermionic quantum
  computation},\ }\href
  {https://doi.org/https://doi.org/10.1006/aphy.2002.6254} {\bibfield
  {journal} {\bibinfo  {journal} {Annals of Physics}\ }\textbf {\bibinfo
  {volume} {298}},\ \bibinfo {pages} {210} (\bibinfo {year}
  {2002})}\BibitemShut {NoStop}%
\bibitem [{\citenamefont {Shapourian}\ \emph {et~al.}(2017)\citenamefont
  {Shapourian}, \citenamefont {Shiozaki},\ and\ \citenamefont
  {Ryu}}]{shapourian2017}%
  \BibitemOpen
  \bibfield  {author} {\bibinfo {author} {\bibfnamefont {H.}~\bibnamefont
  {Shapourian}}, \bibinfo {author} {\bibfnamefont {K.}~\bibnamefont
  {Shiozaki}},\ and\ \bibinfo {author} {\bibfnamefont {S.}~\bibnamefont
  {Ryu}},\ }\bibfield  {title} {\bibinfo {title} {Partial time-reversal
  transformation and entanglement negativity in fermionic systems},\ }\href
  {https://doi.org/10.1103/PhysRevB.95.165101} {\bibfield  {journal} {\bibinfo
  {journal} {Phys. Rev. B}\ }\textbf {\bibinfo {volume} {95}},\ \bibinfo
  {pages} {165101} (\bibinfo {year} {2017})}\BibitemShut {NoStop}%
\bibitem [{\citenamefont {Murciano}\ \emph {et~al.}(2022)\citenamefont
  {Murciano}, \citenamefont {Vitale}, \citenamefont {Dalmonte},\ and\
  \citenamefont {Calabrese}}]{murciano2022negativity}%
  \BibitemOpen
  \bibfield  {author} {\bibinfo {author} {\bibfnamefont {S.}~\bibnamefont
  {Murciano}}, \bibinfo {author} {\bibfnamefont {V.}~\bibnamefont {Vitale}},
  \bibinfo {author} {\bibfnamefont {M.}~\bibnamefont {Dalmonte}},\ and\
  \bibinfo {author} {\bibfnamefont {P.}~\bibnamefont {Calabrese}},\ }\bibfield
  {title} {\bibinfo {title} {Negativity hamiltonian: An operator
  characterization of mixed-state entanglement},\ }\href
  {https://doi.org/10.1103/PhysRevLett.128.140502} {\bibfield  {journal}
  {\bibinfo  {journal} {Phys. Rev. Lett.}\ }\textbf {\bibinfo {volume} {128}},\
  \bibinfo {pages} {140502} (\bibinfo {year} {2022})}\BibitemShut {NoStop}%
\bibitem [{Note2()}]{Note2}%
  \BibitemOpen
  \bibinfo {note} {In all other cases, one can bring the systems in this order
  by applying the corresponding fermionic SWAP gates, defined by $\mathinner
  {|{ab}\rangle }\DOTSB \mapstochar \rightarrow (-1)^{ab}\mathinner
  {|{ba}\rangle }$, to the state $\mathinner {|{\Phi }\rangle }$.}\BibitemShut
  {Stop}%
\bibitem [{\citenamefont {Jozsa}\ and\ \citenamefont
  {Miyake}(2008{\natexlab{b}})}]{josza2008mg}%
  \BibitemOpen
  \bibfield  {author} {\bibinfo {author} {\bibfnamefont {R.}~\bibnamefont
  {Jozsa}}\ and\ \bibinfo {author} {\bibfnamefont {A.}~\bibnamefont {Miyake}},\
  }\bibfield  {title} {\bibinfo {title} {Matchgates and classical simulation of
  quantum circuits},\ }\href {https://doi.org/10.1098/rspa.2008.0189}
  {\bibfield  {journal} {\bibinfo  {journal} {Proceedings of the Royal Society
  A: Mathematical, Physical and Engineering Sciences}\ }\textbf {\bibinfo
  {volume} {464}},\ \bibinfo {pages} {3089–3106} (\bibinfo {year}
  {2008}{\natexlab{b}})}\BibitemShut {NoStop}%
\bibitem [{\citenamefont {Helsen}\ \emph {et~al.}(2022)\citenamefont {Helsen},
  \citenamefont {Nezami}, \citenamefont {Reagor},\ and\ \citenamefont
  {Walter}}]{HNRW22}%
  \BibitemOpen
  \bibfield  {author} {\bibinfo {author} {\bibfnamefont {J.}~\bibnamefont
  {Helsen}}, \bibinfo {author} {\bibfnamefont {S.}~\bibnamefont {Nezami}},
  \bibinfo {author} {\bibfnamefont {M.}~\bibnamefont {Reagor}},\ and\ \bibinfo
  {author} {\bibfnamefont {M.}~\bibnamefont {Walter}},\ }\bibfield  {title}
  {\bibinfo {title} {Matchgate benchmarking: {S}calable benchmarking of a
  continuous family of many-qubit gates},\ }\href
  {https://doi.org/10.22331/q-2022-02-21-657} {\bibfield  {journal} {\bibinfo
  {journal} {{Quantum}}\ }\textbf {\bibinfo {volume} {6}},\ \bibinfo {pages}
  {657} (\bibinfo {year} {2022})}\BibitemShut {NoStop}%
\bibitem [{\citenamefont {Hebenstreit}\ \emph {et~al.}(2019)\citenamefont
  {Hebenstreit}, \citenamefont {Jozsa}, \citenamefont {Kraus}, \citenamefont
  {Strelchuk},\ and\ \citenamefont {Yoganathan}}]{hebenstreit2019magic}%
  \BibitemOpen
  \bibfield  {author} {\bibinfo {author} {\bibfnamefont {M.}~\bibnamefont
  {Hebenstreit}}, \bibinfo {author} {\bibfnamefont {R.}~\bibnamefont {Jozsa}},
  \bibinfo {author} {\bibfnamefont {B.}~\bibnamefont {Kraus}}, \bibinfo
  {author} {\bibfnamefont {S.}~\bibnamefont {Strelchuk}},\ and\ \bibinfo
  {author} {\bibfnamefont {M.}~\bibnamefont {Yoganathan}},\ }\bibfield  {title}
  {\bibinfo {title} {All pure fermionic non-gaussian states are magic states
  for matchgate computations},\ }\href
  {https://doi.org/10.1103/PhysRevLett.123.080503} {\bibfield  {journal}
  {\bibinfo  {journal} {Phys. Rev. Lett.}\ }\textbf {\bibinfo {volume} {123}},\
  \bibinfo {pages} {080503} (\bibinfo {year} {2019})}\BibitemShut {NoStop}%
\bibitem [{\citenamefont {Hebenstreit}\ \emph {et~al.}(2020)\citenamefont
  {Hebenstreit}, \citenamefont {Jozsa}, \citenamefont {Kraus},\ and\
  \citenamefont {Strelchuk}}]{hebenstreit2020mg}%
  \BibitemOpen
  \bibfield  {author} {\bibinfo {author} {\bibfnamefont {M.}~\bibnamefont
  {Hebenstreit}}, \bibinfo {author} {\bibfnamefont {R.}~\bibnamefont {Jozsa}},
  \bibinfo {author} {\bibfnamefont {B.}~\bibnamefont {Kraus}},\ and\ \bibinfo
  {author} {\bibfnamefont {S.}~\bibnamefont {Strelchuk}},\ }\bibfield  {title}
  {\bibinfo {title} {Computational power of matchgates with supplementary
  resources},\ }\href {https://doi.org/10.1103/PhysRevA.102.052604} {\bibfield
  {journal} {\bibinfo  {journal} {Phys. Rev. A}\ }\textbf {\bibinfo {volume}
  {102}},\ \bibinfo {pages} {052604} (\bibinfo {year} {2020})}\BibitemShut
  {NoStop}%
\bibitem [{\citenamefont {Huang}\ \emph {et~al.}(2020)\citenamefont {Huang},
  \citenamefont {Kueng},\ and\ \citenamefont {Preskill}}]{huang2020shadows}%
  \BibitemOpen
  \bibfield  {author} {\bibinfo {author} {\bibfnamefont {H.-Y.}\ \bibnamefont
  {Huang}}, \bibinfo {author} {\bibfnamefont {R.}~\bibnamefont {Kueng}},\ and\
  \bibinfo {author} {\bibfnamefont {J.}~\bibnamefont {Preskill}},\ }\bibfield
  {title} {\bibinfo {title} {Predicting many properties of a quantum system
  from very few measurements},\ }\href
  {https://doi.org/10.1038/s41567-020-0932-7} {\bibfield  {journal} {\bibinfo
  {journal} {Nature Physics}\ }\textbf {\bibinfo {volume} {16}},\ \bibinfo
  {pages} {1050–1057} (\bibinfo {year} {2020})}\BibitemShut {NoStop}%
\bibitem [{\citenamefont {Rath}\ \emph {et~al.}(2021)\citenamefont {Rath},
  \citenamefont {Branciard}, \citenamefont {Minguzzi},\ and\ \citenamefont
  {Vermersch}}]{rath2021quantum}%
  \BibitemOpen
  \bibfield  {author} {\bibinfo {author} {\bibfnamefont {A.}~\bibnamefont
  {Rath}}, \bibinfo {author} {\bibfnamefont {C.}~\bibnamefont {Branciard}},
  \bibinfo {author} {\bibfnamefont {A.}~\bibnamefont {Minguzzi}},\ and\
  \bibinfo {author} {\bibfnamefont {B.}~\bibnamefont {Vermersch}},\ }\bibfield
  {title} {\bibinfo {title} {Quantum fisher information from randomized
  measurements},\ }\href {https://doi.org/10.1103/PhysRevLett.127.260501}
  {\bibfield  {journal} {\bibinfo  {journal} {Phys. Rev. Lett.}\ }\textbf
  {\bibinfo {volume} {127}},\ \bibinfo {pages} {260501} (\bibinfo {year}
  {2021})}\BibitemShut {NoStop}%
\bibitem [{\citenamefont {Haah}\ \emph {et~al.}(2017)\citenamefont {Haah},
  \citenamefont {Harrow}, \citenamefont {Ji}, \citenamefont {Wu},\ and\
  \citenamefont {Yu}}]{haah2017sample}%
  \BibitemOpen
  \bibfield  {author} {\bibinfo {author} {\bibfnamefont {J.}~\bibnamefont
  {Haah}}, \bibinfo {author} {\bibfnamefont {A.~W.}\ \bibnamefont {Harrow}},
  \bibinfo {author} {\bibfnamefont {Z.}~\bibnamefont {Ji}}, \bibinfo {author}
  {\bibfnamefont {X.}~\bibnamefont {Wu}},\ and\ \bibinfo {author}
  {\bibfnamefont {N.}~\bibnamefont {Yu}},\ }\bibfield  {title} {\bibinfo
  {title} {Sample-optimal tomography of quantum states},\ }\href
  {https://doi.org/10.1109/TIT.2017.2719044} {\bibfield  {journal} {\bibinfo
  {journal} {IEEE Transactions on Information Theory}\ }\textbf {\bibinfo
  {volume} {63}},\ \bibinfo {pages} {5628} (\bibinfo {year}
  {2017})}\BibitemShut {NoStop}%
\bibitem [{\citenamefont {Rath}\ \emph {et~al.}(2022)\citenamefont {Rath},
  \citenamefont {Vitale}, \citenamefont {Murciano}, \citenamefont {Votto},
  \citenamefont {Dubail}, \citenamefont {Kueng}, \citenamefont {Branciard},
  \citenamefont {Calabrese},\ and\ \citenamefont
  {Vermersch}}]{rath2022entanglement}%
  \BibitemOpen
  \bibfield  {author} {\bibinfo {author} {\bibfnamefont {A.}~\bibnamefont
  {Rath}}, \bibinfo {author} {\bibfnamefont {V.}~\bibnamefont {Vitale}},
  \bibinfo {author} {\bibfnamefont {S.}~\bibnamefont {Murciano}}, \bibinfo
  {author} {\bibfnamefont {M.}~\bibnamefont {Votto}}, \bibinfo {author}
  {\bibfnamefont {J.}~\bibnamefont {Dubail}}, \bibinfo {author} {\bibfnamefont
  {R.}~\bibnamefont {Kueng}}, \bibinfo {author} {\bibfnamefont
  {C.}~\bibnamefont {Branciard}}, \bibinfo {author} {\bibfnamefont
  {P.}~\bibnamefont {Calabrese}},\ and\ \bibinfo {author} {\bibfnamefont
  {B.}~\bibnamefont {Vermersch}},\ }\href@noop {} {\bibinfo {title}
  {Entanglement barrier and its symmetry resolution: theory and experiment}}
  (\bibinfo {year} {2022}),\ \Eprint {https://arxiv.org/abs/2209.04393}
  {arXiv:2209.04393} \BibitemShut {NoStop}%
\bibitem [{\citenamefont {Alves}\ and\ \citenamefont
  {Jaksch}(2004)}]{Alves2004}%
  \BibitemOpen
  \bibfield  {author} {\bibinfo {author} {\bibfnamefont {C.~M.}\ \bibnamefont
  {Alves}}\ and\ \bibinfo {author} {\bibfnamefont {D.}~\bibnamefont {Jaksch}},\
  }\bibfield  {title} {\bibinfo {title} {{Multipartite entanglement detection
  in bosons}},\ }\href {https://doi.org/10.1103/PhysRevLett.93.110501}
  {\bibfield  {journal} {\bibinfo  {journal} {Phys. Rev. Lett.}\ }\textbf
  {\bibinfo {volume} {93}},\ \bibinfo {pages} {110501} (\bibinfo {year}
  {2004})}\BibitemShut {NoStop}%
\bibitem [{\citenamefont {Daley}\ \emph {et~al.}(2012)\citenamefont {Daley},
  \citenamefont {Pichler}, \citenamefont {Schachenmayer},\ and\ \citenamefont
  {Zoller}}]{Daley2012}%
  \BibitemOpen
  \bibfield  {author} {\bibinfo {author} {\bibfnamefont {A.~J.}\ \bibnamefont
  {Daley}}, \bibinfo {author} {\bibfnamefont {H.}~\bibnamefont {Pichler}},
  \bibinfo {author} {\bibfnamefont {J.}~\bibnamefont {Schachenmayer}},\ and\
  \bibinfo {author} {\bibfnamefont {P.}~\bibnamefont {Zoller}},\ }\bibfield
  {title} {\bibinfo {title} {{Measuring entanglement growth in quench dynamics
  of bosons in an optical lattice}},\ }\href
  {https://doi.org/10.1103/PhysRevLett.109.020505} {\bibfield  {journal}
  {\bibinfo  {journal} {Phys. Rev. Lett.}\ }\textbf {\bibinfo {volume} {109}},\
  \bibinfo {pages} {020505} (\bibinfo {year} {2012})}\BibitemShut {NoStop}%
\bibitem [{\citenamefont {Islam}\ \emph {et~al.}(2015)\citenamefont {Islam},
  \citenamefont {Ma}, \citenamefont {Preiss}, \citenamefont {{Eric Tai}},
  \citenamefont {Lukin}, \citenamefont {Rispoli},\ and\ \citenamefont
  {Greiner}}]{Islam2015}%
  \BibitemOpen
  \bibfield  {author} {\bibinfo {author} {\bibfnamefont {R.}~\bibnamefont
  {Islam}}, \bibinfo {author} {\bibfnamefont {R.}~\bibnamefont {Ma}}, \bibinfo
  {author} {\bibfnamefont {P.~M.}\ \bibnamefont {Preiss}}, \bibinfo {author}
  {\bibfnamefont {M.}~\bibnamefont {{Eric Tai}}}, \bibinfo {author}
  {\bibfnamefont {A.}~\bibnamefont {Lukin}}, \bibinfo {author} {\bibfnamefont
  {M.}~\bibnamefont {Rispoli}},\ and\ \bibinfo {author} {\bibfnamefont
  {M.}~\bibnamefont {Greiner}},\ }\bibfield  {title} {\bibinfo {title}
  {{Measuring entanglement entropy in a quantum many-body system}},\ }\href
  {https://doi.org/10.1038/nature15750} {\bibfield  {journal} {\bibinfo
  {journal} {Nature}\ }\textbf {\bibinfo {volume} {528}},\ \bibinfo {pages}
  {77} (\bibinfo {year} {2015})}\BibitemShut {NoStop}%
\bibitem [{\citenamefont {Rudolph}(2000)}]{rudolph2000separability}%
  \BibitemOpen
  \bibfield  {author} {\bibinfo {author} {\bibfnamefont {O.}~\bibnamefont
  {Rudolph}},\ }\bibfield  {title} {\bibinfo {title} {A separability criterion
  for density operators},\ }\href {https://doi.org/10.1088/0305-4470/33/21/308}
  {\bibfield  {journal} {\bibinfo  {journal} {Journal of Physics A:
  Mathematical and General}\ }\textbf {\bibinfo {volume} {33}},\ \bibinfo
  {pages} {3951} (\bibinfo {year} {2000})}\BibitemShut {NoStop}%
\bibitem [{\citenamefont {Chen}\ and\ \citenamefont {Wu}(2003)}]{ChenCCNR2003}%
  \BibitemOpen
  \bibfield  {author} {\bibinfo {author} {\bibfnamefont {K.}~\bibnamefont
  {Chen}}\ and\ \bibinfo {author} {\bibfnamefont {L.-A.}\ \bibnamefont {Wu}},\
  }\bibfield  {title} {\bibinfo {title} {A matrix realignment method for
  recognizing entanglement},\ }\href {https://arxiv.org/abs/quant-ph/0205017}
  {\bibfield  {journal} {\bibinfo  {journal} {Quantum Info. Comput.}\ }\textbf
  {\bibinfo {volume} {3}},\ \bibinfo {pages} {193–202} (\bibinfo {year}
  {2003})}\BibitemShut {NoStop}%
\bibitem [{\citenamefont {Rudolph}(2005)}]{rudolph2005further}%
  \BibitemOpen
  \bibfield  {author} {\bibinfo {author} {\bibfnamefont {O.}~\bibnamefont
  {Rudolph}},\ }\bibfield  {title} {\bibinfo {title} {Further results on the
  cross norm criterion for separability},\ }\href
  {https://link.springer.com/article/10.1007/s11128-005-5664-1} {\bibfield
  {journal} {\bibinfo  {journal} {Quantum Info. Proc.}\ }\textbf {\bibinfo
  {volume} {4}},\ \bibinfo {pages} {219} (\bibinfo {year} {2005})}\BibitemShut
  {NoStop}%
\bibitem [{Note3()}]{Note3}%
  \BibitemOpen
  \bibinfo {note} {Note that clearly whenever we consider expectation values of
  operators $O_n=O_{AB}\otimes 1_C$ acting non-trivially only in subsystems
  $AB$ of the $n$ copies, then ${\protect \mathbb E}[o_n]=Tr(O_{AB}{\protect
  \mathbb E}[\rho _{AB}]$ where ${\protect \mathbb E}[\rho _{AB}]$ is the
  expectation value of Haar-random {\protect \em induced} mixed
  states.}\BibitemShut {Stop}%
\bibitem [{\citenamefont {Lee}\ and\ \citenamefont
  {Forthofer}(2006)}]{Lee2006stat}%
  \BibitemOpen
  \bibfield  {author} {\bibinfo {author} {\bibfnamefont {E.}~\bibnamefont
  {Lee}}\ and\ \bibinfo {author} {\bibfnamefont {R.}~\bibnamefont
  {Forthofer}},\ }\href {https://doi.org/10.4135/9781412983341} {\emph
  {\bibinfo {title} {{Analyzing Complex Survey Data}}}}\ (\bibinfo  {publisher}
  {SAGE Publications, Inc.},\ \bibinfo {year} {2006})\BibitemShut {NoStop}%
\bibitem [{\citenamefont {Brydges}\ \emph {et~al.}(2019)\citenamefont
  {Brydges}, \citenamefont {Elben}, \citenamefont {Jurcevic}, \citenamefont
  {Vermersch}, \citenamefont {Maier}, \citenamefont {Lanyon}, \citenamefont
  {Zoller}, \citenamefont {Blatt},\ and\ \citenamefont {Roos}}]{brydges2019}%
  \BibitemOpen
  \bibfield  {author} {\bibinfo {author} {\bibfnamefont {T.}~\bibnamefont
  {Brydges}}, \bibinfo {author} {\bibfnamefont {A.}~\bibnamefont {Elben}},
  \bibinfo {author} {\bibfnamefont {P.}~\bibnamefont {Jurcevic}}, \bibinfo
  {author} {\bibfnamefont {B.}~\bibnamefont {Vermersch}}, \bibinfo {author}
  {\bibfnamefont {C.}~\bibnamefont {Maier}}, \bibinfo {author} {\bibfnamefont
  {B.~P.}\ \bibnamefont {Lanyon}}, \bibinfo {author} {\bibfnamefont
  {P.}~\bibnamefont {Zoller}}, \bibinfo {author} {\bibfnamefont
  {R.}~\bibnamefont {Blatt}},\ and\ \bibinfo {author} {\bibfnamefont {C.~F.}\
  \bibnamefont {Roos}},\ }\bibfield  {title} {\bibinfo {title} {Probing rényi
  entanglement entropy via randomized measurements},\ }\href
  {https://doi.org/10.1126/science.aau4963} {\bibfield  {journal} {\bibinfo
  {journal} {Science}\ }\textbf {\bibinfo {volume} {364}},\ \bibinfo {pages}
  {260–263} (\bibinfo {year} {2019})}\BibitemShut {NoStop}%
\bibitem [{\citenamefont {Satzinger}\ \emph {et~al.}(2021)\citenamefont
  {Satzinger}, \citenamefont {Liu}, \citenamefont {Smith}, \citenamefont
  {Knapp}, \citenamefont {Newman}, \citenamefont {Jones}, \citenamefont {Chen},
  \citenamefont {Quintana}, \citenamefont {Mi}, \citenamefont {Dunsworth},
  \citenamefont {Gidney}, \citenamefont {Aleiner}, \citenamefont {Arute},
  \citenamefont {Arya}, \citenamefont {Atalaya}, \citenamefont {Babbush},
  \citenamefont {Bardin}, \citenamefont {Barends}, \citenamefont {Basso},
  \citenamefont {Bengtsson}, \citenamefont {Bilmes}, \citenamefont {Broughton},
  \citenamefont {Buckley}, \citenamefont {Buell}, \citenamefont {Burkett},
  \citenamefont {Bushnell}, \citenamefont {Chiaro}, \citenamefont {Collins},
  \citenamefont {Courtney}, \citenamefont {Demura}, \citenamefont {Derk},
  \citenamefont {Eppens}, \citenamefont {Erickson}, \citenamefont {Faoro},
  \citenamefont {Farhi}, \citenamefont {Fowler}, \citenamefont {Foxen},
  \citenamefont {Giustina}, \citenamefont {Greene}, \citenamefont {Gross},
  \citenamefont {Harrigan}, \citenamefont {Harrington}, \citenamefont {Hilton},
  \citenamefont {Hong}, \citenamefont {Huang}, \citenamefont {Huggins},
  \citenamefont {Ioffe}, \citenamefont {Isakov}, \citenamefont {Jeffrey},
  \citenamefont {Jiang}, \citenamefont {Kafri}, \citenamefont {Kechedzhi},
  \citenamefont {Khattar}, \citenamefont {Kim}, \citenamefont {Klimov},
  \citenamefont {Korotkov}, \citenamefont {Kostritsa}, \citenamefont
  {Landhuis}, \citenamefont {Laptev}, \citenamefont {Locharla}, \citenamefont
  {Lucero}, \citenamefont {Martin}, \citenamefont {McClean}, \citenamefont
  {McEwen}, \citenamefont {Miao}, \citenamefont {Mohseni}, \citenamefont
  {Montazeri}, \citenamefont {Mruczkiewicz}, \citenamefont {Mutus},
  \citenamefont {Naaman}, \citenamefont {Neeley}, \citenamefont {Neill},
  \citenamefont {Niu}, \citenamefont {O’Brien}, \citenamefont {Opremcak},
  \citenamefont {Pató}, \citenamefont {Petukhov}, \citenamefont {Rubin},
  \citenamefont {Sank}, \citenamefont {Shvarts}, \citenamefont {Strain},
  \citenamefont {Szalay}, \citenamefont {Villalonga}, \citenamefont {White},
  \citenamefont {Yao}, \citenamefont {Yeh}, \citenamefont {Yoo}, \citenamefont
  {Zalcman}, \citenamefont {Neven}, \citenamefont {Boixo}, \citenamefont
  {Megrant}, \citenamefont {Chen}, \citenamefont {Kelly}, \citenamefont
  {Smelyanskiy}, \citenamefont {Kitaev}, \citenamefont {Knap}, \citenamefont
  {Pollmann},\ and\ \citenamefont {Roushan}}]{satzinger2021toric}%
  \BibitemOpen
  \bibfield  {author} {\bibinfo {author} {\bibfnamefont {K.~J.}\ \bibnamefont
  {Satzinger}}, \bibinfo {author} {\bibfnamefont {Y.-J.}\ \bibnamefont {Liu}},
  \bibinfo {author} {\bibfnamefont {A.}~\bibnamefont {Smith}}, \bibinfo
  {author} {\bibfnamefont {C.}~\bibnamefont {Knapp}}, \bibinfo {author}
  {\bibfnamefont {M.}~\bibnamefont {Newman}}, \bibinfo {author} {\bibfnamefont
  {C.}~\bibnamefont {Jones}}, \bibinfo {author} {\bibfnamefont
  {Z.}~\bibnamefont {Chen}}, \bibinfo {author} {\bibfnamefont {C.}~\bibnamefont
  {Quintana}}, \bibinfo {author} {\bibfnamefont {X.}~\bibnamefont {Mi}},
  \bibinfo {author} {\bibfnamefont {A.}~\bibnamefont {Dunsworth}}, \bibinfo
  {author} {\bibfnamefont {C.}~\bibnamefont {Gidney}}, \bibinfo {author}
  {\bibfnamefont {I.}~\bibnamefont {Aleiner}}, \bibinfo {author} {\bibfnamefont
  {F.}~\bibnamefont {Arute}}, \bibinfo {author} {\bibfnamefont
  {K.}~\bibnamefont {Arya}}, \bibinfo {author} {\bibfnamefont {J.}~\bibnamefont
  {Atalaya}}, \bibinfo {author} {\bibfnamefont {R.}~\bibnamefont {Babbush}},
  \bibinfo {author} {\bibfnamefont {J.~C.}\ \bibnamefont {Bardin}}, \bibinfo
  {author} {\bibfnamefont {R.}~\bibnamefont {Barends}}, \bibinfo {author}
  {\bibfnamefont {J.}~\bibnamefont {Basso}}, \bibinfo {author} {\bibfnamefont
  {A.}~\bibnamefont {Bengtsson}}, \bibinfo {author} {\bibfnamefont
  {A.}~\bibnamefont {Bilmes}}, \bibinfo {author} {\bibfnamefont
  {M.}~\bibnamefont {Broughton}}, \bibinfo {author} {\bibfnamefont {B.~B.}\
  \bibnamefont {Buckley}}, \bibinfo {author} {\bibfnamefont {D.~A.}\
  \bibnamefont {Buell}}, \bibinfo {author} {\bibfnamefont {B.}~\bibnamefont
  {Burkett}}, \bibinfo {author} {\bibfnamefont {N.}~\bibnamefont {Bushnell}},
  \bibinfo {author} {\bibfnamefont {B.}~\bibnamefont {Chiaro}}, \bibinfo
  {author} {\bibfnamefont {R.}~\bibnamefont {Collins}}, \bibinfo {author}
  {\bibfnamefont {W.}~\bibnamefont {Courtney}}, \bibinfo {author}
  {\bibfnamefont {S.}~\bibnamefont {Demura}}, \bibinfo {author} {\bibfnamefont
  {A.~R.}\ \bibnamefont {Derk}}, \bibinfo {author} {\bibfnamefont
  {D.}~\bibnamefont {Eppens}}, \bibinfo {author} {\bibfnamefont
  {C.}~\bibnamefont {Erickson}}, \bibinfo {author} {\bibfnamefont
  {L.}~\bibnamefont {Faoro}}, \bibinfo {author} {\bibfnamefont
  {E.}~\bibnamefont {Farhi}}, \bibinfo {author} {\bibfnamefont {A.~G.}\
  \bibnamefont {Fowler}}, \bibinfo {author} {\bibfnamefont {B.}~\bibnamefont
  {Foxen}}, \bibinfo {author} {\bibfnamefont {M.}~\bibnamefont {Giustina}},
  \bibinfo {author} {\bibfnamefont {A.}~\bibnamefont {Greene}}, \bibinfo
  {author} {\bibfnamefont {J.~A.}\ \bibnamefont {Gross}}, \bibinfo {author}
  {\bibfnamefont {M.~P.}\ \bibnamefont {Harrigan}}, \bibinfo {author}
  {\bibfnamefont {S.~D.}\ \bibnamefont {Harrington}}, \bibinfo {author}
  {\bibfnamefont {J.}~\bibnamefont {Hilton}}, \bibinfo {author} {\bibfnamefont
  {S.}~\bibnamefont {Hong}}, \bibinfo {author} {\bibfnamefont {T.}~\bibnamefont
  {Huang}}, \bibinfo {author} {\bibfnamefont {W.~J.}\ \bibnamefont {Huggins}},
  \bibinfo {author} {\bibfnamefont {L.~B.}\ \bibnamefont {Ioffe}}, \bibinfo
  {author} {\bibfnamefont {S.~V.}\ \bibnamefont {Isakov}}, \bibinfo {author}
  {\bibfnamefont {E.}~\bibnamefont {Jeffrey}}, \bibinfo {author} {\bibfnamefont
  {Z.}~\bibnamefont {Jiang}}, \bibinfo {author} {\bibfnamefont
  {D.}~\bibnamefont {Kafri}}, \bibinfo {author} {\bibfnamefont
  {K.}~\bibnamefont {Kechedzhi}}, \bibinfo {author} {\bibfnamefont
  {T.}~\bibnamefont {Khattar}}, \bibinfo {author} {\bibfnamefont
  {S.}~\bibnamefont {Kim}}, \bibinfo {author} {\bibfnamefont {P.~V.}\
  \bibnamefont {Klimov}}, \bibinfo {author} {\bibfnamefont {A.~N.}\
  \bibnamefont {Korotkov}}, \bibinfo {author} {\bibfnamefont {F.}~\bibnamefont
  {Kostritsa}}, \bibinfo {author} {\bibfnamefont {D.}~\bibnamefont {Landhuis}},
  \bibinfo {author} {\bibfnamefont {P.}~\bibnamefont {Laptev}}, \bibinfo
  {author} {\bibfnamefont {A.}~\bibnamefont {Locharla}}, \bibinfo {author}
  {\bibfnamefont {E.}~\bibnamefont {Lucero}}, \bibinfo {author} {\bibfnamefont
  {O.}~\bibnamefont {Martin}}, \bibinfo {author} {\bibfnamefont {J.~R.}\
  \bibnamefont {McClean}}, \bibinfo {author} {\bibfnamefont {M.}~\bibnamefont
  {McEwen}}, \bibinfo {author} {\bibfnamefont {K.~C.}\ \bibnamefont {Miao}},
  \bibinfo {author} {\bibfnamefont {M.}~\bibnamefont {Mohseni}}, \bibinfo
  {author} {\bibfnamefont {S.}~\bibnamefont {Montazeri}}, \bibinfo {author}
  {\bibfnamefont {W.}~\bibnamefont {Mruczkiewicz}}, \bibinfo {author}
  {\bibfnamefont {J.}~\bibnamefont {Mutus}}, \bibinfo {author} {\bibfnamefont
  {O.}~\bibnamefont {Naaman}}, \bibinfo {author} {\bibfnamefont
  {M.}~\bibnamefont {Neeley}}, \bibinfo {author} {\bibfnamefont
  {C.}~\bibnamefont {Neill}}, \bibinfo {author} {\bibfnamefont {M.~Y.}\
  \bibnamefont {Niu}}, \bibinfo {author} {\bibfnamefont {T.~E.}\ \bibnamefont
  {O’Brien}}, \bibinfo {author} {\bibfnamefont {A.}~\bibnamefont {Opremcak}},
  \bibinfo {author} {\bibfnamefont {B.}~\bibnamefont {Pató}}, \bibinfo
  {author} {\bibfnamefont {A.}~\bibnamefont {Petukhov}}, \bibinfo {author}
  {\bibfnamefont {N.~C.}\ \bibnamefont {Rubin}}, \bibinfo {author}
  {\bibfnamefont {D.}~\bibnamefont {Sank}}, \bibinfo {author} {\bibfnamefont
  {V.}~\bibnamefont {Shvarts}}, \bibinfo {author} {\bibfnamefont
  {D.}~\bibnamefont {Strain}}, \bibinfo {author} {\bibfnamefont
  {M.}~\bibnamefont {Szalay}}, \bibinfo {author} {\bibfnamefont
  {B.}~\bibnamefont {Villalonga}}, \bibinfo {author} {\bibfnamefont {T.~C.}\
  \bibnamefont {White}}, \bibinfo {author} {\bibfnamefont {Z.}~\bibnamefont
  {Yao}}, \bibinfo {author} {\bibfnamefont {P.}~\bibnamefont {Yeh}}, \bibinfo
  {author} {\bibfnamefont {J.}~\bibnamefont {Yoo}}, \bibinfo {author}
  {\bibfnamefont {A.}~\bibnamefont {Zalcman}}, \bibinfo {author} {\bibfnamefont
  {H.}~\bibnamefont {Neven}}, \bibinfo {author} {\bibfnamefont
  {S.}~\bibnamefont {Boixo}}, \bibinfo {author} {\bibfnamefont
  {A.}~\bibnamefont {Megrant}}, \bibinfo {author} {\bibfnamefont
  {Y.}~\bibnamefont {Chen}}, \bibinfo {author} {\bibfnamefont {J.}~\bibnamefont
  {Kelly}}, \bibinfo {author} {\bibfnamefont {V.}~\bibnamefont {Smelyanskiy}},
  \bibinfo {author} {\bibfnamefont {A.}~\bibnamefont {Kitaev}}, \bibinfo
  {author} {\bibfnamefont {M.}~\bibnamefont {Knap}}, \bibinfo {author}
  {\bibfnamefont {F.}~\bibnamefont {Pollmann}},\ and\ \bibinfo {author}
  {\bibfnamefont {P.}~\bibnamefont {Roushan}},\ }\bibfield  {title} {\bibinfo
  {title} {Realizing topologically ordered states on a quantum processor},\
  }\href {https://doi.org/10.1126/science.abi8378} {\bibfield  {journal}
  {\bibinfo  {journal} {Science}\ }\textbf {\bibinfo {volume} {374}},\ \bibinfo
  {pages} {1237–1241} (\bibinfo {year} {2021})}\BibitemShut {NoStop}%
\end{thebibliography}%

\begin{appendix}

\section{Effect of finite sampling and concentration effects}\label{sec:cov}
In this section, we address the role of statistical fluctuations when estimating $\tilde r_2$ from a finite number $K$ of random states $\ket{\psi_k}$ for $k=1,\ldots,K$. To this end, we will first recall how the expectation values of PT moments $\mathbb{E}[p_n]$ (appearing in, e.g., Eq.~\eqref{eq:SWAPbis}) and more generally quantities of the form  $\mathbb{E}[p_n\,p_m]$ can be expressed in terms of permutation operators.

\subsection{Basic properties of Haar-random states}\label{app:weingarten}
Here we review some basic and well-known results from random matrix theory that will be used below.

We are interested in expectation values of the form $o_n={\rm tr}(O_n\rho_u^{\otimes n})$, where $O_n$ is an operator acting on $n$ copies of a state $\rho_u=u\ket{0^N}\bra{0^N}u^{\dagger}$. In fact, what we will need is the mean value ${\mathbb E}[o_n]$ over Haar-random unitaries $u\sim{\rm U}(2^N)$, i.e~\footnote{Note that clearly whenever we consider expectation values of operators $O_n=O_{AB}\otimes1_C$ acting non-trivially only in subsystems $AB$ of the $n$ copies, then ${\mathbb E}[o_n]=Tr(O_{AB}{\mathbb E}[\rho_{AB}]$ where ${\mathbb E}[\rho_{AB}]$ is the expectation value of Haar-random {\em induced} mixed states.}
\begin{equation}
{\mathbb E}[o_n]={\rm tr}(O_n{\mathbb E}[\rho_u^{\otimes n}])\,.
\end{equation}
More explicitly, for any random variable $f(u)$ defined over elements of the unitary group $u\in{\rm U}(2^N)$, the expression ${\mathbb E}[f(u)]$ stands for the mean value over Haar-random unitaries $u\sim{\rm U}(2^N)$ sampled uniformly from the unique invariant (Haar) measure ${\rm d}u$. In other words, we define 
\begin{equation}
    {\mathbb E}[f(u)]:=\int{\rm d}u\,f(u)\,.
\end{equation}

For any operator $O_n$ acting on $n$ copies of the Hilbert space of $N$ qubits, consider the map
\begin{equation}
\Phi_n(O_n):={\mathbb E}[u^{\otimes n}O_n(u^\dagger)^{\otimes n}]=\int{\rm d}u\,u^{\otimes n}O_n(u^\dagger)^{\otimes n}\,.
\end{equation}
It is well-known~\cite{Collins2006RMT,Elben2019correlations} that the previous map can be expressed via the so-called twirling formula as
\begin{equation}
\Phi_n(O_n)=\sum_{\sigma, \tau \in S_n}\text{Wg}(\sigma\tau^{-1})\,\text{tr}(\Pi(\tau)O_n)\,\Pi(\sigma)\,.
\end{equation}
In the previous expression, $\text{Wg}(\cdot)$ are the Weingarten functions defined (see, e.g., Ref.~\cite{Collins2006RMT}) for any permutation $\sigma\in S_n$, where $S_n$ is the symmetric group over $n$ elements; and $\Pi(\sigma)$ denote the  permutations operators (acting on $n$ copies), i.e.,
\begin{equation}
    \Pi(\tau)\ket{\phi_1}\otimes\cdots\otimes\ket{\phi_n}=\ket{\phi_{\tau(1)}}\otimes\cdots\otimes\ket{\phi_{\tau(n)}}\,.
\end{equation}

For any $O_n$ being supported on the symmetric subspace, such as $O_n=(\ket{0^N}\bra{0^N})^{\otimes n}$, the facts that $\Pi(\tau)O_n=Z$ for all permutations $\tau$ and that $\sum_{ \tau \in S_n}\text{Wg}(\sigma\tau^{-1})$ is constant, implies that the Weingarten function decouples from the permutation operator $\Pi(\sigma)$. Using then that $\Phi_{n}$ is trace preserving, we obtain
\begin{equation}\label{eq:app-twirl}
    \Phi_{n}((\ket{0^N}\bra{0^N})^{\otimes n}) = \frac{\sum_{\sigma \in S_n}\Pi(\sigma)}{\sum_{\sigma \in S_n}\text{tr}[\Pi(\sigma)]}\,,
\end{equation}
which is nothing but the projector onto the symmetric subspace, a result that it is well-known. From the previous formula one can obtain the relevant equations used in the main text in the context of Haar random states. In particular, writing the PT moments in terms of multicopy observables, and using the formulas above,  Eq.~\eqref{eq:RMTexpression} was obtained.

Finally, note that it is possible to apply the previous equations to expressions of the form ${\mathbb E}[o_n o'_m]$, where $o'_m = \text{tr}[O'_m\rho_u^{\otimes m}]$. To this end, simply note that
\begin{multline}
    o_n o'_m = \text{tr}[O_n\rho_u^{\otimes n}]\text{tr}[O'_m\rho_u^{\otimes m}] =\\
    \text{tr}[O_n \otimes O'_m\rho_u^{\otimes (m+n)}]\,,
\end{multline}
and thus, using the linearity of the mean value in the previous equation and Eq.~\eqref{eq:app-twirl} with $n\to n+m$,
\begin{multline}\label{eq:cov}
    \mathbb{E}[o_n o_m]=\text{tr}[O_n \otimes O'_m\mathbb{E}[\rho_u^{\otimes (m+n)}]]\\ =\sum_{\sigma\in S_{(m+n)}}\frac{\text{tr}[O_n \otimes O'_m \Pi(\sigma)]}{\text{tr}[\Pi(\sigma)]}\,.
\end{multline}
The latter formula (together with Eq.~\eqref{eq:RMTexpression}) will be used in Sec.~\ref{app:var} to compute covariances of the form ${\rm Cov}[p_n,p_m]={\mathbb E}[p_n\,p_m]-{\mathbb E}[p_n]{\mathbb E}[p_m]$ of PT moments.

\subsection{Variance in the estimation of $\tilde r_2$ for a finite number of random states}\label{app:var}
We now address the role of statistical fluctuations when estimating $\tilde{r}_2$ from a finite number $K$ of random states $\ket{\psi}_{k=1,\dots,K}$. 

Here, we consider that we build an estimation  $\tilde{r}_2^{(e)}$ from empirical averages of PT moments $p_n^{(e)}$ over the $K$ random states. To this end, a central assumption is that the statistical fluctuations of $p_n^{(e)}$ around the mean values are sufficiently small. This can be explicitly check using the formulas of the previous Sec.~\ref{app:weingarten} for low-order PT moments. In particular, for the estimated values $p_n^{(e)}$ of PT moments of order $n=2,3,4$, the corresponding variances are exponentially smaller that their expectations squared. We show this for the particular case of $n=2$ in Fig.~\ref{fig:phase_diagram_r_2}. Using now a Taylor expansion around these mean values we have 
\begin{equation}
    \tilde r_2^{(e)} = \frac{p_2^{(e)}p_3^{(e)}}{p_4^{(e)}}
    \approx \tilde r_2\left(1
    + \sum_{n=2}^4\frac{a_n(p_n^{(e)}-{\mathbb E}[p_n]])}{{\mathbb E}[p_n]}
    \right)
\end{equation}
with $a_{2,3}=1$, $a_4=-1$.

Based on this approximation, we can express the variance of $\tilde{r}_2^{(e)}$ as 
\begin{eqnarray}
    \frac{\mathrm{Var}[(\tilde r_2^{(e)})^2]}{\tilde r_2^2} &\approx& 
        \sum_{n} \frac{\mathrm{Var}[p_n^{(e)}]}{{\mathbb E}[p_n]^2}
        +2\sum_{n<n'} 
        \frac{a_na_{n'}\mathrm{Cov}[p_n^{(e)},p_{n'}^{(e)}]}{{\mathbb E}[p_n]{\mathbb E}[p_{n'}]}
        \nonumber
        \\
        &=& 
        \sum_{n} \frac{\mathrm{Var}[p_n]}{K{\mathbb E}[p_n]^2}
        \nonumber \\
        &+&2\sum_{n<n'} 
        \frac{a_na_{n'}\mathrm{Cov}[p_n,p_{n'}]}{K{\mathbb E}[p_n]{\mathbb E}[p_{n'}]}.
        \label{eq:LVr2}
\end{eqnarray}
Note that this type of approximation to the variance is known as linearized variance in statistics~\cite{Lee2006stat}. To write the second equality, we have used the fact that the estimations $p_n^{(e)}$ are built from $K$ independently sampled random states. 
Using App. \ref{app:weingarten}, we can  write the analytical expressions for the variances ${\rm Var}[p_n]$, as well as covariances ${\rm Cov}[p_n,p_m]$ (see also Ref.~\cite{leone2021quantum} for $n=2$). This allows us to approximate $\mathrm{Var}[\tilde r_2^{(e)}]$ (given in Eq.~\eqref{eq:LVr2})  analytically. 

In Fig.~\ref{fig:phase_diagram_r_2}, apart from showing the variance of $p_2^{(e)}$, we also plot that of $\tilde r_2^{(e)}$ as a function of $N$, and for different points of the phase diagram. Here, we consider the extreme situation $K=1$ where those quantities are estimated from a single random state using $p_n^{(e)}=p_n$, and $
\tilde r_2^{(e)}=r_2$. For small systems $N$, we also calculate numerically the exact variance of $r_2$ (without Taylor approximation),  and  we obtain an excellent agreement with our linearized variance approximation. 
We observe that the relative variance ${\rm Var}[r_2]/\tilde r_2^2$  decays exponentially with $N$. This means that in the thermodynamic limit $N\to \infty$, the statistical fluctuations of $r_2$ become negligible even in the extreme case where a single state is considered.

\section{Measuring $\tilde r_2$ with classical shadows of randomized measurements}\label{app:measuring}
In this section we summarize the protocol to access $\tilde r_2$ in an experiment based on the randomized measurement toolbox~\cite{elben2022therandomized} (and references therein), and estimate numerically the statistical errors that are due to the finite number of measurements.

We consider a randomized measurement protocol of PT moments for a systems of $N$ qubits based on `classical shadows'~\cite{huang2020shadows,elben2020mixed}. This relies on single qubit random unitaries $u=\bigotimes u_i$ that are sampled independently from the Haar measure. When applying $N_u$ such transformation on a quantum state $\rho_{AB}$, and performing for each transformation $N_m$ projective measurements, one can build unbiased estimations of the PT moments $p_n$ (see Ref.~\cite{elben2020mixed} for the precise estimation formula and variance bounds on the statistical estimators).
In the situation where we are interested in average PT moments over Haar random states, in order to access $\tilde r_2$,  we will consider that the measurement sequence is performed  on $N_s$ different random states sampled from the Haar measure. The total number of projective measurements to obtain an estimation of $\tilde r_2$ is therefore $N_sN_uN_m$.

We now assess numerically the required number of measurements to extract $\tilde r_2$ with a small statistical error. We consider a system with $N=8$ qubits, which is sufficiently large to observe the three entangled phase of Haar random states. In Fig.~\ref{fig:staterr}, we show the estimated value of $\tilde{r}_2$ as a function of $N_uN_m$ for different $N_m=1,10,100$. The number of states is fixed to $N_s=64$, which is sufficient to obtain convergence of the average PT moments to the Haar expectation values with excellent accuracy.
We consider different partitions $N_A,N_B,N_C$ corresponding to the PPT phase [panels a),b)], the maximally entangled phase [panels c),d)], and the entanglement saturation phase [panels e),f)].
The errors are computed using the jackknife resampling method.
As shown in the figure, we can estimate the value of $\tilde r_2$ with good accuracy, and thus identify the three entanglement phases with a number of measurements $64N_uN_m\sim 10^{5}-10^6$ that is compatible with current experimental possibilities~\cite{brydges2019, satzinger2021toric}.

\begin{figure}
    \centering
    \includegraphics[width=0.99\linewidth]{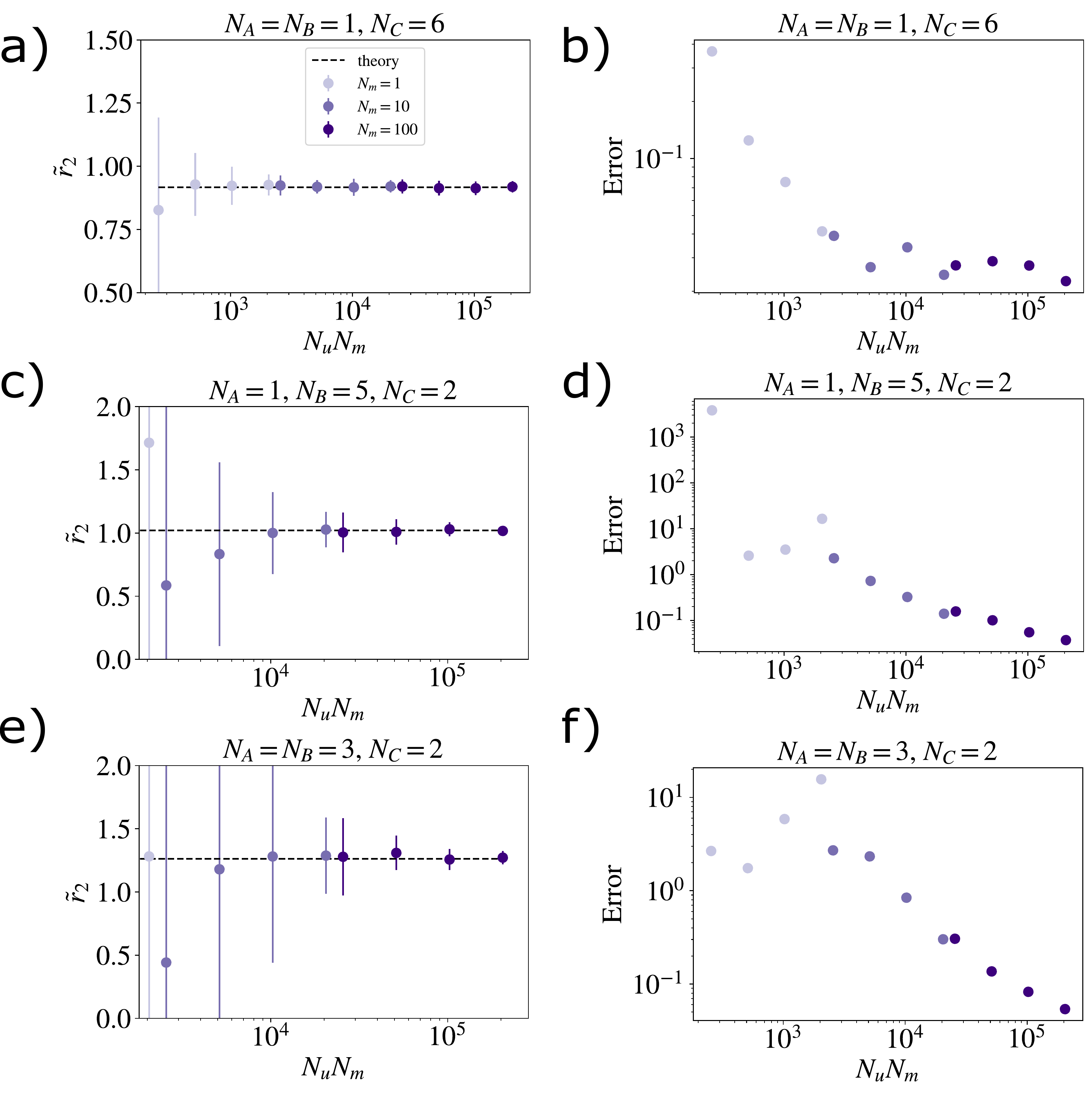}
    \caption{
    Estimates and their jackknife errors of $\tilde{r}_2$ for an ensemble of 64 $8$-qubit Haar random states via randomized measurements. The chosen partition sizes correspond to a), b) the PPT phase, c), d) the ME phase, and e), f) the ES phase.}
    \label{fig:staterr}
\end{figure}

\section{$p_3-$negativity for random states}\label{app:p3neg}

In this section we study the value of the $p_3$-negativity $\tilde{\mathcal{E}}_3 = \log_2(\mathbb{E}[p_2]^2/\mathbb{E}[p_3])/2$ of Haar random states.
In Fig.~\ref{fig:E3haar} a)-b), we represent $\tilde{\mathcal{E}}_3$ for $N_{AB}=256$, and $N_{AB}=10$ respectively.
In panels c) and d), we also compare $\tilde{\mathcal{E}}_3$ with the average value of $\mathcal{E}_3(\rho)$, and of the negativity $\mathcal{E}_3(\rho)$ for two `cuts' of the phase diagrams (see caption). 
For $N_{AB}=256$, the $p_3$-negativity reproduces without noticeable differences the phase diagram of the negativity shown in Ref.~\cite{shapourian2021entanglement}. In particular, we observe for $\tilde{\mathcal{E}}_3$ a linear dependence with $\min(N_A,N_B)$ in the maximally entangled phase, and a saturation in the entanglement saturation phase. These features are also visible at small system size $N_{AB}=10$. 

In Figs.~\ref{fig:E3haar} c)-d), we observe that $\tilde{\mathcal{E}}_3$ and $\mathbb{E}[\mathcal{E}_3(\rho)]$ seem to be always smaller than the averaged negativity. 
Interestingly, one can make a similar observation for single random states (i.e., not taking ensemble averages).
In Fig.~\ref{fig:E3haar}e), we represent the negativity $\mathcal{E}(\rho)$ as a function of $\mathcal{E}_3(\rho)$ for various Haar random states, which were obtained by taking different values of $2\le N_{AB}\le 10$, and $N_C\le N_{AB}+4$.
For each set of $(N_A,N_B,N_C)$, $50$ states were sampled. 
For all the random states that we numerically sampled, we observe that the $p_3$-negativity is always smaller than the negativity $\mathcal{E}(\rho)$. 
\begin{figure}[t]
    \centering
    \includegraphics[width=0.99\linewidth]{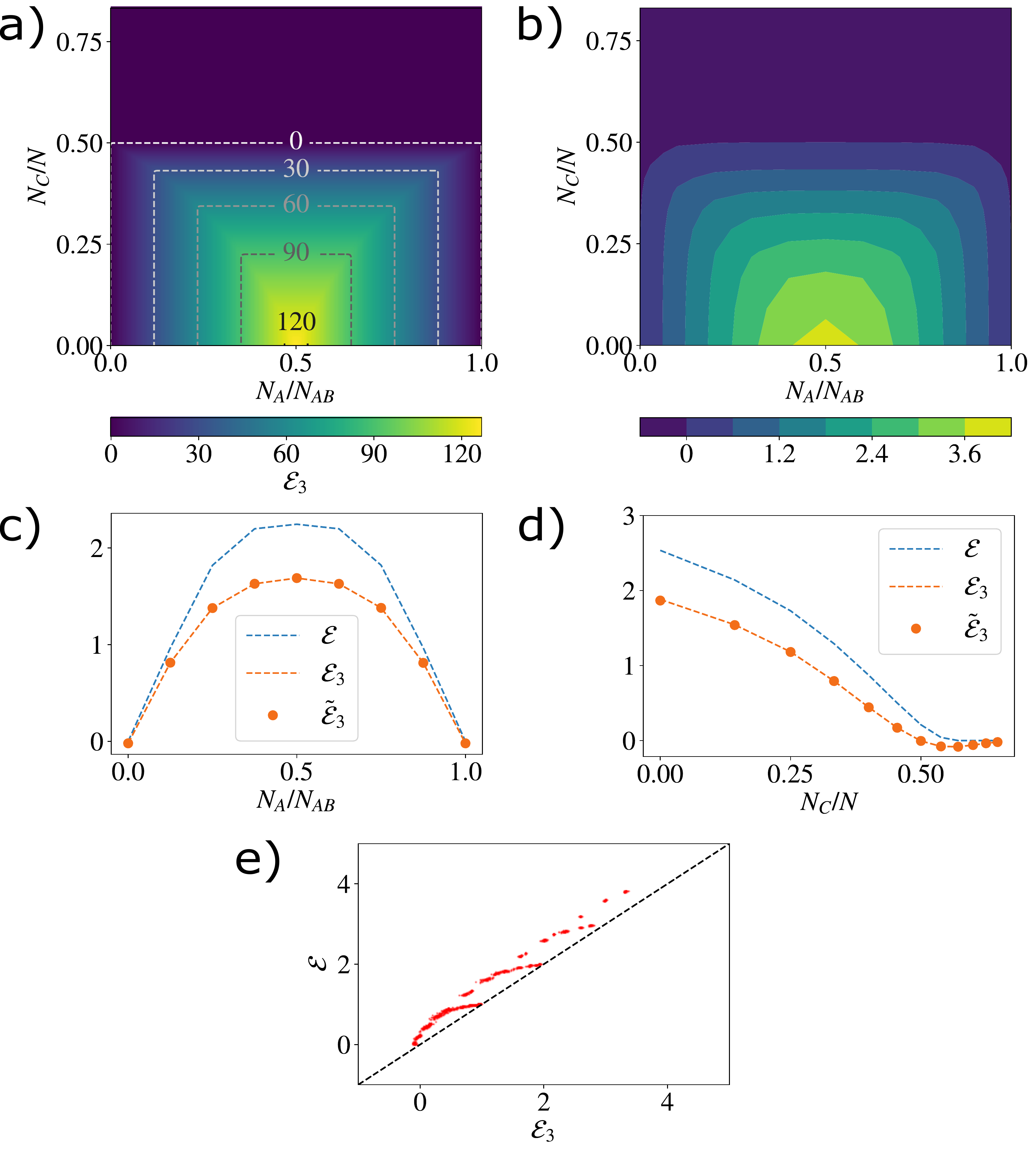}
    \caption{a), b) Phase diagram of Haar random states as probed by $\tilde{\mathcal{E}}_3$  for a) $N_{AB} = 256$ a), and b) $N_{AB}=10$. c), d) Comparison between the logarithmic negativity $\mathcal{E}(\rho)$ and the $p_3$-negativity $\mathcal{E}_3(\rho)$ for few-qubits Haar random states, respectively $N_{AB}=8$, $N_C=3$ c) and $N_{AB}=6$, $N_A=3$ d), averaged over $\sim 10^2$ samples. $\mathcal{E}_3(\rho)$ has been averaged both among individual states (dashed line) and by computing the logarithm of the average of the PT moments for the ensemble of states (points, see definition of $\tilde{\mathcal{E}_3}$ in the text).
    e)
    Comparison between the logarithmic negativity $\mathcal{E}(\rho)$ and the $p_3-$negativity $\mathcal{E}_3(\rho)\le \mathcal{E}(\rho)$ for single Haar random states. Red points are the values obtained numerically, while the black dashed line represents the function $\mathcal{E}=\mathcal{E}_3$.}
    \label{fig:E3haar}
\end{figure}

\section{Gaussian fermionic states}\label{app:freefermions}
Here, we give more details on the calculation of $\tilde r_2$ for Gaussian fermionic states studied in Sec.~\ref{sec:fermionic}. In particular, we explicitly show how the PT moments can be determined form the correlation matrix for fermionic Gaussian states  (see, e.g., Eq.~\ref{eq:FFPTmoments}).\\

As discussed in Sec.~\ref{sec:fermionic}, we consider here the case in which subsystems $A$, $B$ and $C$ are connected and $A$ and $B$ are adjacent. Our aim is to determine the PT moments of a fermionic Gaussian state $\rho_{AB}$ with a ($2N_{AB}\times 2N_{AB}$) correlation matrix $G'$. $G'$ is obtained from the correlation matrix of the whole system ($ABC)$, $G$, which is a  ($2N\times 2N$) matrix by deleting the rows and columns with indices that correspond to the modes in $C$ (Sec.~\ref{sec:fermionic}). For $\ket{\Psi}_{ABC}=U|0^N\rangle$, the correlation matrix $G$ is, as explained before, given by $G=R \cdot G_0 \cdot R^{T}$. Here, $R\in{\rm SO}(2N)$ is sampled uniformly random. 

To determine the PT moments of the fermionic Gaussian state $\rho_{AB}$ with correlation matrix $G'=\frac12\langle [c_i,c_j]\rangle_{\rho_{AB}}$  we proceed as follows. Let $W'$ be a ($2N_{AB}\times 2N_{AB}$) matrix such that
\begin{equation}\label{eq:GammaWrelation}
G'=\tanh{\frac{W'}{2}}\,.
\end{equation}
Then, we have ~\cite{Peschel_2003,Peschel_2009}
\begin{equation}
    \rho=\frac{1}{Z}\exp\left({\frac{1}{4}\sum_{kl}W'_{kl}c_kc_l}\right),
\end{equation}
where $Z$ is a normalization factor. As we will see below, the partial transpose $\rho^\Gamma$ of the state $\rho$ can be expressed in terms of a matrix $G^+$ that is constructed from the correlation matrix $G'$ as follows:
\begin{equation}\label{eq:gammap}
    G^+=
    \begin{pmatrix}
    G'_{AA} &  {\rm i}\,G'_{AB} \\[3mm]
    {\rm i}\,G'_{BA} & -G'_{BB}
    \end{pmatrix}.
\end{equation}

In the previous equation, the ($2N_X\times 2N_Y$) matrices $G'_{XY}$ for $X,Y\in\{A,B\}$ are submatrices of the ($2N_{AB}\times 2N_{AB}$) correlation matrix $G'$ obtained by taking rows (columns) with indices that correspond to modes in $X$ ($Y$). 
More precisely, the partial transpose of the state $\rho_{AB}$ can be expressed as~\cite{Eisler_2015}
\begin{equation}\label{eq:standard_partial_transpose}
    \rho^{\Gamma}=\frac{1-i}{2}O_{+}+\frac{1+i}{2}O_{-}\,.
\end{equation}
where the operators $O_{\pm}$ can be written as
\begin{equation}\label{eq:FFPTtransform}   O_{+}=O_{-}^{\dagger}=\frac{1}{Z}\exp\left({\frac{1}{4}\sum_{kl}W^{+}_{kl}c_kc_l }\right)
\end{equation}
with $W^+$ related to the previously defined matrix $G^+$ via the analogue of Eq.~\eqref{eq:GammaWrelation}. In other words,
\begin{equation}\label{eq:GammaWrelation2}
G^+=\tanh{\frac{W^+}{2}}\,.
\end{equation}

It is clear that the PT density matrix given by Eq.~\eqref{eq:standard_partial_transpose} is not a Gaussian operator but rather the sum of two of them and one can write
\begin{equation}
\tr(\rho^{\Gamma})^k=\frac1{2^{k/2}}
\sum_{(\sigma_1,\ldots,\sigma_k)}
\exp\left({-{\rm i}\frac{\pi}{4}\sum_{i=1}^k\sigma_i}\right)
\tr\left(\prod_{j=1}^{k}O_{\sigma_j}\right)\,,
\end{equation}
where the leftmost sum is over all possible tuples $(\sigma_1,\ldots,\sigma_k)\in\{+,-\}^k$. Then the PT moments of order $2$, $3$ and $4$ read \begin{equation}\label{eq:FFPTmoments}
    \begin{aligned}
    p_2=\tr \left( \rho^{\Gamma}\right)^2=&+\tr \left( O_{+}O_{-} \right),\\
    p_3=\tr \left( \rho^{\Gamma} \right)^3=&-\frac{1}{2}\tr\left( O_{+}^3\right)+\frac{3}{2}\tr\left( O_{+}^2O_{-}\right),\\
    p_4=\tr \left( \rho^{\Gamma} \right)^4=&-\frac{1}{2}\tr\left( O_{+}^4\right)+\tr\left( O_{+}^2O_{-}^2\right)+\\
    &+\frac{1}{2}\tr\left( O_{+}O_{-}O_{+}O_{-}\right),\\
    \end{aligned}
\end{equation}
and it is possible to compute them efficiently. 

\end{appendix}

\end{document}